\documentclass[version=preprint,notanonymous]{iacrcc}

\license{CC-by}

\title[running  = {Drawing the $\LINE$},
      ]{Drawing the $\LINE$: Cryptographic Analysis and Security Improvements for the $\LINE$ E2EE Protocol}
\addauthor[%orcid    = {0000-0003-1010-8157},
           inst     = {1},
           email    = {benjamin.dowling@kcl.ac.uk},
           surname  = {Dowling}
          ]{Benjamin Dowling}

\addauthor[%orcid   = {0000-0001-7890-5430},
           inst    = {2},
           email   = {p.gope@sheffield.ac.uk},
           surname = {Gope},
          ]{Prosanta Gope}

\addauthor[%orcid   = {0000-0001-7890-5430},
           inst    = {2},
           email   = {munilyas1@sheffield.ac.uk},
           surname = {Nisa},
          ]{Mehr U Nisa}

\addauthor[%orcid    = {0000-0002-7688-504X},
           inst     = {1},
           email    = {bhagya.wimalasiri@kcl.ac.uk},
           surname  = {Wimalasiri}
          ]{Bhagya Wimalasiri}

\usepackage{bm}
 \usepackage[
 n, 
 advantage, 
 operators, 
 sets, 
 adversary, 
 landau, 
 probability, 
 notions, 
 logic, 
 ff, 
 mm, 
 primitives, 
 events, 
 complexity, 
 oracles, 
 asymptotics, 
 keys
 ]{cryptocode}
\usepackage{xcolor}
\usepackage{needspace}
 \usetikzlibrary{shapes,arrows, calc}
 \usetikzlibrary {patterns,patterns.meta, backgrounds}
 \usepackage{multicol}
 \usepackage{adjustbox}
 \usepackage{tcolorbox}
\usepackage{diagbox}
 \usepackage{makecell}
 \usepackage[shortlabels,inline]{enumitem}
 \usepackage{todonotes}
 \usepackage{subfig}
 \usepackage{amsthm}
 \usepackage[normalem]{ulem}
 \usepackage{algorithm}
 \usepackage[noend]{algpseudocode}
 \usepackage{framed}
 \usepackage{algorithm, setspace}
 \usepackage{tikz}
\usepackage{tikzpeople}
\usetikzlibrary{positioning, shapes.multipart}
\tcbuselibrary{breakable}
\tcbset{%any default parameters
	width=0.7\textwidth,
	halign=justify,
	center,
	breakable,
	colback=white    
}
\usepackage{wrapfig}
\usepackage{caption}
\newcommand{\valid}{\ensuremath{\mathbf{valid}}}

\newcommand{\asym}{\ensuremath{\mathsf{asym}}}
\newcommand{\initial}{\ensuremath{\mathsf{inital}}}
\newcommand{\LL}{\ensuremath{\mathsf{LL}}}
\newcommand{\EL}{\ensuremath{\mathsf{EL}}}
\newcommand{\EE}{\ensuremath{\mathsf{EE}}}
\newcommand{\sym}{\ensuremath{\mathsf{sym}}}
\newcommand{\rootkey}{\ensuremath{\mathsf{rk}}}
\newcommand{\asymindexsender}{\ensuremath{\mathsf{i_s}}}
\newcommand{\asymindexrcvr}{\ensuremath{\mathsf{i_r}}}
\newcommand{\symindexsender}{\ensuremath{\mathsf{j_s}}}
\newcommand{\symindexrcvr}{\ensuremath{\mathsf{j_r}}}

\newcommand{\chainkey}{\ensuremath{\mathsf{ck}}}
\newcommand{\msgkey}{\ensuremath{\mathsf{mk}}}
\newcommand{\LINEDR}{\ensuremath{\mathsf{LINEvDR}}}
\newcommand{\PKL}{\ensuremath{\mathtt{PKL}}}
\newcommand{\tou}{\ensuremath{\overset{\cup}{\rightarrow}}}

\newcommand{\initS}{\ensuremath{\mathsf{Init \mhyphen S}}}
\newcommand{\initC}{\ensuremath{\mathsf{Init \mhyphen C}}}
\newcommand{\sendS}{\ensuremath{\mathsf{Send \mhyphen S}}}
\newcommand{\sendC}{\ensuremath{\mathsf{Send \mhyphen C}}}
\newcommand{\rcvS}{\ensuremath{\mathsf{Rcv \mhyphen S}}}
\newcommand{\rcvC}{\ensuremath{\mathsf{Rcv \mhyphen C}}}
\newcommand{\stateS}{\ensuremath{\mathsf{St_S}}}
\newcommand{\stateC}{\ensuremath{\mathsf{St_C}}}
\newcommand{\KeyGenS}{\ensuremath{\mathsf{KeyGen \mhyphen S}}}
\newcommand{\ACCE}{\ensuremath{\mathsf{ACCE}}}

\newcommand{\stateACCE}{\ensuremath{\mathsf{\Pi_\ACCE}}}
\newcommand{\stateMSKE}{\ensuremath{\mathsf{\Pi_\MSKE}}}

\newcommand{\pkS}{\ensuremath{pk_S}}
\newcommand{\skS}{\ensuremath{sk_S}}

\newcommand{\msg}{\ensuremath{m}}
\newcommand{\ciphertxt}{\ensuremath{c}}

\newcommand{\win}{\ensuremath{\mathbf{win}}}
\newcommand{\ifsm}{\ensuremath{\mathbf{if}}}
\newcommand{\whilesm}{\ensuremath{\mathbf{while}}}

\newcommand{\rand}{\ensuremath{\mathsf{rand}}}

\newcommand{\PK}{\ensuremath{\mathbf{PK}}}

\newcommand{\AlgI}{\ensuremath{\mathsf{AlgI}}}
\newcommand{\AlgR}{\ensuremath{\mathsf{AlgR}}}

\newcommand{\keylbl}{\ensuremath{\mathtt{Key}}}
\newcommand{\ivlbl}{\ensuremath{\mathtt{IV}}}
\newcommand{\nonce}{\ensuremath{n}}
\newcommand{\salt}{\ensuremath{s}}
\newcommand{\AAD}{\ensuremath{AAD}}
\newcommand{\RID}{\ensuremath{RID}}
\newcommand{\SID}{\ensuremath{SID}}
\newcommand{\IV}{\ensuremath{IV}}
\newcommand{\vers}{\ensuremath{\mathtt{vers}}}
\newcommand{\ctype}{\ensuremath{\mathtt{ctype}}}
\newcommand{\fromfield}{\ensuremath{\mathtt{from}}}
\newcommand{\tofield}{\ensuremath{\mathtt{to}}}
\newcommand{\totypefield}{\ensuremath{\mathtt{toType}}}
\newcommand{\idfield}{\ensuremath{\mathtt{id}}}
\newcommand{\timefield}{\ensuremath{\mathtt{createdTime}}}
\newcommand{\hasfield}{\ensuremath{\mathtt{hascontent}}}
\newcommand{\contentfield}{\ensuremath{\mathtt{contentType}}}
\newcommand{\eversfield}{\ensuremath{\mathtt{e2eeVersion}}}
\newcommand{\sidfield}{\ensuremath{\mathtt{sessionId}}}
\newcommand{\chunksfield}{\ensuremath{\mathtt{chunks}}}
\newcommand{\delfield}{\ensuremath{\mathtt{deliveredTime}}}
\newcommand{\seqfield}{\ensuremath{\mathtt{seq}}}
\newcommand{\contentMDfield}{\ensuremath{\mathtt{contentMetadata}}}
\newcommand{\bottagfield}{\ensuremath{\mathtt{BOT\_TAG2}}}
\newcommand{\botoriginfield}{\ensuremath{\mathtt{BOT\_ORIGIN}}}
\newcommand{\botcheckfield}{\ensuremath{\mathtt{BOT\_CHECK}}}
\newcommand{\bottrackfield}{\ensuremath{\mathtt{BOT\_TRACK}}}
\newcommand{\textfield}{\ensuremath{\mathtt{text}}}

\newcommand{\KGen}{\ensuremath{\mathsf{KGen}}}
\newcommand{\Activate}{\ensuremath{\mathsf{Activate}}}
\newcommand{\Run}{\ensuremath{\mathsf{Run}}}

\newcommand{\KGEN}{\ensuremath{\mathsf{KGen}}}
\newcommand{\DS}{\ensuremath{\mathsf{DS}}}
\newcommand{\AEAD}{\ensuremath{\mathsf{AEAD}}}
\newcommand{\SIGN}{\ensuremath{\mathsf{Sign}}}

\newcommand{\ENC}{\ensuremath{\mathsf{Encrypt}}}
\newcommand{\DEC}{\ensuremath{\mathsf{Decrypt}}}

\newcommand{\KID}{\ensuremath{\mathsf{Kid}}}
\newcommand{\MSKE}{\ensuremath{\mathsf{MSKE}}}

\newcommand{\SHAtwo}{\ensuremath{\mathsf{SHA\mhyphen256}}}
\newcommand{\AESGCM}{\ensuremath{\mathsf{AES\mhyphen GCM}}}
\newcommand{\AESCBC}{\ensuremath{\mathsf{AES\mhyphen CBC}}}
\newcommand{\AESECB}{\ensuremath{\mathsf{AES\mhyphen ECB}}}
\newcommand{\LINEone}{\ensuremath{\mathsf{LINEv1}}}
\newcommand{\LINEtwo}{\ensuremath{\mathsf{LINEv2}}}
\newcommand{\CtoS}{\ensuremath{\mathsf{C2S}}}

\newcommand{\ClientHello}{\ensuremath{\mathtt{ClientHello}}}
\newcommand{\ServerHello}{\ensuremath{\mathtt{ServerHello}}}
\newcommand{\ServerCertificate}{\ensuremath{\mathtt{ServerCertificate}}}
\newcommand{\ServerKeyExchange}{\ensuremath{\mathtt{ServerKeyExchange}}}
\newcommand{\ServerCertificateVerify}{\ensuremath{\mathtt{ServerCertificateVerify}}}
\newcommand{\ClientKeyExchange}{\ensuremath{\mathtt{ClientKeyExchange}}}
\newcommand{\ClientFinished}{\ensuremath{\mathtt{ClientFinished}}}
\newcommand{\ServerFinished}{\ensuremath{\mathtt{ServerFinished}}}
\newcommand{\CH}{\ensuremath{\mathtt{CH}}}
\newcommand{\SH}{\ensuremath{\mathtt{SH}}}
\newcommand{\SC}{\ensuremath{\mathtt{SC}}}
\newcommand{\SKE}{\ensuremath{\mathtt{SKE}}}
\newcommand{\SCV}{\ensuremath{\mathtt{SCV}}}
\newcommand{\CKE}{\ensuremath{\mathtt{CKE}}}
\newcommand{\CF}{\ensuremath{\mathtt{CF}}}
\newcommand{\SF}{\ensuremath{\mathtt{SF}}}

\newcommand{\etal}{et~al.\xspace{}}

\mathchardef\mhyphen="2D

\newcommand{\styleSecurityNotion}[1]{\ensuremath{\mathsf{#1}}}

\newcommand{\styleConstants}[1]{\ensuremath{\mathtt{#1}}}

\renewcommand{\O}[1]{\ensuremath{\mathsf{#1}}}

\newcommand{\getsr}{\ensuremath{\overset{\$}{\leftarrow}}}
\newcommand{\tor}{\ensuremath{\overset{\$}{\rightarrow}}}

\newcommand{\bits}{\ensuremath{\{0,1\}}}
\newcommand{\rreplace}[1]{\ensuremath{\widetilde{#1}}}

\newcommand{\KeySpace}{\ensuremath{\mathcal{K}}}

\newcommand{\SaltSpace}{\ensuremath{\mathcal{S}}}

\newcommand{\groupgen}{\ensuremath{\mathcal{G}}}

\newcommand{\Group}{\ensuremath{\groupgen}}

\renewcommand{\adversary}{\ensuremath{\mathcal{A}}}
\newcommand{\bdversary}{\ensuremath{\mathcal{B}}}
\newcommand{\challenger}{\ensuremath{\mathcal{C}}}

\newcommand{\KeyGen}{\ensuremath{\mathsf{KeyGen}}}

\newcommand{\Hash}{\ensuremath{\mathsf{H}}}
\newcommand{\Enc}{\ensuremath{\mathsf{Enc}}}
\newcommand{\Dec}{\ensuremath{\mathsf{Dec}}}
\newcommand{\KDF}{\ensuremath{\mathsf{KDF}}}
\newcommand{\DDH}{\ensuremath{\mathsf{DDH}}}
\newcommand{\GDH}{\ensuremath{\mathsf{GDH}}}

\newcommand{\PRF}{\ensuremath{\mathsf{PRF}}}

\newcommand{\RO}{\ensuremath{\mathsf{RO}}}
\newcommand{\diffihell}{\ensuremath{\mathsf{DH}}}
\newcommand{\Adv}[2]{\ensuremath{\mathsf{Adv}^{#1}_{#2}}}
\newcommand{\Exp}[2]{\ensuremath{\mathsf{Exp}^{#1}_{#2}}}

\newcommand{\ctr}{\ensuremath{ctr}}

\newcommand{\freshpredicate}{\ensuremath{\mathsf{fresh}}}
\newcommand{\matchpredicate}{\ensuremath{\mathsf{match}}}

\newcommand{\numParties}{\ensuremath{n_P}}
\newcommand{\numSessions}{\ensuremath{n_S}}
\newcommand{\numMsgs}{\ensuremath{n_M}}
\newcommand{\numStages}{\ensuremath{n_D}}

\newcommand{\init}{\ensuremath{\styleConstants{init}}}
\newcommand{\resp}{\ensuremath{\styleConstants{resp}}}

\newcommand{\inprogress}{\ensuremath{\styleConstants{active}}}
\newcommand{\rejected}{\ensuremath{\styleConstants{reject}}}
\newcommand{\accepted}{\ensuremath{\styleConstants{accept}}}
\newcommand{\trueflag}{\ensuremath{\styleConstants{true}}}
\newcommand{\falseflag}{\ensuremath{\styleConstants{false}}}
\newcommand{\tested}{\ensuremath{\styleConstants{tested}}}

\newcommand{\revrand}{\ensuremath{\styleConstants{rev\_rand}}}
\newcommand{\revstate}{\ensuremath{\styleConstants{rev\_state}}}
\newcommand{\hand}{\ensuremath{\styleConstants{hand}}}
\newcommand{\send}{\ensuremath{\styleConstants{send}}}
\newcommand{\accept}{\ensuremath{\styleConstants{accept}}}
\newcommand{\receive}{\ensuremath{\styleConstants{receive}}}

\newcommand{\session}{\ensuremath{\pi}}

\newcommand{\testsess}{\ensuremath{\pi^u_{i}}}

\newcommand{\partnersess}{\ensuremath{\pi^v_j}}
\newcommand{\partsess}{\partnersess}

\newcommand{\status}{\ensuremath{\alpha}}
\newcommand{\pid}{\ensuremath{pid}}
\newcommand{\kid}{\ensuremath{kid}}
\newcommand{\id}{\ensuremath{id}}
\renewcommand{\state}{\ensuremath{\pi}}

\newcommand{\role}{\ensuremath{\rho}}
\newcommand{\peerid}{\pid}
\newcommand{\revsesskey}{\ensuremath{\mathtt{rev\_sesskey}}}
\newcommand{\revltk}{\ensuremath{\mathtt{rev\_ltk}}}

\newcommand{\Test}{\ensuremath{\mathsf{Test}}}

\newcommand{\RevState}{\ensuremath{\mathsf{RevState}}}
\newcommand{\RevRand}{\ensuremath{\mathsf{RevRand}}}
\newcommand{\RevLongTermKey}{\ensuremath{\mathsf{RevLongTermKey}}}
\newcommand{\RevSessKey}{\ensuremath{\mathsf{RevSessKey}}}

\newcommand{\Send}{\ensuremath{\mathsf{Send}}}

\newcommand{\coll}{\styleSecurityNotion{coll}}
\newcommand{\ddh}{\styleSecurityNotion{ddh}}

\newcommand{\aead}{\styleSecurityNotion{aead}}
\newcommand{\auth}{\styleSecurityNotion{auth}}
\newcommand{\conf}{\styleSecurityNotion{conf}}

\newcommand{\msind}{\styleSecurityNotion{ms\mhyphen ind}}
\renewcommand{\indcca}{\styleSecurityNotion{ind \mhyphen cca}}

\newcommand{\mehr}[1]{\textcolor{black}{#1}}
\newcommand{\bw}[1]{\textcolor{orange}{BW:#1}}

\newcommand{\pms}{\ensuremath{pms}}
\newcommand{\AD}{\ensuremath{\mathit{AD}}}
\newcommand{\LINE}{\ensuremath{\mathsf{LINE}}}

\newcommand{\ClientAction}[1]{
	\node[right] at (\InitX, \Y) {#1};
}
\newcommand{\ServerAction}[1]{
	\node[left] at (\RespX, \Y) {#1};
}
\newcommand{\SharedAction}[1]{
	\node at ($1/2*(\InitX, \Y)+1/2*(\RespX, \Y)$) {#1};
}
\newcommand{\AdversaryAction}[1]{
	\node at ($1/2*(\InitX, \Y)+1/2*(\RespX, \Y)$) {#1};
}

\newcommand{\ClientToServer}[3][->]{
	\NextLine[0.5]
	\draw[#1] (\ArrowLeft,\Y) -- node[above] {#2} node[below] {#3} (\ArrowRight,\Y) ;
	\NextLine[0.5]
}
\newcommand{\ServerToClient}[3][->]{
	\NextLine[0.5]
	\draw[#1] (\ArrowRight,\Y) -- node[above] {#2} node[below] {#3} (\ArrowLeft,\Y) ;
	\NextLine[0.5]
}
\newcommand{\ClientToAdversary}[3][->]{
	\NextLine[0.5]
	\draw[#1] (\ArrowLeft,\Y) -- node[above] {#2} node[below] {#3} (\ArrowCenter,\Y) ;
	\NextLine[0.5]
}
\newcommand{\ServerToAdversary}[3][->]{
	\NextLine[0.5]
	\draw[#1] (\ArrowRight,\Y) -- node[above] {#2} node[below] {#3} (\ArrowCenter,\Y) ;
	\NextLine[0.5]
}

\newcommand{\AdversaryToClient}[3][->]{
	\NextLine[0.5]
	\draw[#1] (\ArrowCenter,\Y) -- node[above] {#2} node[below] {#3} (\ArrowLeft,\Y) ;
	\NextLine[0.5]
}
\newcommand{\AdversaryToServer}[3][->]{
	\NextLine[0.5]
	\draw[#1] (\ArrowCenter,\Y) -- node[above] {#2} node[below] {#3} (\ArrowRight,\Y) ;
	\NextLine[0.5]
}

\newcommand{\NextLine}[1][1.0]{
	\pgfmathparse{\Y+#1}
	\edef\Y{\pgfmathresult}
}

\newcommand{\Separator}[2][1.0]{
	\draw[very thick,dotted,darkgray] (\InitX,\Y+0.5) node[above=-0.1cm,anchor=north west] {\bf #2} -- (\RespX,\Y+0.5);
}

\newcommand{\game}[1]{\par\vspace{1em}\noindent\textbf{Game #1}}
\renewcommand{\case}[1]{\par\vspace{1em}\noindent\textbf{Case #1}}
\newcommand{\gameref}[1]{\ensuremath{\mathbf{Game}~ \mathbf{#1}}}
\newcommand{\caseref}[1]{\ensuremath{\mathbf{Case}~ \mathbf{#1}}}

\newcommand{\gameabort}[1]{\ensuremath{abort_{#1}}\xspace}
\newcommand{\linkgame}[2]{\hyperref[#1]{G#2}}
\newcounter{Bdversary}

\begin{document}

\maketitle

% Provide the keywords *before* the abstract
% When keywords contain macros provide the text version as the optional argument
%\newcommand{\Dirac}{Dirac}
\keywords[]{LINE protocol, Secure messaging, Provable security analysis.}

% Provide the abstract of your paper

\begin{abstract}
$\LINE$ has emerged as one of the most popular communication platforms in many East Asian countries, including Thailand and Japan, with millions of active users. Therefore, it is essential to understand its security guarantees. In this work, we present the first provable security analysis of the $\LINE$ version two ($\LINEtwo$ messaging protocol, focusing on its cryptographic guarantees in a real-world setting. We capture the architecture and security of the $\LINE$ messaging protocol by modifying the Multi-Stage Key Exchange (MSKE) model, a framework for analysing cryptographic protocols under adversarial conditions. While $\LINEtwo$ achieves basic security properties such as key indistinguishability and message authentication, we highlight the lack of forward secrecy (FS) and post-compromise security (PCS). To address this, we introduce a stronger version of the $\LINE$ protocol, introducing FS and PCS to $\LINE$, analysing and benchmarking our results. 
%\keywords{LINE protocol  \and Secure messaging \and Formal analysis.}
\end{abstract}
%\keywords{Secure Messaging, End-to-End Encryption, Generic Privacy Model (GPM), $\LINE$ Messaging Protocol.}
\section{Introduction} \label{sec:intro}

In our fast-paced and increasingly digital world, the importance of messaging applications for communication cannot be overstated. Platforms like WhatsApp, Signal, Facebook Messenger, Telegram, and many others boast vast user bases, already having (combined) over two billion monthly active users \cite{MAU}. This growth reflects both their popularity and our increasing reliance on them for personal, professional, and social interactions. One such messaging application that has gained significant popularity in East Asian countries such as Thailand and Japan is $\LINE$, which remains relatively underexplored in the literature despite its widespread adoption. 
$\LINE$ has become deeply embedded in the digital ecosystems of Japan and Thailand, extending far beyond its original purpose as a simple communication platform. In Japan, it is used by over 78\% of the population for messaging, payments, news, and government services \cite{datareportal2024japan,lycorp2025mediaguide}. Similarly, in Thailand, $\LINE$ serves as a key platform for not only personal communication but also banking, e-commerce, and healthcare-related services, with more than 54~million active users nationwide \cite{lycorp2025thailand}. This broad adoption underscores $\LINE$'s  expansive reach as a multi-purpose digital infrastructure across East Asia. However, as $\LINE$'s applications broaden, so does its attack surface to potential online threats.

In fact, a growing body of academic research recognizes the security implications associated with the pervasive integration of messaging applications into the daily routines of millions of users. In response, an increasing number of studies have sought to formally analyse the security claims made by messaging applications such as WhatsApp \cite{whatsappSASM}, Signal \cite{signal-1}, Threema \cite{paterson2023three}, Telegram \cite{Telegram_attacks}, and Matrix \cite{10351027}. However, despite its widespread adoption, $\LINE$ is yet to undergo a cryptographic security analysis to substantiate its claimed guarantees. This work addresses this significant research gap by providing the first rigorous examination of the protocol's claimed security properties.

Provable security analysis of modern messaging protocols relies on security models that capture the properties deemed desirable for such constructions. Typically, these properties include: end-to-end encryption (\textbf{E2EE}), which ensures that only the intended communication partners (i.e., sender and receiver) can decrypt messages; forward secrecy (\textbf{FS}), which guarantees that past communications remain confidential even if long-term keys are compromised; and post-compromise security (\textbf{PCS}), which mitigates the impact of a compromise on any future sessions. More generally, \emph{key indistinguishability} prevents adversaries from distinguishing real cryptographic keys from random values, while \emph{message authentication} ensures message integrity and may verify the sender's identity. The absence of these essential properties introduces critical vulnerabilities that can compromise the security of data for millions of users worldwide.  
For instance, flaws in Telegram's message sequencing have exposed users to potential security breaches \cite{Telegram_attacks}, while Threema's closed-source server-side code limits the extent of independent security verification \cite{paterson2023three}. Furthermore, despite offering E2EE, the Matrix protocol was later discovered to have several issues, including out-of-band verification attacks, semi-trusted to fully trusted impersonation attacks, and attacks against message confidentiality \cite{10351027}. These instances highlight that rigorous security analysis of vulnerabilities is not merely a theoretical exercise; it is the essential preventative measure that would ensure the robustness of secure messaging protocols, ultimately safeguarding the privacy and security of users worldwide.

That being said, the value of security analysis and the identification of potential security flaws in protocol constructions must be reinforced empirically through practical adoption and evaluation of more secure implementations. To this end, our work first analyzes the security of the generic $\LINEtwo$ protocol against its claimed security guarantees. We further support our cryptographic security analysis with empirical findings that reveal critical security flaws in the current $\LINEtwo$ architecture. Building on these insights, we introduce a stronger construction, $\LINEDR$, and provide formal cryptographic proofs of its security. Finally, we implement and benchmark both $\LINEtwo$ and our enhanced $\LINEDR$, evaluating the associated costs and demonstrating that stronger security guarantees can be achieved without incurring substantial performance overhead.
\subsection{Related Work} \label{sec: RW}

% Espinoza et al. \cite{LINE-1} identified the absence of forward secrecy and replay protection in LINEv1; Isobe and Minematsu \cite{isobe2018breaking} demonstrated message forgery; and Shi and Yoneyama \cite{shi2019verification} used ProVerif to discover additional forward secrecy weaknesses. However, these evaluations targeted outdated versions of LINE, and a formal cryptographic analysis of the current protocol, LINEv2, has not been published. In parallel, the Multi-Stage Key Exchange (MSKE) model \cite{MSKE-1} has emerged as a powerful tool for analyzing secure messaging protocols under realistic adversarial conditions, and has been applied to Signal \cite{signal-1}, TLS \cite{BenTLS1.3}, and QUIC. In this work, we adapt the MSKE model to analyze LINEv2, capturing the protocol's stateless nature. 

 %Even though $\LINE$ has been praised for its easy-to-use design and wide range of communication tools, it has also been questioned about security and privacy, just like any other digital platform. In response, various forensic analyses have been conducted to identify potential vulnerabilities within the application's infrastructure. These studies look deep into how $\LINE$ works, trying to find any flaws that could give hackers access to user data or harm the platform. 
Although $\LINE$ messenger enjoys widespread popularity across East Asia, existing studies have largely focused on its application security rather than its underlying cryptographic foundations. Chang and Chang~\cite{Line_forensic} present a forensic analysis of the $\LINE$ messenger on Android, outlining methods for extracting traces and evidence from both volatile and non-volatile memory, including chat logs, user account data, and database files. Similarly, Riadi \etal~\cite{Line_forensic2} recovered $\LINE$ conversational data 
 from RAM---including timestamps, message content, and status information---that can serve as digital evidence in online fraud investigations. However, such vulnerabilities stem from application-level flaws rather than weaknesses in the underlying cryptographic design, and are therefore beyond the scope of this work. 
 
Beyond these forensic investigations, only limited attention has been given to the cryptographic security of the protocol itself. A few studies have analysed earlier versions of the $\LINE$ protocol, focusing specifically on the now-deprecated $\LINEone$ design. Espinoza \etal~\cite{LINE-1} identified the absence of replay resilience and forward secrecy, while Shi and Yoneyama~\cite{shi2019verification} corroborated these findings through a ProVerif analysis and further revealed a new attack compromising forward secrecy. Additionally, Isobe \etal~\cite{isobe2018breaking} demonstrated an attack that allowed a malicious server to forge messages. In this work, we focus on $\LINEtwo$, as described in the LINE Technical Whitepaper~\cite{LineWP2022}, which introduces updates intended to address the security vulnerabilities identified in $\LINEone$.

In recent years, numerous studies have examined the formal security of secure messaging protocols \cite{whatsappSASM,signal-1,DBLRAT,paterson2023three,Telegram_attacks,10351027}. Among such work, Cohn-Gordon \etal~\cite{signal-1} provide a comprehensive formal analysis of the Signal protocol, modelling its X3DH and double ratchet mechanisms within the Multi-Stage Key Exchange (MSKE) framework to capture its distinctive ratcheting key-update structure. The MSKE framework models key exchange protocols that derive cryptographic keys over multiple stages, capturing a range of adversarial capabilities relevant to such constructions. Fischlin and G{\"u}nter~\cite{MSKE-1} introduced the MSKE model to analyse the QUIC protocol, formalizing security notions such as message confidentiality, key indistinguishability, forward secrecy, and authentication. Dowling \etal~\cite{BenTLS1.3} later extended this framework to capture multi-stage pre-shared-secret key exchange (MS-PSKE) security, providing a formal analysis of the TLS~1.3 handshake.

\mehr{Our analysis found that bot communications are transmitted in plaintext, lacking the end-to-end encryption applied to standard one-to-one messages (see Appendix \ref{sec:Exp}). This represents a significant security risk for sensitive bot interactions---such as those involving banking or healthcare services—--and raise significant privacy concerns. This issue, which we identify as a novel finding of our work, has also been examined concurrently and independently by Chou \etal~\cite{chatbotExploitation}.}

There has been growing attention in recent years toward improving the security of modern messaging protocols. Albrecht \etal~\cite{Telegram_attacks} introduce minor modifications to the Telegram protocol and formally verify that the use of symmetric encryption in MTProto 2.0 can provide the fundamental security guarantees expected from a bidirectional channel. The work of Paterson \etal~\cite{paterson2023three} identified multiple attacks against Threema Secure Messenger, including exploits that targeted its lack of forward secrecy. Their findings have been instrumental in subsequent security improvements to the Threema protocol, notably the introduction of forward secrecy as a core security guarantee. The ongoing transition toward retrofitting internet protocols with post-quantum primitives has also motivated similar efforts in secure messaging systems. For example, Apple's iMessage has integrated post-quantum security guarantees through PQ3, which employs a Signal-\emph{like} mechanism incorporating a post-quantum key encapsulation mechanism (KEM) within the double ratchet \cite{stebila2024security}. Building on this, Dodis \etal~\cite{dodis2025triple} improve the communication efficiency of PQ3 by reducing its complexity and introduce a post-quantum secure \emph{triple ratchet} construction. While our work introduces several security enhancements to the existing $\LINEtwo$ design, we do not address post-quantum security, as it falls outside the scope of the currently deployed LINE design. Furthermore, as shown in \cite{LineBreak2025}, the $\LINEtwo$ protocol's stateless design and unauthenticated metadata enable a Man-in-the-Middle or malicious server to replay ciphertexts indefinitely, with replayed messages accepted as valid by the recipient. \mehr{Our analysis concurrently identifies similar replay vulnerabilities, reinforcing the need for the security enhancements we introduce to the existing $\LINEtwo$. While our modifications strengthen the $\LINEtwo$ against these specific threats, post-quantum security is outside our current scope, and remains consistent with the limitations of the currently deployed $\LINE$ protocol.}

Despite existing analysis of application-level vulnerabilities and formal studies of other secure messaging protocols, the cryptographic security of $\LINEtwo$ remains largely uncharacterised. This gap motivates a provable security  examination of its security guarantees and the design of improvements to bring it closer to state-of-the-art messaging protocols. In this work, we conduct the first rigorous cryptographic analysis of $\LINEtwo$. Thus, we summarize the main technical contributions of our work as follows:
\begin{enumerate}
    \item Isolating and formally describing the $\LINE$ cryptographic E2EE protocol from the Line Technical Whitepaper~\cite{LineWP2022}.
     \item Cryptographic security analysis of the $\LINEtwo$ using a tailored MSKE construction capable of modelling the security of stateless messaging protocols.
    \item Building on insights from our analysis, we propose an enhanced version of the $\LINE$ E2EE protocol, denoted $\LINEDR$, which integrates the double ratchet mechanism from the Signal protocol.
    % \item A comparative analysis with other widely-used messaging protocols for which formal analyses already exist.
    \item Reference implementation of standard $\LINEtwo$ and our enhanced $\LINEDR$, alongside empirical evaluation of their computational and communication overhead, highlighting the practical viability of strengthening security within the $\LINE$ architecture.
\end{enumerate}

\section{Overview of LINE Messaging Protocol} \label{sec:ovr}

The LINE messaging application provides end-to-end encryption (E2EE) through its proprietary Letter Sealing protocol. Two versions of this protocol are currently deployed in parallel. Letter Sealing version 2 (henceforth will be referred to as $\LINEtwo$) is enabled by default for both one-to-one and one-to-many messaging whenever supported by the sender's client. If a recipient does not support $\LINEtwo$, the protocol is downgraded to Letter Sealing version 1 (henceforth will be referred to as $\LINEone$) to maintain compatibility \cite{LineWP}. For reference and comparison, we provide a formal description of $\LINEone$ in Appendix \ref{Sec:line1}. As discussed in Section \ref{sec:intro}, the limitations of $\LINEone$ have already been resolved by LINE~\cite{LineWP2022}. Therefore, our analysis concentrates exclusively on the one-to-one messaging setting of $\LINEtwo$, which represents the operative mechanism underpinning $\LINE$'s E2EE guarantees.

\subsection{$\LINEtwo$: $\LINE$ Letter Sealing Protocol version 2}\label{sec:linetwo}

$\LINEtwo$ can be described in terms of four main phases: a \emph{Registration} phase, where communicating parties register with the $\LINE$ messaging server, upload public keys and are assigned with the public key identifiers ($\kid$s); a \emph{Session Establishment} phase, where clients retrieve public keys of their intended communicating partners from $\LINE$ messaging server and generate a pre-master secret; an \emph{Encryption} phase, where messages are encrypted under session keys; and a  \emph{Decryption} phase, through which the receiver retrieves the original content of encrypted messages. An algorithmic description of $\LINEtwo$ is given in Figure \ref{fig:line-vers-2}, where we present the protocol from the perspective of two communicating parties, Alice and Bob, and note that the same procedure generalizes to any communicating participants.  Both $\LINEone$ and $\LINEtwo$ follow the same structural framework, differing solely in the encryption phase. We note that sender and recipient identifiers are not encrypted, and therefore, E2E privacy cannot be formally guaranteed. A formal treatment of privacy against external adversaries is left for future work.

\subsubsection{Registration Phase}
Each client generates a long-term elliptic curve Diffie-Hellman (ECDH) key pair for secure communication and registers the public key with $\LINE$'s messaging server. These keys will be used to enable a static-static ECDH key exchange process between Alice and Bob, facilitated by a server. Both parties independently generate long-term Diffie-Hellman key pairs: Alice generates $(x, g^x)$ and Bob generates $(y, g^y)$. Alice sends her public key $g^x$ to the server, where the server generates Alice's key identifier and stores Alice's public key in its database under the identifier $\PK[\kid_A]$. Similarly, Bob sends his public key $g^y$ to the server, where the server generates Bob's key identifier and stores Bob's public key in its database under the identifier $\PK[\kid_B]$.

\subsubsection{Session Establishment}

In what we describe as the \emph{session establishment} phase, the server facilitates the exchange of public keys between communicating partners. This is done by distributing Alice's public key $g^x$ to Bob and Bob's public key $g^y$ to Alice upon request. Alice and Bob initially retrieve each other's public keys from the server, where Alice obtains Bob's public key $g^y$ (identified by $\kid_B$), and Bob retrieves Alice's public key $g^x$ (identified by $\kid_A$).

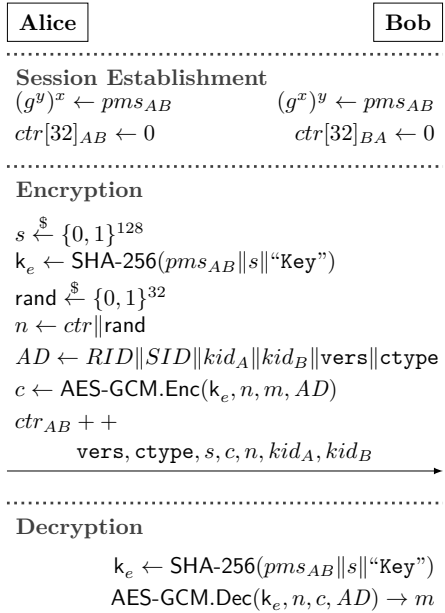
\begin{wrapfigure}{l}
{0.5\textwidth}
    \centering
\scalebox{0.85}{
	\centering
\begin{tikzpicture}[yscale=-0.5,>=latex, xscale = 0.75]
\edef\InitX{0}
\edef\ArrowLeft{0}
\edef\ArrowCenter{4.5}
\edef\ArrowRight{9}
\edef\AdvX{4.5}
\edef\RespX{9}
% Set the starting Y coordinate
\edef\Y{0}

% Draw header boxes
\node [rectangle,draw,inner sep=5pt,right] at (\InitX,\Y) {\textbf{Alice} };
% \node [rectangle, draw, inner sep=5pt, centered] at (\AdvX,\Y) {\textbf{Server}};
\node [rectangle,draw,inner sep=5pt,left] at (\RespX,\Y) {\textbf{Bob}};
\NextLine[0.5]
\Separator{Session Establishment}{}
\NextLine[2]
\ClientAction{$(g^y)^x \gets \pms_{AB}$}
\ServerAction{$(g^x)^y \gets \pms_{AB}$}
\NextLine[1]
\ClientAction{$\ctr[32]_{AB} \gets 0$}
\ServerAction{$\ctr[32]_{BA} \gets 0$}
\NextLine[0.5]

%%%%%%%%%%%%%%%%%%%%%%% Activation Phase %%%%%%%%%%%%%%%%%%
\Separator{Encryption}{}
\NextLine[2.5]
\ClientAction{$\salt \getsr \{0,1\}^{128}$}
\NextLine[1]
\ClientAction{$\key_e \gets \SHAtwo(\pms_{AB} \| \salt \| ``\keylbl" )$}
\NextLine[1]
\ClientAction{$\rand \getsr \bits^{32}$}
\NextLine[1]
\ClientAction{$\nonce \gets \ctr \| \rand$}
\NextLine[1]
\ClientAction{$\AD \gets \RID \| \SID \| \kid_A \| \kid_B \| \vers \| \ctype$}
\NextLine[1]
\ClientAction{$c \gets \AESGCM.\Enc(\key_e, \nonce, m, \AD)$}
\NextLine[1]
\ClientAction{$\ctr_{AB}++$}
\NextLine[1]
\ClientToServer{$\vers,\ctype, \salt, c, \nonce, \kid_A, \kid_B$}{}
\Separator{Decryption}{}
%%%%%%%%%%%% Sending Message %%%%%%%%%%%%%%%%%%%%%%%
\NextLine[2.5]
\ServerAction{$\key_e \gets \SHAtwo(\pms_{AB} \| \salt \| ``\keylbl" )$}
\NextLine[1]
\ServerAction{$\AESGCM.\Dec(\key_e,\nonce,c,\AD)\to m$}

\end{tikzpicture}
}

\caption{
$\LINEtwo$ protocol phases (excluding registration). Given our analysis focuses on session establishment and message exchange, the registration phase is omitted for brevity (see Appendix~\ref{app:line2-detail} for details). Registration and session establishment are illustrated concurrently for both parties but may occur independently.}
\label{fig:line-vers-2}
\end{wrapfigure}

Once they have each other's public keys, they independently compute a shared secret, known as the pre-master secret $\pms_{AB}$. Alice calculates this by raising Bob's public key $g^y$ to her private key $x$, resulting in $g^{xy}$, while Bob does the same by raising Alice's public key $g^x$ to his private key $y$, also deriving $g^{xy}$. This shared secret $\pms_{AB}$, forms the basis for secure communication between them. In addition to the key exchange, both parties initialize a 32-bit counter $\ctr[32]_{AB} \gets 0 $, which ensures keys are freshly derived per message and will be incremented after each encryption/decryption operation.

\subsubsection{Encryption}
When Alice encrypts a message, she begins by generating a 128-bit random salt $\salt$, which is used to derive the fresh encryption key \sloppy{$ \key_e \gets \SHAtwo(\pms_{AB}\|\salt\|$ $''\keylbl'')$}, computed by hashing the long-term premaster secret $\pms_{AB}$, random salt $\salt$, and the string "$\keylbl$" using $\SHAtwo$. This ensures that the derived key is unique for the session. A 32-bit $\nonce$ is then created by concatenating the shared counter $\ctr_{AB}$ and a random value $\rand$. The $\nonce$ ensures that each encryption instance remains unique and, with state management. The additional authenticated data ($\AAD$) comprises metadata including the receiver and sender identifiers ($\RID$, $\SID$), key identifiers ($\mathit{kid}_A$, $\mathit{kid}_B$), the E2EE version ($\vers$), and content type ($\ctype$). It is provided as associated data, which remains unencrypted but authenticated. The actual message $m$ is encrypted using AES-GCM with the derived key $\key_e$, the nonce $\nonce$, and the $\AAD$, resulting in the ciphertext $c \gets \AESGCM.\Enc(\key_e,\nonce,m,\AAD)$. The ciphertext $c$, and metadata ($\salt$, $\nonce$, $\kid_A$, $\kid_B$, $\vers$, $\ctype$), are sent to the recipient. After encryption, the counter is incremented to ensure future messages use a unique nonce.

\subsubsection{Decryption}
The decryption process mirrors the encryption. The recipient re-derives the encryption key $\key_e$ using the same pre-master secret $\pms_{AB}$, and salt $\salt$ provided in the ciphertext. The ciphertext is then decrypted using $\AESGCM$, employing the same $\nonce$ and $\AAD$ used during encryption, ensuring the message and its associated metadata remain unaltered. 

\subsubsection{$\LINEtwo$ Packet Analysis Experiment}
To ensure our understanding of the $\LINEtwo$ protocol, we deployed an experimental testbed to analyse network traffic between two end parties. The study, summarized here and detailed in Appendix \ref{sec:Exp}, investigated whether the cryptographic design described in the official white paper \cite{LineWP} matched observations in a live network environment. Although the white paper describes the overall E2EE design, it's unclear how the client-to-server secure channel ($\CtoS$) layer and E2EE layer (Letter Sealing) are specified. Our analysis confirms that $\LINE$ uses a layered architecture, where client-to-server communication is protected by a TLS 1.2 layer that encapsulates the client-to-client E2E messages. Additionally, we verify the end-to-end encryption used aligns with the specifications outlined in the white paper. \mehr{As discussed in section \ref{sec: RW}, our analysis confirms that bot communications lack the E2EE applied to standard messages, a finding consistent with the concurrent work of Chou \etal~\cite{chatbotExploitation}.}

% While the core E2E message structure generally aligned with the white paper specifications, we discovered additional, unencrypted metadata fields providing message context such as sender and responder IDs and content type etc. Crucially , we found that bot communications are transmitted in plaintext, completely lacking the E2E protection applied to 1:1 messages. This finding, which poses a significant security risk for sensitive bot interactions, constitutes a novel contribution of our work and is explored concurrently by Chou \etal~\cite{chatbotExploitation}. The lack of encryption in bot interactions constitutes a significant privacy weakness, leaving sensitive information shared with banking or healthcare bots exposed to potential interception.

\section{Limitations in $\LINEtwo$ protocol}

Our security analysis of $\LINEtwo$, presented in Section~\ref{sec:MSKE}, reveals both strengths and weaknesses in the protocol's design. $\LINE$ achieves key-indistinguishability, end-to-end encryption, and message authentication. However, it lacks post-compromise security (PCS) and forward secrecy (FS) because it relies on a static-static Diffie-Hellman key exchange. \mehr{Due to its stateless design, $\LINEtwo$ lacks robust replay protection, enabling an active adversary or malicious server to replay ciphertexts that may be accepted by the recipient. This was independently confirmed in \cite{LineBreak2025}.}
In contrast, other state-of-the-art messaging protocols provide stronger security guarantees. For instance, Signal achieves both PCS and FS through the double ratchet mechanism \cite{signal-1}. Matrix likewise supports FS \cite{matrix-comparison}, while Threema not only ensures FS but also incorporates robust replay protection \cite{paterson2023three}. Telegram, on the other hand, provides only partial FS alongside otherwise standard features \cite{telegramFA}. These examples illustrate the gap between $\LINE$'s stateless design and the richer security guarantees adopted elsewhere.

\subsection{Why Not Replace $\LINE$ with Signal?}

Given these considerations, one may ask why $\LINE$ does not adopt a well-established secure messaging protocol such as Signal, which already offers stronger, formally verified security guarantees. Our design of $\LINEDR$ was based on deployment realities. A direct one-to-one adoption of the Signal protocol within the $\LINE$ protocol architecture is not straightforward; it would require a fundamental restructuring of identity management and a complex reconfiguration of the existing infrastructure, risking service interruptions for an established user base of millions.

These difficulties arise largely from differences in the underlying cryptographic building blocks. \mehr{Signal relies on a significant pre-public key bundle infrastructure involving signed pre-keys, one-time pre-keys, and digital signatures. In contrast, $\LINE$'s current design is more rudimentary; it does not use digital signatures or pre-generated bundles and instead relies on a single static Diffie-Hellman public key per user. Furthermore, Signal's inclusion of post-quantum keys presents an additional obstacle, as LINE has yet to adopt post-quantum security guarantees. Consequently, retrofitting Signal’s full architecture into $\LINE$'s existing environment is practically unfeasible.}

To strike a practical balance, we propose an evolutionary path that seamlessly incorporates Signal's core cryptographic component---the double ratchet mechanism--- to fit within $\LINE$'s existing environment. \mehr{Our approach avoids massive infrastructure changes by remaining backward compatible: no signatures, pre-key bundles, or server-side modifications are required. Instead, additional public keys are exchanged alongside existing messages, enabling the protocol to transition directly to a double ratchet-style key-update mechanism, even after session establishment.} This approach provides the same robust security guarantees, specifically FS and PCS, while preserving the performance of a global platform. By optimizing for the specific needs of the LINE environment, we deliver an operationally viable solution that is more efficient than a generic protocol replacement.

 In the following section, we present $\LINEDR$, a double ratchet-based enhancement of $\LINEtwo$ that addresses these limitations without requiring a complete migration to Signal's infrastructure.

\section{Improving the Cryptographic Design of $\LINE$}

$\LINEtwo$ relies solely on static–static $\diffihell$ for key establishment, which limits the level of security the protocol can achieve. To overcome these limitations, we propose $\LINEDR$ (Figure~\ref{fig:line-dblrat-version}), a double ratchet–based enhancement that augments $\LINEtwo$ with forward secrecy, post-compromise security, replay protection, and KCI resilience, while preserving $\LINE$’s overall architecture.

Table~\ref{tab:linev2_comparison} summarizes the security properties of $\LINEtwo$ and our enhanced Letter Sealing variants. While $\LINEtwo$ combines static Diffie–Hellman with $\AEAD$, it does not provide forward secrecy or strong compromise resilience. In contrast, $\LINEDR$ achieves continuous key evolution and robust replay protection, bringing the design closer to the guarantees expected of modern secure messaging protocols.

% \vspace{-1cm}
\setlength{\tabcolsep}{8pt}
\renewcommand{\arraystretch}{1.2}
\begin{table*}[h!tp]
\Huge
\centering
\resizebox{\textwidth}{!}{
\begin{tabular}{|l|c|c|c|}
\hline
\textbf{Security Property} & \textbf{$\LINEtwo$ (Static-Static $\diffihell$)} & \textbf{$\LINEDR$ (Double Ratchet)} \\
\hline
\textbf{End-to-End Encryption} & $\surd$ (AEAD)  & $\surd$ ($\AEAD$ + Key Update) \\
\hline
\textbf{Authentication} & $\surd$ (Implicit via $\diffihell$ + $AD$)  & $\surd$ (Implicit via $\diffihell$ + $AD$) \\
\hline
\textbf{Forward Secrecy} & $\times$  & $\surd$ (Continuous with $\diffihell$ Ratchet) \\
\hline
\textbf{Post-Compromise Security}  & $\times$ & $\surd$  (via Key Update) \\
\hline
\textbf{Replay Protection} & $\times$  & $\surd$ (via Key Update) \\
\hline
\textbf{KCI Resistance} & $\times$  & $\surd$ (TOFU via Key Update) \\
\hline
\end{tabular}
}
\vspace{0.1cm}
\caption{Comparison of cryptographic security properties of LINE Letter Sealing protocol variants.}
\label{tab:linev2_comparison}
\end{table*}

\subsection{$\LINEDR$: LINE Double Ratchet}

\mehr{Figure \ref{fig:line-dblrat-version} illustrates the message encryption and decryption workflow of the proposed protocol, focusing on session establishment and message exchange (the registration phase is provided in Appendix \ref{fig:line-vers-DR(a)}). The protocol follows a double ratchet–style structure, where asymmetric and symmetric ratchets are advanced independently based on the sender's state as described below.}

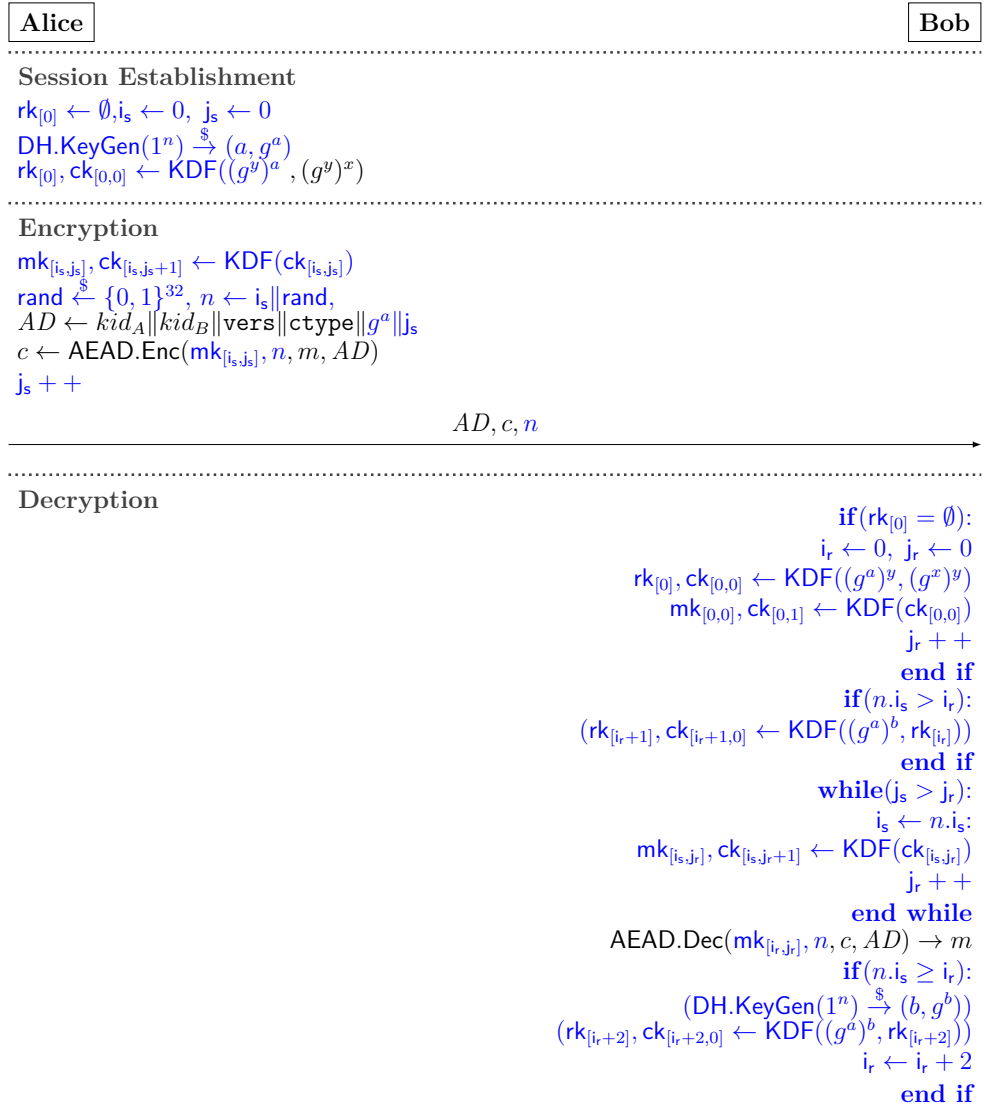
\begin{figure}[htbp]
    \centering
\scalebox{0.8}{
\large
	\centering
\begin{tikzpicture}[yscale=-0.5,>=latex]
\edef\InitX{0}
\edef\ArrowLeft{0}
\edef\ArrowCenter{8}
\edef\ArrowRight{16}
\edef\AdvX{8}
\edef\RespX{16}
% Set the starting Y coordinate
\edef\Y{0}

% Draw header boxes
\node [rectangle,draw,inner sep=5pt,right] at (\InitX,\Y) {\textbf{Alice} };
% \node [rectangle, draw, inner sep=5pt, centered] at (\AdvX,\Y) {\textbf{Server}};
\node [rectangle,draw,inner sep=5pt,left] at (\RespX,\Y) {\textbf{Bob}};
\NextLine[0.5]
\Separator{Session Establishment}{}
\NextLine[2.5]
\ClientAction{\color{blue}$\rootkey_{[0]} \gets \emptyset ,$$\asymindexsender \gets 0, ~\symindexsender \gets 0 $}
% \NextLine[1]
% \ClientAction{\color{blue}$\asymindexsender \gets 0, ~\stageindex \gets 0 $}
\NextLine[1]
\ClientAction{\color{blue}$\diffihell.\KeyGen(1^\secpar) \tor (a,g^{a})$}
\NextLine[1]
\ClientAction{{\color{blue}$\rootkey_{[0]}, \chainkey_{[0,0]}\gets \KDF((g^y)^{a}$} $,(g^y)^x)$}
\NextLine[0.5]

\Separator{Encryption}{}
%%%%%%%%%%%%%%%%%%%%%%% Activation Phase %%%%%%%%%%%%%%%%%%
\NextLine[2.5]
%\ClientAction{$\diffihell.\KeyGen(1^\secpar) \tor (x',g^{x'})$}
%\NextLine[1]
%\ClientAction{$\salt \getsr \{0,1\}^{128}$}
%\NextLine[1]
\ClientAction{\color{blue}$\msgkey_{[\asymindexsender,\symindexsender]},\chainkey_{[\asymindexsender,\symindexsender+1]} \gets \KDF(\chainkey_{[\asymindexsender,\symindexsender]})$}
\NextLine[1]
\ClientAction{\color{blue} $\rand \getsr \bits^{32}$, $\nonce  \gets \asymindexsender \| \rand$, }
% \NextLine[1]
% \ClientAction{\color{blue}$(\nonce,\nonce_r)  \gets \asymindexsender \| \rand$}
\NextLine[1]
\ClientAction{$\AD \gets \kid_A \| \kid_B \| \vers \| \ctype \| \color{blue}g^{a} \| \symindexsender$}
\NextLine[1]
\ClientAction{$\ciphertxt \gets \AEAD.\Enc({\color{blue}\msgkey_{[\asymindexsender,\symindexsender]}, \nonce }, \msg, \AD)$}
\NextLine[1]
\ClientAction{\color{blue}$\symindexsender++$}
\NextLine[1.5]
\ClientToServer{$ \AD, \ciphertxt, {\color{blue}\nonce}$}{}

%%%%%%%%%%%% Decryption %%%%%%%%%%%%%%%%%%%%%%%
\Separator{Decryption}{}
\NextLine[2]
%\ServerAction{$\ifsm (\rootkey_{[0]} = \emptyset)$}
%\NextLine[1]
\ServerAction{\color{blue}$\ifsm (\rootkey_{[0]} = \emptyset)$:}
\NextLine[1]
\ServerAction{\color{blue}$\asymindexrcvr \gets 0, ~\symindexrcvr \gets 0 $}
% \NextLine[1]
% \ServerAction{\color{blue}$\asymindexrcvr \gets 0, ~\stageindex \gets 0 $}
\NextLine[1]
\ServerAction{\color{blue}$\rootkey_{[0]}, \chainkey_{[0,0]}\gets \KDF((g^a)^{y},(g^x)^y)$}
\NextLine[1]
\ServerAction{\color{blue}$\msgkey_{[0,0]}, \chainkey_{[0,1]} \gets \KDF(\chainkey_{[0,0]})$}
\NextLine[1]
\ServerAction{\color{blue}$\symindexrcvr++$}
\NextLine[1]
\ServerAction{\color{blue}$\textbf{end~if}$}
\NextLine[1]
\ServerAction{\color{blue}$\ifsm (\nonce.\asymindexsender > \asymindexrcvr)$:}
\NextLine[1]
\ServerAction{\color{blue}$(\rootkey_{[\asymindexrcvr+1]}, \chainkey_{[\asymindexrcvr+1,0]} \gets \KDF((g^{a})^{b},\rootkey_{[\asymindexrcvr]}))$}
\NextLine[1]
\ServerAction{\color{blue}$\textbf{end~if}$}
\NextLine[1]
\ServerAction{\color{blue}$\whilesm (\symindexsender > \symindexrcvr)$:}
\NextLine[1]
\ServerAction{\color{blue}$\asymindexsender \gets \nonce.\asymindexsender$:}
\NextLine[1]
\ServerAction{\color{blue}$\msgkey_{[\asymindexsender,\symindexrcvr]},\chainkey_{[\asymindexsender,\symindexrcvr+1]} \gets \KDF(\chainkey_{[\asymindexsender,\symindexrcvr]}) $}
\NextLine[1]
\ServerAction{\color{blue}$ \symindexrcvr++$}
\NextLine[1]
\ServerAction{\color{blue}$\textbf{end~while}$}
\NextLine[1]
\ServerAction{$\AEAD.\Dec({\color{blue}\msgkey_{[\asymindexrcvr,\symindexrcvr]},\nonce}, \ciphertxt,\AD)\to \msg$}
\NextLine[1]
\ServerAction{\color{blue}$\ifsm (\nonce.\asymindexsender \geq \asymindexrcvr)$:}
\NextLine[1]
\ServerAction{\color{blue}$(\diffihell.\KeyGen(1^\secpar) \tor (b,g^{b}))$}
\NextLine[1]
\ServerAction{\color{blue}$(\rootkey_{[\asymindexrcvr+2]}, \chainkey_{[\asymindexrcvr+2,0]}\gets \KDF((g^{a})^{b},\rootkey_{[\asymindexrcvr+2]}) )$}
\NextLine[1]
\ServerAction{\color{blue}$\asymindexrcvr \gets \asymindexrcvr+2$}
\NextLine[1]
\ServerAction{\color{blue}$\textbf{end~if}$} 
\end{tikzpicture}
}
\caption{$\LINEDR$ Letter Sealing double ratchet protocol phases (excluding registration). Given our analysis focuses on session establishment and message exchange, the registration phase is omitted for brevity (see Appendix~\ref{app:lineDR-detail} for details). Text in {\color{blue}blue} highlights our modifications to $\LINEtwo$, notably the integration of double ratchet mechanisms. In $[i,j]$, $i$ denotes the asymmetric ratchet and $j$ the symmetric ratchet maintained by both parties; even $i$ values indicate Alice as the sender, and odd $i$ values indicate Bob.}
\label{fig:line-dblrat-version}
\end{figure}
\mehr{
During session establishment, Alice initializes her state to prepare for secure communication. She sets her root key ($\rootkey$) to empty and sets her asymmetric ($\asymindexsender$) and symmetric ($\symindexsender$) ratchet indices to zero. To establish a shared secret, Alice performs a Diffie-Hellman key generation to produce an ephemeral key pair $(a, g^a)$. She then invokes a Key Derivation Function (KDF) to derive the initial root key $\rootkey_{[0]}$ and the first chain key $\chainkey_{[0,0]}$. This derivation is keyed by a combined Diffie-Hellman shared secret using Alice and Bob’s static pre-shared keys ($g^x$, $g^y$ respectively), ensuring the session is linked to the participants' identities.
When Alice prepares to send a message, she performs a symmetric ratchet step to ensure that each message is protected by a unique key. Alice derives a specific message key ($\msgkey_{[\asymindexsender,\symindexsender]}$) and the subsequent chain key ($\chainkey_{[\asymindexsender,\symindexsender+1]}$) from the current chain key. A unique nonce is then constructed by concatenating the current asymmetric index with a 32-bit random value ($\nonce \gets \asymindexsender \| \text{rand}$). Alice further constructs the Associated Data (AD), which includes identifiers for both parties, the protocol version, her ephemeral public key $g^a$, and the current symmetric index. The plaintext is encrypted using an AEAD scheme. Finally, Alice increments her symmetric index ($\symindexsender{++}$) and transmits the AD, ciphertext, and nonce to Bob.
Bob’s side involves a state-matching mechanism to handle potential out-of-order messages or ratchet advances. If Bob’s root key is uninitialized, he performs the initial KDF steps to synchronize with Alice’s starting state. Upon receipt of a message, Bob checks if Alice has advanced her asymmetric ratchet ($\nonce.\asymindexsender > \asymindexrcvr$). If she has, Bob performs a Diffie-Hellman calculation using his private key $b$ and Alice’s public key $g^a$ to update the root key. Bob then enters a symmetric ratchet loop, iteratively deriving intermediate message keys until his index ($\symindexrcvr$) matches Alice’s sender index. Once synchronized, Bob uses the derived message key and the provided nonce to decrypt the ciphertext. To provide future secrecy, if Alice has advanced, Bob generates a new ephemeral key pair $(b, g^b)$, updates the root key again, and advances his receiver index.}

\section{Formal Security Analysis of $\LINE$}

This section introduces our security definitions for the Multi-Stage Key Exchange (MSKE) model, which underpin our analysis of $\LINE$'s security. We describe a variant of the MSKE model designed to analyse a class of stateless secure messaging protocols lacking FS and PCS security guarantees. The standard MSKE model provides a framework for evaluating the security of key exchange protocols. It builds upon earlier Bellare-Rogaway-style models while extending the security definitions to capture advanced real-world features such as state reveal and corruption. MSKE has been widely used in the cryptographic literature to analyse the security guarantees of various secure messaging protocols, including Signal \cite{signal-1}, and TLS \cite{BenTLS1.3}. Choosing MSKE in our analysis of $\LINE$ allows us to rigorously assess the protocol's resistance to a broad range of realistic adversarial capabilities and threats, providing important security assurances for the millions of users who rely on the $\LINE$ application. For a detailed definition of the MSKE model and an illustrative execution flow, we refer the reader to Appendix \ref{appen:MSKE}.

\subsection{Multi Stage Key Exchange Model} \label{sec:MSKE}

\begin{figure}[htbp!]
\centering
\begin{adjustbox} {
				max width=0.9\columnwidth,max height=0.6\textheight}
			\input{mskeExp}
		\end{adjustbox}
    \caption{Pseudocode description of the MSKE Experiment}
    \label{fig:MSKEgame}
\end{figure}

A Multi-Stage Key Exchange protocol is a two-party protocol that securely establishes a sequence of shared secret keys among communicating parties over an insecure communication channel, ensuring that only the intended parties can derive the same secret key without revealing it to unauthorised parties or adversaries. The security of MSKE is captured as a game played between an adversary and a challenger simulating parties executing an MSKE protocol. In the game, the adversary interacts with the challenger's MSKE executions via oracle queries that represent and abstract the capabilities of real-world attackers. For an MSKE protocol, the adversary aims to distinguish between a ``real'' session key and a random key (as provided in the $\Test$ query) where breaking key indistinguishability implies \emph{some} adversarial knowledge of the keys established in the MSKE protocol. If the adversary can do this with any non-negligible advantage (i.e., better than random guessing), then the MSKE protocol is considered broken. Once we prove key indistinguishability for a given MSKE, it is trivial to prove message confidentiality and authentication (when the key is used in an authenticated encryption scheme, as in $\LINEtwo$).

A high-level description of a potential security game for an MSKE protocol follows; a detailed description is shown in Figure \ref{fig:MSKEgame}. The game is broadly divided into three phases: Setup, Queries and End-of-Game. During the setup phase, the challenger initializes the game, sets-up any necessary parameters, and creates key pairs for the participating entities. The adversary can then interact with the challenger via a series of adversarial queries, allowing them to direct the execution of the MSKE between many simulated parties. Finally, the adversary outputs a bit guessing whether a given session key was honestly computed, or instead randomly sampled from the same distribution. 

During the experiment, the challenger maintains a series of variables and session states. Each session $\session$ that participates in the execution of the security model maintains the following session variables:

\begin{itemize}
\item $\role \in \{\init,\resp\}$: the role of the session.
\item $\session.\peerid \in [1,\ldots,\numParties]$: the identifier of the party owner of the partner session. 
\item $\session.peerpk$: the peer's long-term public key. 
\item $\pi.\status[s]\in{\inprogress,\accepted,\rejected}$: the current state of the session $\pi$ in stage $s$, set to $\inprogress$ upon start of a new stage, and set to $\accepted$ or $\rejected$ during the protocol execution.  
\item $\session.k[s] \in \KeySpace$: the session key output by session $\session$ in  stage $s$. 
\item $\session.rand[s] \in \bits^*$: the randomness used by session $\session$ in stage $s$. 
\item $\session.st[s]$: any additional protocol state maintained by the session in stage $s$. 
\item $\session.T[s]$: the transcript of messages sent and received by session $\session$ in stage $s$. 
\end{itemize}

Here we describe the queries the adversary can make to the challenger. These oracles model different kinds of information that the adversary might gain access to in a real-world scenario. We thus capture an active adversary in control of the network, who attempts to compromise sessions via the following oracle queries:
\begin{itemize}
\item $\Test(u,i,s)$: Returns either a real or random key to the adversary from a session $\testsess$ in stage $s$.
\item \mehr{$\RevSessKey(u,i,s)\to \testsess.\key[s]$: Returns the session key used in a specified $\testsess$ in stage $s$. This models the scenario where the session key of a participant is compromised.} 
\item $\RevLongTermKey(u)\to \sk_u$: Returns the long-term secret key $\sk_u$ of a specified entity $u$. This models the scenario where the long-term key of a participant is compromised. 
\item $\RevRand(u,i,s) \to \testsess.rand[s]$: Returns the random values (such as $\nonce$ or $\salt$) used in a specified session $\testsess$ in stage $s$. This models the case where the adversary gets to see some of the random values used during the key exchange. 
\item $\RevState(u,i,s) \to \testsess.st[s]$: Returns the internal state of a specified entity in a given session. This might include intermediate values and computations. 
\item $\Send(u,i,m) \to m'$: Sends the message $m$ to the session $\testsess$, which processes the message and (potentially) returns a message $m'$. The oracle then returns the response $m'$. This allows the adversary to completely control communication between sessions, capturing an untrusted channel.     
\end{itemize}
 
Finally, the adversary terminates and outputs a guess bit $b' \in \bits$. If the guess matches the hidden coin flip from the $\Test$ query, then the adversary is able to distinguish the real key from a randomly sampled key, and the adversary wins. We formalise this notion in Definition \ref{def:ms-ind}.

\begin{definition} \label{def:ms-ind}
    Let $\Pi$ be a multi-stage key exchange protocol. Let $\numParties,\numMsgs,\numSessions \in  \mathbb{N}$ be the maximum number of parties, messages and sessions in the security game. Let $\adversary$ be a probabilistic algorithm that runs in time polynomial in the security parameter. We define the adversary's advantage in winning the $\msind$ game as:
 %\begin{align*}
   $\Adv{\msind}{\Pi,\numParties,\numMsgs,\numSessions}(\adversary) = 
  |2 \cdot \Pr[\Exp{\msind}{\Pi,\numParties,\numSessions,\numMsgs}(\adversary) = 1 ] - 1|$
 %\end{align*}
 where the security experiment $\Exp{\msind}{\Pi,\numParties,\numMsgs,\numSessions}(\adversary)$ is as defined in Figure \ref{fig:MSKEgame}. We say that an MSKE protocol $\Pi$ is secure if $\Adv{\msind}{\Pi,\numParties,\numMsgs,\numSessions}(\adversary)$ is negligible.
\end{definition}

Note that the challenger in $\Exp{\msind}{\Pi,\numParties,\numMsgs,\numSessions}(\adversary)$ maintains the following set of global variables, used to determine whether the adversary has trivially broken the protocol via the freshness predicate $\freshpredicate$:
\begin{itemize}
    \item $b$: a challenge bit;
    \item $\tested = (u,i,s)$ or $\bot$: records the inputs to the query $\Test(u,i,s)$ or $\bot$ if no $\Test$ query has been issued;
   \item $\testsess.\revsesskey[s] \in \{\trueflag,\falseflag\}$: records whether $\RevSessKey(u,i,s)$ was issued, set by default to $\falseflag$.
   %\ben{This isn't referred to in the pseudocode, nor in the write-up. Should this be here? Discuss.}
    \item $\testsess.\revrand[s] \in \{\trueflag,\falseflag\}$: records whether $\RevRand(u,i,s)$ was issued, set by default to $\falseflag$.
    \item $\testsess.\revstate[s] \in \{\trueflag,\falseflag\}$: records whether $\RevState(u,i,s)$ was issued, set by default to $\falseflag$.
\end{itemize}

We now introduce \emph{freshness conditions}, which restrict the adversary from issuing arbitrary queries and specify the combinations of queries under which an MSKE---and in particular the $\LINE$ messaging protocol---remains secure. We first present the freshness conditions and analysis for $\LINEtwo$, and then proceed to the freshness conditions and analysis for $\LINEDR$.
\begin{figure}[t]
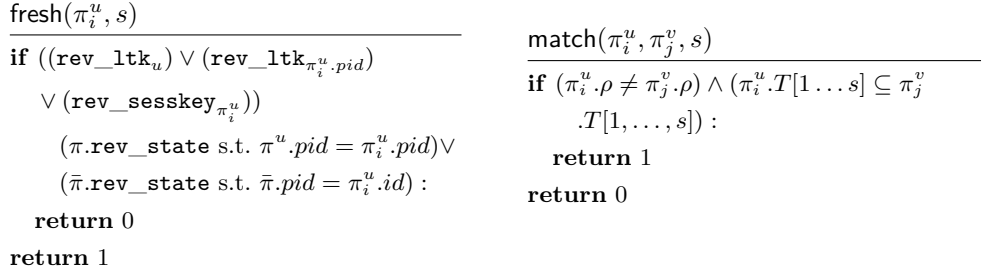

\centering
\noindent
\begin{minipage}{0.5\textwidth}
\vspace{-0cm}
  \procedureblock{$\freshpredicate(\testsess,s)$}
    {\small
    \pcif ((\revltk_u) \lor (\revltk_{\testsess.\peerid}) \\ \t \lor (\revsesskey_{\testsess})) \\ 
    \t \t (\session.\revstate ~\mathrm{s.t.}~\session^u.\peerid = \testsess.\pid) \lor\\
    \t \t  (\bar{\session}.\revstate ~\mathrm{s.t.}~\bar{\session}.\peerid = \testsess.\id):\\
    \t \pcreturn 0\\
    \pcreturn 1\\
    }
    %\caption{$\LINE$ Freshness Predicate}
    %\label{fig:fresh}
\end{minipage}%
\begin{minipage}{0.5\textwidth}
\procedureblock{$\matchpredicate(\testsess,\partsess,s)$}
   {\small
    \pcif (\testsess.\role \neq \partsess.\role) \land (\testsess.T[1\ldots s] \subseteq \partsess \t \t \\
    \t \t .T[1,\ldots,s]):\\
    \t \pcreturn 1\\
    \pcreturn 0\\
    \t \\
  } 
\end{minipage}
\caption{Left: $\LINE$ Freshness Predicate and Right: $\LINE$ Matching Predicate}
\label{fig:fresh+match}
\end{figure}

\subsection{Security of $\LINEtwo$}

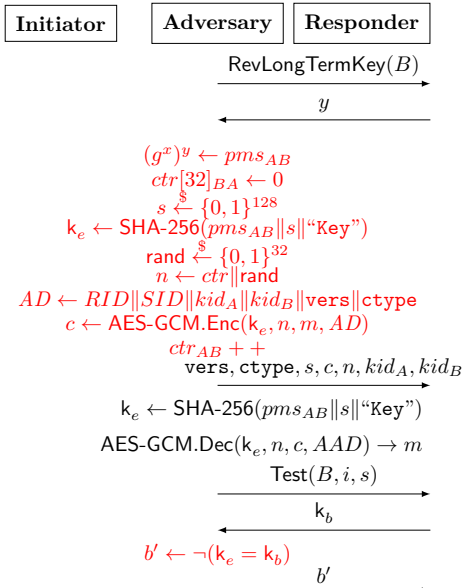
\begin{wrapfigure}{l}{0.5\textwidth}
    \centering
    \scalebox{0.8}{
\centering
\begin{tikzpicture}[yscale=-0.4,>=latex]
\edef\InitX{0}
\edef\ArrowLeft{0}
\edef\ArrowCenter{3.5}
\edef\ArrowRight{7}
\edef\AdvX{3.5}
\edef\RespX{7}
% Set the starting Y coordinate
\edef\Y{0}

% Draw header boxes
\node [rectangle,draw,inner sep=5pt,right] at (\InitX,\Y) {\textbf{Initiator} };
\node [rectangle, draw, inner sep=5pt, centered] at (\AdvX,\Y) {\textbf{Adversary}};
\node [rectangle,draw,inner sep=5pt,left] at (\RespX,\Y) {\textbf{Responder}};

\NextLine[2]
\AdversaryToServer{$\RevLongTermKey(B)$}{}
\NextLine[0.5]
\ServerToAdversary{$y$}{}
\NextLine[1]
\AdversaryAction{\color{red}$(g^x)^y \gets \pms_{AB}$}
\NextLine[1]
\AdversaryAction{\color{red}$\ctr[32]_{BA} \gets 0$}
\NextLine[1]
\AdversaryAction{\color{red}$\salt \getsr \{0,1\}^{128}$}
\NextLine[1]
\AdversaryAction{\color{red}$\key_e \gets \SHAtwo(\pms_{AB} \| \salt \| ``\keylbl" )$}
\NextLine[1]
\AdversaryAction{\color{red}$\rand \getsr \bits^{32}$}
\NextLine[1]
\AdversaryAction{\color{red}$\nonce \gets \ctr \| \rand$}
\NextLine[1]
\AdversaryAction{\color{red}$\AD \gets \RID \| \SID \| \kid_A \| \kid_B \| \vers \| \ctype$}
\NextLine[1]
\AdversaryAction{\color{red}$\ciphertxt \gets \AESGCM.\Enc(\key_e, \nonce, \msg, \AD)$}
\NextLine[1]
\AdversaryAction{\color{red}$\ctr_{AB}++$}
\NextLine[1]
\AdversaryToServer{$\vers,\ctype, \salt, c, \nonce, \kid_A, \kid_B$}{}
\NextLine[0.5]
\ServerAction{$\key_e \gets \SHAtwo(\pms_{AB} \| \salt \| ``\keylbl" )$}
\NextLine[1.5]
\ServerAction{$\AESGCM.\Dec(\key_e,\nonce,c,\AAD)\to m$}
\NextLine[1.5]
\AdversaryToServer{$\Test(B,i,s)$}{}
\NextLine[0.5]
\ServerToAdversary{$\key_b$}{}
\NextLine[0.5]
\AdversaryAction{\color{red}$b' \gets \lnot(\key_e = \key_b)$}
\NextLine[1]
\AdversaryToServer{$b'$}{}
\end{tikzpicture}
}
    \vspace{-0.5cm}
    \caption{A trivial KCI attack.}
    \label{fig:ATTACK}
\end{wrapfigure}

\subsubsection{Matching and Freshness Predicates of $\LINEtwo$}
Previously, we provided the adversary with powerful queries with which it can break any protocol. We now define a \textit{freshness predicate} $\freshpredicate$ to constrain that power, effectively specifying the details of the threat model. The restriction on these adversarial queries is not arbitrary; allowing them would enable the adversary to trivially break the security of MSKE. Thus, we assume any successful adversary does not violate freshness predicates to prevent such trivial attacks. 

To demonstrate, we allow the adversary in Figure~\ref{fig:ATTACK} to violate the freshness predicates to perform an example Key Compromise Impersonation attack. The adversary (actions in \textcolor{red}{red}) issues a $\RevLongTermKey$ query to the responder, computes the initiator's encryption key, and forges a message to the responder, thereby breaking key indistinguishability. When defining $\freshpredicate$ we aim to capture the strongest security conditions provable for each of $\LINE$'s message keys given the protocol design---i.e., relative to the maximal combinations of secrets learned by the adversary. We then use the protocol structure to identify attacks that are impossible to prevent and exclude them by appropriately restricting the adversary.

Thus, $\LINE$ does not provide KCI resilience, nor PCS: if an adversary obtains the long-term keys of either party, they would be able to impersonate that party indefinitely. Thus, our freshness condition forbids the adversary from issuing long-term key reveals to the session owner or its communicating partner, as shown at left of Figure \ref{fig:fresh+match}. Additionally, once the first message is sent and received by either session, the pre-master secret $\pms$ is computed and stored locally, which is the only secret value used to generate the session keys used by the sessions. Thus, we must also forbid the adversary from issuing a $\RevState$ query to any session $\session$ such that $\session.\id = i$ and $\session.\peerid = \testsess.\peerid$ (or $\session.\id = \testsess.\peerid$ and $\session.\peerid = \session.\id$, respectively.). We formalize this notion in Definition \ref{def:fresh-LINE}.

\begin{definition}[$\LINE$ Freshness Predicate]
\label{def:fresh-LINE}
    Let $\session$ and $\bar{\session}$ be any session such that $\session.\id = \testsess.\id $ and $\bar{\session}.\id = \testsess.\pid$, respectively. For a session $\testsess$ in stage $s$ we say that $\testsess$ is  \emph{fresh} if $\freshpredicate(\testsess,s)=1$.      
\end{definition}

 However, we also need to restrict the adversary from issuing similar queries to its communicating partner, which we identify via a \textit{matching} condition. On a high level, a matching session $\session'$ to a session $\session$ is any session where the corruption of secrets of $\session'$ would impact $\session$. We use a prefix-matching definition here: a matching session is one where the transcript of messages maintained by the sessions are equal (to a certain point, in order to capture sessions where the adversary simply drops the final message) as shown at right of Figure \ref{fig:fresh+match}.

\begin{definition}[$\LINE$ Matching Predicate]\label{def:match}
For a session $\testsess$, we say that $\partsess$ is a \emph{matching session} in stage $s$ (or $\partsess$ \emph{matches} $\testsess$) if $\matchpredicate(\testsess,\partsess,s)=1$. 
\end{definition}

In Theorem \ref{thm:line-secure} we prove that $\LINEtwo$ indeed achieves key-indistinguishability security against an adversary limited by our freshness predicate (Definition \ref{def:fresh-LINE}). Our definitions of cryptographic primitives can be found in the Appendix \ref{appen:CP}.

%\iffullversion
\begin{theorem}[Key Indistinguishability Security of $\LINEtwo$]\label{thm:line-secure}
Let $\LINEtwo$ be the $\LINE$ protocol version 2 described in Figure \ref{fig:line-vers-2}. Assuming that the $\ddh$ assumption holds, modelling $\SHAtwo$ as a $\PRF$\footnote{Following the methodology of Bergsma et al. \cite{PRF-def}, we model the keyed use of SHA-256 as a PRF. One could also model $\SHAtwo$ as a random oracle \mehr{(as in the proof of Theorem \ref{thm:linedr-secure})}, to which the proof proceeds near identically, without needing a hash collision-resistance replacement.}, and $\SHAtwo$ is collision-resistant, then $\LINEtwo$ is $\msind$-secure. In particular, there exist algorithms $\bdversary_1, \ldots, \bdversary_3$ described in the proof of the theorem, such that, for all PPT adversaries $\adversary$, we have: $\Adv{\msind}{\LINE, \numParties,\numSessions, \adversary}  \leq \Adv{\coll}{\Hash,\adversary} (\bdversary^{\adversary}_1) + \numParties^2 \cdot \Big(\Adv{\ddh}{g,p,\adversary} (\bdversary^{\adversary}_2) + \Adv{\prf}{\PRF,\adversary} (\bdversary^{\adversary}_3)\Big)$.
\end{theorem}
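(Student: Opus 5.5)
We prove the statement with a standard sequence of games. Let $\Pr[S_j]$ denote the probability that $\adversary$ wins Game $j$. The bound should come from three hops: a collision hop, a guessing hop on the pair of parties, a DDH hop, and a PRF hop.

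\textbf{Game 0} is the original $\Exp{\msind}{\LINE,\numParties,\numSessions,\numMsgs}(\adversary)$. Hence $\Adv{\msind}{\LINE,\numParties,\numSessions,\adversary}=|2\Pr[S_0]-1|$.

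\textbf{Game 1} aborts if two distinct evaluations of $\SHAtwo$ on distinct inputs (in particular distinct $(\pms\|\salt\|``\keylbl")$ tuples) yield the same output. A reduction $\bdversary_1$ simulates Game 0 honestly, knowing all secrets, and outputs any such pair. This costs $\Adv{\coll}{\Hash,\adversary}(\bdversary_1)$. After this hop, distinct salts or distinct pre-master secrets give distinct keys. Matching sessions are then exactly those sharing $(\pms,\salt)$, so a $\RevSessKey$ on a non-matching session cannot leak the tested key.

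\textbf{Game 2} guesses in advance the (unordered) pair of parties $(u,v)$ such that the tested session $\testsess$ has owner $u$ and peer $v$. It aborts if the guess is wrong, which loses a factor $\numParties^2$. Because $\freshpredicate$ forbids $\RevLongTermKey$ on $u$ and $v$, and forbids $\RevState$ on any session between $u$ and $v$, the challenger never has to reveal $x$, $y$ or $\pms_{uv}$.

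\textbf{Game 3} replaces $\pms_{uv}=g^{xy}$, in every session between $u$ and $v$, by a uniformly random group element $g^z$. $\bdversary_2$ receives a DDH tuple $(g^x,g^y,g^z)$. It embeds $g^x,g^y$ as the long-term public keys of $u$ and $v$ and uses $g^z$ wherever $\pms_{uv}$ is needed. All other parties' keys are generated by $\bdversary_2$ itself. The reduction stays sound because freshness guarantees it never needs to answer a query requiring $x$, $y$ or $\pms_{uv}$. Any change in the adversary's behaviour therefore yields $\Adv{\ddh}{g,p,\adversary}(\bdversary_2)$.

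\textbf{Game 4} replaces every $\SHAtwo(\pms_{uv}\|\cdot)$ evaluation with a query to a $\PRF$ oracle keyed by the now-uniform $\pms_{uv}$; in the random branch this is a truly random function. $\bdversary_3$ answers all derivations of $\key_e$ for the pair $(u,v)$ through its oracle, on input $\salt\|``\keylbl"$. It answers $\Test$ with the oracle output on the tested salt. Revealed session keys of other salts are oracle outputs on different inputs. Same-salt replays are covered because those sessions are matching, so a reveal on them is excluded by freshness together with Game 1. In the random branch the tested key is independent of everything the adversary sees, so $\Pr[S_4]=1/2$. This hop costs $\Adv{\prf}{\PRF,\adversary}(\bdversary_3)$.

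Collecting the hops gives the stated bound.

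The main obstacle is Game 4: justifying that $\SHAtwo(\pms\|\salt\|``\keylbl")$ behaves as a PRF keyed by a \emph{group element} placed in the prefix, following \cite{PRF-def}, and handling sessions that reuse an adversarially chosen salt. I would resolve the latter by arguing that reuse makes those sessions matching, so freshness forbids revealing their keys, and Game 1 excludes cross-salt collisions.
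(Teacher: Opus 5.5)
Your hops are the paper's: collision, guessing, DDH embedding of $g^z$ as $\pms_{uv}$, and a PRF keyed by $\pms_{uv}$. The real difference is how much you guess, and which PRF notion lets you get away with it.

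\textbf{The paper's route.} Its Game~2 guesses the test session $i$ and stage $s$ as well as $u,v$. This is forced by its PRF hop, which uses the single-challenge experiment from the appendix. The reduction must spend its one challenge query on $\salt\|\keylbl$ at the moment the tested key is derived, since the key is needed immediately to encrypt. So it has to know in advance which derivation that is. This costs a factor $\numParties^2\numSessions\numStages$, so the paper's own chain of hops actually gives a weaker bound than the one stated in the theorem.

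\textbf{Your route.} You assume the standard notion in which the oracle is either $\PRF(\pms_{uv},\cdot)$ or a truly random function. You send every derivation for the pair through that oracle and answer $\Test$ with the value already obtained for the tested salt. Nothing about the test stage has to be fixed in advance. You therefore get exactly the stated $\numParties^2$ factor, with $\bdversary_2,\bdversary_3$ as named in the theorem. The only cost is in the assumption: deriving your notion from the appendix's single-challenge game needs a hybrid over the up to $\numSessions\numStages$ derivations per pair. That hybrid reintroduces the extra factor.

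\textbf{A justification to repair.} The paper relies on the same point implicitly, but your stated reason does not hold. Game~1 does not show that matching sessions are \emph{exactly} those sharing $(\pms,\salt)$. Collision resistance only gives that equal keys come from equal inputs; it says nothing about which sessions hold the tested salt.
\begin{itemize}
\item Two honest senders between $u$ and $v$ may sample the same salt. This is a birthday term that is missing from the bound.
\item The adversary can deliver the tested salt to another session of $u$ or $v$, for example as a replay inside a different session or at a different stage index. That session's transcript need not contain $\testsess.T[1,\ldots,s]$, so it derives the tested key without matching.
\end{itemize}
Excluding such sessions needs a partnering notion keyed on $(\pms,\salt)$, not an appeal to collision resistance. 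The paper's PRF hop makes the same silent assumption when it lets only the test session and its partner use the challenge input.
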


\begin{proof}
From Theorem \ref{thm:line-secure}, $\LINE$ is $\msind$-secure if we show that the advantage \\ $\Adv{\msind}{\LINE, \numParties,\numSessions, \adversary}$of any adversary ($\adversary$) of breaking the key indistinguishability security of $\LINE$  is negligible. 

In what follows, our proof proceeds as a series of game hops. In each game hop, we make an incremental change to the initial security game. In the final game, we show that the computed session keys are already uniformly random keys independent of the protocol flow. As a result, the adversary learns nothing by being returned a real-or-random key, and thus, their advantage is equal to simply guessing. We define $\Adv{\msind}{\gameref{X},\adversary}(\lambda)$ as the advantage of the adversary $\adversary$ in Game $X$.

\game{0} This is the original multi-stage key exchange game described in Definition \ref{def:ms-ind}. Thus we have $\Adv{\msind}{\LINE,\numParties,\numSessions}(\adversary) = \Adv{\msind}{\gameref{0},\adversary}(\lambda)$.

\game{1} In this game, we exclude hash collisions. Specifically, we introduce an abort event $\gameabort{1}$ that triggers whenever the challenger $\challenger$ would compute two hash outputs that collide on different inputs. We define reduction $\bdversary_1$ that acts identically to \gameref{0}, but at the beginning of the experiment, $\bdversary_1$ initiates a hash collision challenger $\challenger_{\coll}$. Whenever the reduction $\bdversary_1$ must compute a hash, $\bdversary_1$ also maintains a list $(in, \Hash(in))$. For each hash $\bdversary_1$ decides if there exists an entry $(in',\Hash(in'))$ such that $in \neq in'$ but $\Hash(in) = \Hash(in')$. If such a collision occurs, $\bdversary_1$ sends $in,in'$ to the hash collision challenger $\challenger_{\coll}$ and aborts the game. Note that the only change introduced to the security game execution is if a hash collision occurs. Thus $\Pr(\gameabort{1}) = \Adv{\coll}{\Hash}(\adversary)$, and we have 
$\Adv{\msind}{\gameref{0},\adversary}(\lambda) \leq \Adv{\msind}{\gameref{1},\adversary}(\lambda) + \Adv{\coll}{\Hash}(\adversary)$.

\game{2} In this game we guess the party owners of our $\Test(u,i,s)$ query, and the session $\testsess$ and stage $s$. Specifically, we introduce an abort event $\gameabort{2}$ that triggers whenever we guess incorrectly. At the beginning of the experiment we guess indexes $u \in [\numParties]$, $v \in [\numParties]$, $i \in [\numSessions]$, $s \in [\numStages]$ and abort if $\Test(u^*,i^*,s^*)$ is issued such that $\pi^{u^*}_i.\pid \ne v$ and $u \ne u^*$, $i^* \neq i$, $s^* \neq s$. Since there are at most $\numParties$ parties, $\numSessions$ sessions and $\numStages$ stages, this reduces the adversary's advantage of success by a factor of $\numParties^2 \cdot \numSessions \cdot \numStages$. Thus we have $\Adv{\msind}{\gameref{1},\adversary}(\lambda) \leq \numParties^2 \cdot \numSessions \cdot \numStages \cdot \Adv{\msind}{\gameref{2},\adversary}(\lambda)$. Note that from what follows, we know that by the freshness predicate $\freshpredicate$, the adversary cannot issue either $\RevLongTermKey(u)$ or $\RevLongTermKey(v)$ without violating $\freshpredicate$ and gaining no advantage. 

\game{3}
In this game, we replace the value $\pms = g^{xy}$ computed by the test session $\testsess$ and its matching partner (if one exists) with a random value $\rreplace{pms}$. By the definition of the MSKE game, the adversary cannot maliciously inject or modify the long-term keys of honest protocol participants. Any adversary $\adversary$ that can distinguish this game from the previous game can directly be used to construct an adversary $\bdversary_3$
 that can break the $\DDH$ assumption: let $(g, g^x, g^y , g^z)$ be the $\DDH$ challenge. We set the long-term key of $u$ as $g^x$, the long-term key of $v$ as $g^y$ and $\rreplace{\pms}=g^z$. If $z = xy$, then we
have $\pms = \rreplace{\pms}$, and we are in \gameref{2}, otherwise we are in \gameref{3}. Thus $\Adv{\msind}{\gameref{2},\adversary}(\lambda) \leq \Adv{\msind}{\gameref{3},\adversary}(\lambda) + \Adv{\ddh}{g,p,q}(\adversary) $.

\game{4}
In this game, we replace the honestly computed encryption keys $\key_e$ of $u$, $v$ sessions with uniformly random values $\widetilde{k_e}$ by modelling the hash function $\SHAtwo$ as a $\PRF$.
Specifically, we define a reduction $\bdversary_4$ that acts identically to the challenger $\challenger$ in \gameref{3}, with the following changes: At the beginning of the experiment, $\bdversary_4$ initialises a $\PRF$ challenger $\challenger_{\PRF}$. Whenever $\bdversary_4$ needs to compute a encryption key $\key_e = \Hash({\pms'}||\salt||\keylbl)$ such that $\widetilde{\pms}_{u,v} = \pms'$ (i.e. it is a session owned by $u$ (or v, respectively) with $\testsess.\pid = v$ (or $\partsess.\pid = u$, respectively), $\bdversary_4$ instead uses its $\PRF$ oracle for the derivation of $\key_e$, submitting the query $(\salt \|\keylbl)$ to $\challenger_{\PRF}$, and replaces the computation of $\key_e$ with the output of $\challenger_{\PRF}$, $\widetilde{\key_e}$. 
When computing $\key_e$ for the test session $\testsess$ or its matching partner $\partsess$ (if $\matchpredicate(\testsess,\partsess,s)=1$), $\bdversary_4$ issues the challenge query $(\salt \|\keylbl)$ to $\challenger_{\PRF}$, and replaces the computation of $\key_e^s$ with the output of $\challenger_{\PRF}$, $\widetilde{\key^s_e}$. Observe that, if the bit $b$ sampled by $\challenger_{\PRF}$ is $b=0$, then $\widetilde{k^s_e}$  is computed honestly, and $\widetilde{k^s_e} = \Hash(\widetilde{\pms}_{u,v},\mathit{salt} \|\mathtt{key})$, and thus we perfectly simulate \gameref{3}. 
Otherwise, $\widetilde{k^s_e} \getsr \bits^{|\PRF|}$, and we are in \gameref{4}. 
The simulation is sound because $\widetilde{\pms}_{u,v}$ is already uniformly random and independent of the protocol flow by the change in \gameref{3}. 
Any adversary that can detect the change between \gameref{3} and \gameref{4} can be used by the reduction $\bdversary_4$ to break the $\prf$ security of $\SHAtwo$, and thus we have:
$\Adv{\msind}{\gameref{3},\adversary}(\lambda) \leq \Adv{\msind}{\gameref{4},\adversary}(\lambda) + \Adv{\prf}{\PRF}(\adversary)$.

We note now that the response to the $\Test(u,i,s)$ query issued by the adversary $\adversary$ ($\widetilde{k^s_e}$) is now uniformly random and independent of the protocol flow regardless of the bit $b$ sampled by the challenger $\challenger$ at the beginning of the game, and thus $\Adv{\msind}{\gameref{4},\adversary}(\lambda) = 0$.
\end{proof}

At this point, we can trivially extend our proof to show that $\LINEtwo$ achieves message authentication and message confidentiality. We showed that any key $\key_e$ used to encrypt the messages from $\LINEtwo$ is indistinguishable from a randomly-sampled key from the same distribution (via Theorem \ref{thm:line-secure}). These keys are then used in $\AESGCM$ to encrypt message content. Thus, assuming that $\AESGCM$ achieves $\AEAD$ security, any adversary that could forge a message against $\LINEtwo$ can thus be trivially transformed into an efficient adversary against the $\auth$-security of $\AESGCM$, and any adversary that could distinguish encrypted messages could be used to break $\indcca$-security. 

However, as Kampankis \etal~\cite{practicalaesgcm} highlight, $\AESGCM$ is not resilient with respect to $\IV$ misuse and re-use. We note here that $\LINEtwo$ uses a semi-deterministic $\IV$ construction: the $\nonce$ used as $\IV$ in $\AESGCM$ is a 64-bit counter concatenated with a 32-bit random value. Thus, users must send at least $2^{64}$ messages before any $\nonce$ re-use is possible in their own computed ciphertexts. However, we note that \emph{both sides} initialise their counter at 0, and use the same $\pms_{AB}$ as the basis for their key derivation. This means collisions on $\ctr_{AB}$ and $\ctr_{BA}$ are common. Thus, to provide an upper bound on the probability of collision, we need to consider the event that either party collides on both $\salt \in \bits^{128}$ and $\rand \in \bits^{32}$. Since we model the keyed derivation of $\key_e$ using $\SHAtwo$ as a collision-resistant hash function, colliding on $\salt$ will derive the same key but will generate distinct outputs otherwise. Functionally, the event that either party collides on both $\salt \in \bits^{128}$ and $\rand \in \bits^{32}$ is the same probability of colliding on a single $160$-bit string. In order to prove message authentication, we reason about the probability of such a collision in Theorem \ref{thm:line-auth-secure}.

%\iffullversion
\begin{theorem}[Message Authentication of $\LINE$]\label{thm:line-auth-secure}
Let $\LINEtwo$ be the protocol described in Figure \ref{fig:line-vers-2}. Assuming that the $\ddh$ assumption holds, $\SHAtwo$ is $\prf$-secure, $\SHAtwo$ is collision-resistant, the $\AEAD$ scheme is $\auth$-secure, then $\LINEtwo$ achieves message authentication. There exist algorithms $\bdversary_1, \ldots, \bdversary_4$ described in the proof of the theorem, such that, for all PPT adversaries $\adversary$, we have: $\Adv{\auth}{\LINE, \numParties,\numSessions, \adversary}  \leq  \Adv{\coll}{\Hash,\adversary} (\bdversary^{\adversary}_1) + \numParties^2 \cdot \numSessions \cdot \numStages \cdot (\Adv{\ddh}{g,p,\adversary} (\bdversary^{\adversary}_2) 
+ \Adv{\prf}{\PRF,\adversary} (\bdversary^{\adversary}_3)
+ \Adv{\auth}{\AEAD,\adversary} (\bdversary^{\adversary}_4) + 2\cdot\numStages \cdot 2^{-160})$.
\end{theorem}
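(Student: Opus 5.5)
We would prove Theorem~\ref{thm:line-auth-secure} as a sequence of game hops that reuses the structure of the proof of Theorem~\ref{thm:line-secure}, and then adds two final steps.
\begin{itemize}
\item A bad event excluding $(\salt,\rand)$ collisions.
\item A reduction to the $\auth$-security of the $\AEAD$ scheme.
\end{itemize}
The authentication experiment is the same as the $\msind$ experiment, except for the winning condition. The adversary wins if a fresh session (in the sense of Definition~\ref{def:fresh-LINE}) accepts a ciphertext $(\vers,\ctype,\salt,c,\nonce,\kid_A,\kid_B)$ in some stage $s$ that was never output by the honest peer session with matching $\AD$.

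\textbf{Game 0 and Game 1} are as before. We abort on any $\SHAtwo$ collision, which costs $\Adv{\coll}{\Hash,\adversary}(\bdversary_1)$. After this hop, distinct $(\pms\|\salt\|\keylbl)$ inputs yield distinct keys $\key_e$.

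\textbf{Game 2: guessing.} We guess $u,v\in[\numParties]$, the session $i\in[\numSessions]$ and the stage $s\in[\numStages]$ in which the forgery is accepted. This loses a factor $\numParties^2\numSessions\numStages$. Freshness then forbids $\RevLongTermKey(u)$, $\RevLongTermKey(v)$, and $\RevState$ on any $(u,v)$ or $(v,u)$ session.

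\textbf{Game 3: replacing $\pms$.} As in Game~3 of Theorem~\ref{thm:line-secure}, we embed the $\DDH$ challenge $(g^x,g^y,g^z)$ as the long-term keys of $u$ and $v$. We replace $\pms_{uv}$ with $g^z$, at cost $\Adv{\ddh}{g,p,\adversary}(\bdversary_2)$.

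\textbf{Game 4: replacing $\key_e$.} We replace every $\key_e=\SHAtwo(\widetilde{\pms}\|\salt\|\keylbl)$ used between $u$ and $v$ with the output of a $\PRF$ oracle keyed by $\widetilde{\pms}$. The oracle is lazily sampled per distinct salt, at cost $\Adv{\prf}{\PRF,\adversary}(\bdversary_3)$. After this hop, each distinct salt used by $u$ or $v$ corresponds to an independent uniform key.

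\textbf{Game 5: excluding salt-and-nonce repeats.} We abort if two honest encryptions between $u$ and $v$ reuse the same pair $(\salt,\rand)$. Both directions derive keys from the same $\pms_{uv}$, and both counters start at $0$, so counter equality cannot be relied on. As argued before the theorem, the relevant event is therefore a collision on a single 160-bit uniform string. A union bound over the at most $\numStages$ encryptions of each of the two parties (both directions) gives at most $2\cdot\numStages\cdot 2^{-160}$ per guessed stage. We take the bound in the form stated in the theorem. A rigorous birthday bound would be $\numStages^2 2^{-160}$; we would adopt the stated form by bounding the collision of the targeted stage's pair against all others.

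\textbf{Game 6: reduction to AEAD.} We build $\bdversary_4$, which embeds its $\auth$ challenger's key as the key $\widetilde{\key_e}$ for the salt used in the guessed stage $s$. All encryptions and decryptions under that salt are answered through the challenger's oracles; all other keys are simulated locally.

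The main obstacle is this last step. A replayed ciphertext, or one whose $\AD$ is altered, must correspond to a fresh (message, $\AD$, nonce) triple under that key. Otherwise the replay would not count as a forgery, and the winning condition must be phrased so that it excludes exact replays: stateless $\LINEtwo$ cannot prevent them. Game~5 guarantees that distinct honest encryptions under $\widetilde{\key_e}$ use distinct nonces, so the simulation respects the nonce-respecting restriction of the $\AEAD$ game. An accepted non-honest ciphertext under that key is then a valid $\AEAD$ forgery, which bounds the final game by $\Adv{\auth}{\AEAD,\adversary}(\bdversary_4)$.

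Summing the losses of all hops yields the claimed bound.
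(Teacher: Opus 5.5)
Your proposal is correct and follows the paper's proof essentially hop for hop: the $\SHAtwo$ collision abort, guessing $(u,v,i,s)$, the $\DDH$ replacement of $\pms$, the $\PRF$ replacement of $\key_e$, the abort on a repeated $(\salt,\rand)$ pair at cost $2\cdot\numStages\cdot 2^{-160}$, and the final reduction to the $\auth$-security of the $\AEAD$ scheme. Your extra remarks do not change the argument; they only make explicit points the paper leaves implicit, namely that an honest birthday bound would be $\numStages^2 2^{-160}$ and that exact replays must be excluded from the winning condition.
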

\begin{proof}
    Our proof begins as in the proof of Theorem \ref{thm:line-secure}, and proceeds identically to \gameref{4}. Thus, we have:
    $\Adv{\auth}{\gameref{3},\adversary}(\lambda) \leq \Adv{\auth}{\gameref{4},\adversary}(\lambda) + \Adv{\prf}{\PRF}(\adversary)$.

    \game{5}
    In this game, we introduce an abort event that triggers when the guessed party $i$ collides on the same $\salt$, $\rand$ strings in some stage $s$ as their guessed communicating partner $j$ sharing $\pms_{ij}$. Since we guessed the stage, session and communicating parties in which $\adversary$ broke message authentication, this is simply the probability that $i$ collides with $j$ in any of their stages and thus:
    $\Adv{\auth}{\gameref{4},\adversary}(\lambda) \leq \Adv{\auth}{\gameref{5},\adversary}(\lambda) + 2\times\numStages \cdot 2^{-160}$.

    \game{6}
    In this game, we introduce an abort event that triggers when any session decrypts a ciphertext validly using $\rreplace{\key_e}$, but the ciphertext was not produced by $\testsess$, thus breaking message authentication. Specifically, we introduce a reduction $\bdversary_4$ that acts identically to the challenger in \gameref{5}, except when encrypting using $\rreplace{\key_e}$. Instead, $\bdversary_4$ initializes an $\AEAD$ $\auth$ challenger $\challenger_{\auth}$, and queries the plaintext to be encrypted to $\challenger_{\auth}$ instead. Note that by \gameref{4} $\rreplace{\key_e}$ is a uniformly random and independent value, which by \gameref{5} never collides with another key using the same $\IV$. We note that it is possible for a key to collide within the same session: in this case, $\bdversary_4$ will simply forward the plaintext to $\challenger_{\auth}$. If the abort event triggers, then the adversary has managed to forge an $\AEAD$ ciphertext that was not output by $\challenger_{\auth}$, and thus we can submit the ciphertext to win the $\AEAD$ $\auth$ security game. Thus we have:
    $\Adv{\auth}{\gameref{6},\adversary}(\lambda) \leq \Adv{\auth}{\gameref{6},\adversary}(\lambda) + \Adv{\auth}{\AEAD}(\adversary)$.
    Since we now abort before $\adversary$ triggers the winning condition, $\Adv{\auth}{\gameref{6},\adversary}(\lambda)=0$ and we find our bound.
 \end{proof}

\mehr{We emphasize that our security model is designed to maintain close correspondence to the $\LINEtwo$ protocol's actual deployment, including its inherent limitations. Consequently, the model explicitly permits replay attacks: replayed ciphertexts do not violate freshness conditions and therefore remain admissible in the model. However, the matching condition checks for matching messages within a transcript and thus disallows reveal keys for \textit{any} replayed ciphertexts. This concretely reflects $\LINEtwo$'s lack of replay resilience. By capturing these weaknesses rather than abstracting them away, our model provides an assessment of the protocol's actual security properties.}

\subsection{Security of $\LINEDR$}

\subsubsection{Matching and Freshness Predicates of $\LINEDR$} We now turn to analysing the security of our new $\LINEDR$ variant. We first introduce new matching and freshness conditions that capture the improved security guarantees of $\LINEDR$.

\begin{wrapfigure}{r}{0.35\textwidth}
\centering \procedureblock{$\freshpredicate(\testsess,s)$}
    {\pcif \valid(u,i,s) \land \freshpredicate_{\tau}(u,i,s) \\
    \quad \pcreturn 1
    }
    \caption{$\LINE$ Double Ratchet Freshness Predicate}
    \label{fig:FreshDR}
\end{wrapfigure}
Since our construction is inspired by the double ratchet protocol, our freshness conditions for $\LINEDR$ is a modification of the general multi-stage AKE model of Cohn-Gordon et al. for Signal \cite{signal-1}.
A test session $\testsess$ at stage $s$ is considered fresh if and only if it is both $\valid$ and $\freshpredicate$ as shown in Figure \ref{fig:FreshDR}. The validity predicate $\valid(u,i,s)$ prevents the most trivial attack: it ensures that the adversary cannot win key-indistinguishability simply by revealing the session key of the tested stage or of any partnered session.
\begin{align*}\valid(u,i,s) \equiv (\testsess.\status[s] = \accepted) \land (\neg \testsess.\revsesskey[s]) \land \\
    (\forall j: \matchpredicate(\testsess, \partnersess, s) = 1 \Rightarrow \neg\partnersess.\revsesskey[s])
\end{align*}

Within $\freshpredicate$ we introduce a per-stage freshness predicate $\freshpredicate_{\tau}$. Since $\LINEDR$, like Signal, consists of three types of stages (initial, asymmetric, and symmetric stages depending on the ratcheting that occurs), $\freshpredicate_{\tau}$ branches similarly i.e. $\tau \in \{\initial,\asym,\sym\}$. To capture this complexity, we modify our stage counter $s$: instead of simply an integer as before, our stage counter is now a pair of integers $s=[x,y]$ (where $x,y \in \mathbb{N}$), $x$ indexes the current asymmetric ratchet and $y$ indexes the current symmetric ratchet. Thus $s=[0,0]$ indicates the $\initial$ stage, $s=[x,0]~ (x > 0)$ indicates an $\asym$ stage, and finally $s=[x,y]~ (y > 0)$ indicates a $\sym$ stage. We now describe the freshness predicate $\freshpredicate_{\tau}$.

During the \textbf{Initial Stage} ($s = [0,0])$ we require that neither of the long-term keys have been revealed ($\freshpredicate_{\LL}$) or that neither the initiator's ephemeral key nor the responder's long-term key has been revealed ($\freshpredicate_{\EL}$). This is captured in $\freshpredicate_{\initial}(u,i,[0,0]) = \freshpredicate_{\LL}(u,i) \lor \freshpredicate_{\EL}(u,i,[0,0])$ where $\freshpredicate_{\LL}(u,i) = \neg \revltk_u \land \neg\revltk_{\testsess.\pid}$, and when $\matchpredicate(\testsess,\partsess,s)=1$:
$$\freshpredicate_{\EL}(u,i,s) = \begin{cases}
    \pcif \testsess.\role =\init: \neg\testsess.\revrand[0,0] \land \neg \revltk_v\\
    \pcif \testsess.\role = \resp: \neg\partnersess.\revrand[0,0] \land \neg \revltk_u
\end{cases}$$

At the \textbf{Asymmetric Stages} ($s=[x,0], x>0$), we require that neither of the ephemeral secrets sampled by either party used in the asymmetric stage is revealed ($\freshpredicate_{\EE})$, \emph{or} the previous asymmetric stage was fresh (i.e. $\freshpredicate_{\asym}(u,i,[x-1,0])$ when $x>1$, or $\freshpredicate_{\initial}(u,i,[0,0])$), \emph{and} neither party's session state has been revealed at this stage ($\freshpredicate_{\state}$). Thus, $\freshpredicate_{\state}(u,i,[x,y]) = \neg\testsess.\revstate[x,y] \land  
(\forall j : \matchpredicate(\testsess , \partnersess , [x,y]) = 1 \Rightarrow \neg\partnersess.\revstate[x,y]).$ %(i.e. $s=[1,0]$). $\freshpredicate_{\tau}(u,i,s) = \freshpredicate_{\asym}(u,i,[1,0])$

\begin{comment}
    $ \freshpredicate_{\asym}(u,i,[1,0])= \begin{cases}
    \quad if \testsess.\role = \init : \\
    \quad \quad \neg \testsess.\revrand[0,0] \land \\
    \quad \quad \neg \partnersess.\revrand[1,0] \\
    \quad if \testsess.\role = \resp : \\
    \quad \quad \neg \testsess.\revrand[1,0] \land \\
    \quad \quad \neg \partnersess.\revrand[0,0] \\
    \quad \lor (\freshpredicate_{\initial}(u,i,[0,0]) \\
    \quad \land \freshpredicate_{\state}(u,i,[1,0]))\\
\end{cases}  $

where

\textbf{Non-initial Asymmetric Stage where $s=[x,0] \land x\ge2$ :} Freshness at these stages is defined recursively, ensuring that each new asymmetric key exchange derives from unrevealed randomness or a clean prior state. Specifically,$\freshpredicate_{\tau}(u,i,s) = \freshpredicate_{\asym}(u,i,[1,0])$
here 
\end{comment}

$$ \freshpredicate_{\asym}(u,i,[x> 0,0]) = \freshpredicate_{\EE}(u,i,[x,0]) \lor 
\begin{cases} 
\freshpredicate_{\state}(u,i,[x-1,0])) \land\\
x>1: (\freshpredicate_{\asym}(u,i,[x-1,0])\\
x=0: (\freshpredicate_{\initial}(u,i,[0,0])
\end{cases}
$$

In $\LINEDR$ parties "take turns" to perform an asymmetric ratchet (for example, responders ratchet in odd stages), thus the ephemeral secrets our freshness predicate checks depend on the session's role. If the current stage index $x$ is even, we require that the initiator's current ephemeral secret and the responder's previous ephemeral secret have not been revealed. If odd, vice-versa. Let $b = (\testsess.\role == \init) \oplus (x\%2==0)$, and let $\partsess$ be any session such that $\matchpredicate(\testsess,\partsess,x)=1$, then
$\freshpredicate_{\EE} = \neg \testsess.\revrand[x-b,0] \land \neg \partsess.\revrand[x-\lnot b,0]$.

In \textbf{Symmetric Stages} ($s=[x,y>0]$), freshness is also defined recursively. Since the symmetric stage session keys are derived deterministically from the previous symmetric stage's state, we require that the previous stage is fresh and that the session state has not been revealed. Logically, this eventually requires that either the initial stage is fresh ($\freshpredicate_{\initial}$), or that the current asymmetric stage is fresh ($\freshpredicate_{\asym}$).  Specifically, $\freshpredicate_{\sym}(u,i,[x,y])= \freshpredicate_{\state}(u,i,[x,y-1]) ~\land \freshpredicate_{\sym}(u,i,[x,y-1])$.

\begin{theorem}[Key Indistinguishability Security of $\LINEDR$]\label{thm:linedr-secure}
Let $\LINEDR$ be the $\LINE$ double ratchet protocol described in Figure \ref{fig:line-dblrat-version}. Assuming that the hash function is collision-resistant, the $\GDH$ assumption holds and assuming all $\KDF$s are random oracles, then $\LINEDR$ is $\msind$-secure.
\end{theorem}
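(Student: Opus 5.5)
We follow the structure of the Signal analysis of Cohn-Gordon et al.~\cite{signal-1}, adapted to the freshness predicate of Figure~\ref{fig:FreshDR}. The proof is a sequence of game hops; the final game returns a uniformly random key to $\Test$ regardless of the challenge bit.

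\textbf{Preliminary hops.} Game~0 is $\Exp{\msind}{\LINEDR,\numParties,\numMsgs,\numSessions}(\adversary)$.

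In Game~1 we abort on any collision of the hash function, losing $\Adv{\coll}{\Hash}$. We also abort on a collision among honestly sampled ephemeral DH values, losing at most $(\numParties\numSessions\numStages)^2/p$. After this, every asymmetric stage is bound to a unique pair of ephemeral public keys.

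In Game~2 we guess the test session $\testsess$, its owner $u$, its peer $v$, and the tested stage $s=[x,y]$. This costs a factor $\numParties^2\numSessions\numStages$, where $\numStages$ now bounds the number of pairs $[x,y]$.

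\textbf{Case split on the root of freshness.} Unfolding the recursion in $\freshpredicate_{\sym}$ and $\freshpredicate_{\asym}$, a fresh test stage $[x,y]$ inherits freshness from an earliest stage $[x',0]$ with $x'\le x$, under one of three conditions:
\begin{enumerate*}[(i)]
\item $x'=0$ and $\freshpredicate_{\LL}$ holds;
\item $x'=0$ and $\freshpredicate_{\EL}$ holds;
\item $x'>0$ and $\freshpredicate_{\EE}$ holds at $[x',0]$.
\end{enumerate*}
In every case, $\freshpredicate_{\state}$ holds at each intermediate stage between $[x',0]$ and $[x,y]$. We guess which case applies and the index $x'$, losing another factor $\numStages$.

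\textbf{Case-specific hops.} In each case, one DH value entering the $\KDF$ at stage $[x',0]$ involves two secrets the adversary never learned:
\begin{enumerate*}[(i)]
\item $g^{xy}$ from the long-term keys;
\item $g^{ay}$, from the initiator ephemeral $a$ and responder long-term key $y$;
\item $g^{ab}$ from the two ratchet ephemerals.
\end{enumerate*}
We embed a $\GDH$ challenge $(g^\alpha,g^\beta)$ into these two secrets. We then replace the $\KDF$ output at $[x',0]$ with uniformly random $(\rootkey,\chainkey)$.

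Because the $\KDF$ is a random oracle, the game changes observably only if $\adversary$ queries the oracle on the true value $g^{\alpha\beta}$. In that case the reduction finds it among the queries, using its $\mathsf{DDH}$ oracle to test candidates, and so solves $\GDH$.

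The $\mathsf{DDH}$ oracle is also needed to simulate sessions that use $\alpha$ or $\beta$ with adversarially chosen peer keys. The reduction keeps the random-oracle table consistent by lazily programming entries of the form $\KDF(Z,\cdot)$ with $Z=\mathrm{DH}(g^\alpha,W)$, checking each candidate $Z$ with the oracle. The loss is $\Adv{\GDH}{}(\bdversary)$ per case.

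\textbf{Propagation along the chain.} We then hop stage by stage from $[x',0]$ up to $[x,y]$. Each root key or chain key already replaced by a random value is used only as a random-oracle input. Since $\freshpredicate_{\state}$ guarantees it is never revealed, $\adversary$ can query it only by guessing, with probability at most $q_{\RO}\cdot 2^{-\lambda}$. We can therefore replace $\msgkey$, the next $\chainkey$, and the next $\rootkey$ by fresh random values.

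The matching session computes the same values from the same inputs. The uniqueness obtained in Game~1 ensures no unmatched session derives the same key. Since $\valid$ forbids revealing the session key at this stage and at its partners, the $\Test$ response is independent of $b$.

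\textbf{Expected main obstacle.} The hardest step is the $\GDH$ embedding in case~(iii). The ephemeral $b$ of the responder is generated on receipt of a message. It is then used both for the current asymmetric step and, with the next incoming ephemeral (which may be adversarial), for the following step.

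Simulating these derivations requires careful lazy random-oracle programming with the $\mathsf{DDH}$ oracle. We must also check that the parity convention in $\freshpredicate_{\EE}$, via $b=(\testsess.\role==\init)\oplus(x\%2==0)$, matches exactly which ephemerals feed each $\KDF$ invocation in Figure~\ref{fig:line-dblrat-version}.

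Summing the losses gives a bound of the form
\[
\Adv{\coll}{\Hash}+\numParties^2\numSessions\numStages^2\left(3\,\Adv{\GDH}{}+\numStages\, q_{\RO}2^{-\lambda}\right)+\mathrm{negl},
\]
which is negligible.
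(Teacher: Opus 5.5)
Your plan is essentially the paper's proof, reorganised. Both use a DH-uniqueness hop, guess the test session and peer, and split on whether freshness is rooted in $\freshpredicate_{\LL}$, $\freshpredicate_{\EL}$ or $\freshpredicate_{\EE}$. Both then embed $\GDH$, using the DDH oracle to keep the random oracle consistent, and propagate randomness along root and chain keys. Guessing the root index $x'$ up front is a tidier packaging of the paper's Cases~1--3, and its Case~3 is exactly your propagation step.

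One step, however, fails as written: ``the uniqueness obtained in Game~1 ensures no unmatched session derives the same key.'' Game~1 makes honestly \emph{generated} public keys distinct, but it says nothing about how many sessions \emph{receive} a given key. The attack goes as follows.
\begin{enumerate}
\item Take the first message of an honest initiator session $\pi^A_1$ and deliver it to two responder sessions $\pi^B_1,\pi^B_2$ of Bob with peer Alice.
\item Bob samples no randomness before the stage-$[0,\cdot]$ keys. So both sessions compute $\rootkey_{0},\chainkey_{[0,0]}\gets\KDF(g^{ay},g^{xy})$, hence identical $\msgkey_{[0,y]}$ for every $y$, and both decrypt and accept.
\item $\matchpredicate$ requires different roles, so $\pi^B_2$ never matches $\pi^B_1$. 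Hence $\valid$ permits $\RevSessKey$ on $\pi^B_2$ at $[0,0]$.
\item $\freshpredicate_{\LL}$ still holds for $\pi^B_1$, since no long-term key or randomness has been revealed. Revealing $\pi^B_2$ and testing $\pi^B_1$ wins with advantage $1$.
\end{enumerate}
So your closing ``the $\Test$ response is independent of $b$'' step is false for responder-side tests at stages $[0,y]$. From Bob's first fresh ratchet key onward the $\KDF$ inputs are session-unique, and your argument goes through there.

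The paper's proof never considers non-matching sessions that hold the tested key, so it has the same hole. Its remark that strict index progression gives replay protection holds only within a session. This is a defect of the statement/model, not something a better game hop can repair. You would need one of the following:
\begin{itemize}
\item a partnering or validity condition that also covers same-role sessions that accepted the same incoming message (a Match-security-style requirement);
\item an explicit exclusion of such tests from freshness;
\item a protocol change, such as responder randomness or a replay cache, before stage-$[0,\cdot]$ keys are accepted.
\end{itemize}
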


\begin{proof}

From Theorem \ref{thm:linedr-secure}, $\LINEDR$ is $\msind$-secure if we show that the advantage $\Adv{\msind}{\LINEDR,\numParties,\numSessions,\adversary}$ of any adversary ($\adversary$) of breaking the key indistinguishability security of $\LINEDR$ is negligibly small.
In what follows, our proof proceeds as a series of game hops. In each game hop, we make an incremental change to the initial security game. In the final game, we will be able to show that the computed session keys are already uniformly random keys independent of the protocol flow. As a result, the adversary learns nothing by being returned a real-or-random key, and thus, their advantage is equal to simply guessing. We define $\Adv{\msind}{\gameref{X}, \adversary}(\lambda)$ as the
advantage of the adversary $\adversary$ in $\gameref{X}$.
\game{0} 
This is the original multi-stage key exchange game described in Definition \ref{def:MSKE}. Thus we have $\Adv{\msind}{\LINEDR, \numParties,\numSessions, \adversary} (\lambda) = \Adv{\msind}{G0, \adversary} (\lambda) $.
\game{1}
In this game, we ensure that no collisions occur among honestly generated Diffie-Hellman ($\diffihell$) public keys. Specifically, the challenger $\challenger$ maintains a list $\PKL$ of all $\diffihell$ public values (for $g^x, g^y, g^a, g^b$) honestly generated during the game. If a $\diffihell$ public value appears twice, $\gameabort{1}$ event occurs and the adversary automatically loses. Otherwise we add the $\diffihell$ public key to $\PKL$: $(\ifsm ~\diffihell.\KGen \in \PKL: \gameabort{1}  ~\textbf{else}: \diffihell.\KGen \tou \PKL)$.

For an adversary's execution during the game, let $\numParties$ denote the total number of parties, $\numSessions$ the maximum number of sessions, and $\numStages$ the maximum number of stages. We note that each party generates a new ephemeral/ratchet key for each session, and possibly for each stage. This means a total maximum of $\numParties\numSessions\numStages$ $\diffihell$ keys in the list $\PKL$, every pair of which must not collide. There are $\binom{|\PKL|}{2}$ such pairs of $\diffihell$ keys to consider in the game. Each $\diffihell$ key in $\PKL$ is in the same group of order $q$, therefore any key collides with another key in $\PKL$ with probability $1 / q$. Thus, we have the following bound: $\Adv{\msind}{G0, \adversary} (\lambda) \leq \frac{\binom{(\numParties\numSessions\numStages)}{2}}{q} + \Adv{\msind}{G1, \adversary} (\lambda) $.
We now know that from this game onwards, each honestly generated $\diffihell$ public key is unique. In future game hops, we will replace certain $\diffihell$ values with ones sampled by a $\GDH$ challenger; this means that if these replacement values collide, we trigger the abort event $\gameabort{1}$.
\game{2}
In this game, the challenger guesses in advance the session $\testsess$ against which the $\Test(u', i', s')$ query is issued: the challenger guesses a pair of indices $(u,i,s) \in [\numParties \times \numSessions \times \numStages]$. Let $T$ be the event that the adversary issues a test query $\Test(u', i', s')$ where $(u',i',s') \neq (u,i,s)$. In this game, we abort if event $T$ occurs; it is a transition based on a large failure event.
$T$ will occur with probability $1 / \numSessions\numParties\numStages$ , and hence: $\Adv{\msind}{G1, \adversary} (\lambda) \leq \numParties\numSessions\numStages \cdot \Adv{\msind}{G2,\adversary} (\lambda) $.

From here on, we can assume the challenger's guessed session $(u,i,s)$ matches the adversary's $\Test(u',i',s')$; otherwise, the game aborts.
\game{3}

In this game, the challenger guesses in advance the partner session $\partsess$ against which the $\Test(u, i, s)$ query is issued: the challenger guesses a party $v \in [\numParties]$. Let $T$ be the event that the adversary issues a Test query $\Test(u
, i, s)$ where $\partsess.\peerid = v$. In this game, we abort if event $T$ doesn't occur; it is a transition based on a large failure event.
$T$ will occur with probability $1 / \numParties\numSessions$, and hence: $\Adv{\msind}{G2, \adversary} (\lambda) \leq \numParties\numSessions \cdot \Adv{\msind}{G3,\adversary} (\lambda) $.

At this point, we partition our analysis into four separate cases, each examining the adversary's advantage in winning under a certain freshness condition, with the ultimate aim of bounding $\Adv{}{G3}$ above. Since this is $\gameref{3}$, each different case begins with a hop to some $\gameref{4}$. 
Specifically, we split our analysis by examining the stage $s=[x,y]$ at which the adversary issues the $\Test(u,i,s)$ query. In $\LINEDR$, each stage index $s=[x,y]$ falls into exactly one of four cases:

\begin{enumerate}
    \item the initial stage $s=[0,0]$
    \item the asymmetric stages $s=[1,0]$ where $x\geq1$
    %\item all non-initial asymmetric stages $s=[x,0]$ where $x\geq2$
    \item the symmetric stages $s=[x,y]$ where $y\geq1$
\end{enumerate}

Given there are three cases capturing four distinct types of stages, then:

\begin{align*}\Adv{\msind}{G3, \adversary} (\lambda) \leq \Adv{\msind}{C1,\adversary} (\lambda) +& \Adv{\msind}{C2,\adversary} (\lambda)+\Adv{\msind}{C3,\adversary} (\lambda)
\end{align*}

\case{1} \textbf{Initial Stage ($\adversary$ issues $\Test(u,i,s)$ and $s=[0,0]$)}

In this case, we know that if $\Adv{\msind}{C1,\adversary} \geq 0$, then $\freshpredicate_{\initial}(\tested) = \true$. Otherwise, the experiment either aborts or its execution becomes independent of the challenge bit $b$. For $\freshpredicate_{\initial}(\tested) = \true$, then either $\freshpredicate_{\LL}$ or $\freshpredicate_{\EL}$ must be true. We note that while $\freshpredicate_{\LL}$ holds, the adversary cannot issue $\RevLongTermKey$ to either the test session or its communication partner. In $\freshpredicate_{\EL}$, the adversary cannot issue $\RevLongTermKey$ to the responder in the test session, and cannot issue $\RevRand$ to the initiator in the test session. 

We split in 2 subcases, each capturing the advantage of the adversary where one of these freshness predicates hold. Since one of these two conditions must hold, it follows that: $\Adv{\msind}{C1, \adversary} (\lambda)\leq \Adv{\msind}{C1.1, \adversary} (\lambda)+ \Adv{\msind}{C1.2, \adversary}(\lambda)$

\case{1.1} \textbf{Stage $s=[0,0]$ and $\freshpredicate_{\LL}(u,i,s)$ holds.} In this case we know that $\RevLongTermKey(u) \land \RevLongTermKey(v)$ not issued.

\game{4}

In this game, we replace the value $(g^y)^x$ used by the test session (and potentially, its communicating partner) as the KDF input for computing $(\rootkey_0,\chainkey_{[0,0]})$ with an independent random $g^z \gets \mathbb{G}$. We argue that if an adversary can distinguish this change, then we can build a reduction to break the $\GDH$ assumption. We introduce the following reduction $\bdversary$ that modifies the experiment in the following way:

At the beginning of the experiment, the reduction initialises a $\GDH$ challenger, which outputs $g^x$ and $g^y$, which the reduction uses to replace the long-term $\diffihell$ keys of $u$ and $v$, respectively. Now, whenever the adversary queries the random oracle with some input $in$, the reduction $\bdversary$ submits the Diffie-Hellman components ($g^x,g^y,in$) to the $\GDH$'s $\DDH$ oracle. If the $\DDH$ oracle returns $\trueflag$, then $in=g^{xy}$. In this case, the reduction submits $in$ to the $\GDH$ challenger as its $\GDH$ solution and wins, since $in = g^{xy}$. If the adversary never queries $g^{xy}$, then replacing the computation of $g^{xy}$ with $g^z$ (a uniformly random and independent value) is undetectable, since $g^{xy}$ is only used as input to the $\RO$, and $\RO$ outputs on unqueried inputs are uniformly random. The only way for the adversary distinguish this change then is to query $in$ to the $\RO$. However, doing so allows our reduction to solve the $\GDH$ challenge.

We note that $\bdversary$ no longer has access to the long-term secret keys of $u$ or $v$, respectively $x$ or $y$. Thus, whenever $\bdversary$ is required to compute using $x$ or $y$, it has to simulate these computations instead.  Sometimes, the reduction can simulate this computation using their access to all other secret keys in the experiment. Specifically, whenever $\bdversary$ is asked to compute $(g^t)^x$ for some other long-term public key $g^t$, it uses its knowledge of $t$ to compute $(g^x)^t$ instead. Since the adversary cannot inject long-term keys, so the reduction is sound.

However, long-term keys are sometimes combined with ephemeral keys $g^t$, which may be controlled by the adversary. Because $\bdversary$ does not know the exponents $x$ or $y$, it cannot directly compute the shared values of the form $g^{xt}$ or $g^{xt}$ that may arise when the adversary introduces its own ephemeral key $g^t$. To handle this, $\bdversary$ samples random group elements $g^z$ to represent each such computed $\diffihell$ value ($g^{xt}$, $g^{yt}$, etc.) and records the associated tuples $(g^a, g^t, g^z)$. Whenever the adversary queries the random oracle on some input $in$, $\bdversary$ checks whether this query corresponds to any stored tuple, i.e., uses the $\DDH$ oracle to check if $(g^{a}, g^t, in)$ is a valid $\DDH$ tuple. If so, the reduction reprograms the random oracle, such that whenever the adversary queries $in$ to the $\RO$, the reduction replaces it with a $\RO$ query $g^z$ instead. This ensures that if the adversary ever queries the computed $\diffihell$ value (such as $g^{xt}$ or $g^{at}$), the oracle's output remains consistent with the reduction's guessed values, preserving an indistinguishable simulation.

If the adversary can distinguish the change above, then the reduction solves $\GDH$. Thus: $\Adv{\msind}{G, \adversary} (\lambda)\leq \Adv{\msind}{G4,\adversary}(\lambda) + \Adv{\GDH}{\mathbb{G}, \adversary}(\lambda)$.

\game{5}

In this game, we replace the $\rootkey_0, \chainkey_{[0,0]} \getsr \bits^{\lambda}$ for the test session $\testsess$ (and its communication partner $\partsess$ in stage $s[0,0]$.  Because from $\gameref{4}$, we know that the adversary never queried the $\RO$ on the input $g^{xy}$. Thus, in the random oracle model, outputs on unqueried inputs are indistinguishable from uniform, and thus $\chainkey_{[0,0]}$ and $\rootkey_0$ are already uniformly random and independent values. We conclude that: $\Adv{\msind}{G4, \adversary} (\lambda)\leq \Adv{\msind}{G5, \adversary}(\lambda)$.

\game{6} In this game, we replace the message key $\msgkey_{[0,0]}$, and chain key $\chainkey_{[0,1]}$ computed by the test session (and its communicating partner) with uniformly random values. We note that these values are derived from a $\KDF$ computation using $\chainkey_{[0,0]}$ as input, i.e. $(\msgkey_{[0,0]}, \chainkey_{[0,1]}) = \KDF(\chainkey_{[0,0]})$. By \textbf{Game 5}, $\chainkey_{[0,0]}$ is already a uniformly random value independent of the experiment execution. Since we model $\KDF$ as a random oracle, it follows that message key $\msgkey_{[0,0]}$, and chain key $\chainkey_{[0,1]}$ is uniform too. We conclude that: $\Adv{\msind}{G5,\adversary} (\lambda) \leq \Adv{\msind}{G6,\adversary} (\lambda)$

Since the message key $\msgkey_{[0,0]}$ is uniformly random and independent of the experiment, then the output of the test query $\Test(u,i,s)$ is also uniformly random and independent of the experiment, regardless of the bit $b$. Hence, the adversary has no advantage in guessing the bit $b$ and thus we find $\Adv{\msind}{G6,\adversary} (\lambda)=0$.

We now turn to \caseref{1.2} where we know that $\RevLongTermKey(u) \land \RevRand(v)$ not issued.

\case{1.2} \textbf{Stage $s=[0,0]$ and $\freshpredicate_{\EL}(u,i,s)$ holds.}

Here $\freshpredicate_{\EL}$ holds, means that one party's initial ephemeral secret key is unrevealed and the peer's long-term secret is unrevealed. Therefore, the $_{\EL}$ Diffie-Hellman term (e.g $g^{ay}$) entering the initial $\KDF$ remains hidden to the adversary. Intuitively, this suffices for the initial keys to behave as random from the adversary's perspective, so security at the initial stage follows from the secrecy of this $_{\EL}$ term.

\game{4}
In this game, we resolve role symmetry by conditioning on the initiator role. Specifically, if the guessed test session's role is the initiator, we continue; otherwise, we abort the game.
The responder case is argued identically up to a change of notation. Equivalently, we may guess the role in advance and abort if the guess is wrong; the guess is correct with probability $\frac{1}{2}$. Thus: $\Adv{\msind}{G3, \adversary} (\lambda)  \leq 2\cdot \Adv{\msind}{G4,\adversary}(\lambda) $.

\game{5}
In this game, we replace the value $(g^y)^a$ used by the test session (and potentially, its communicating partner) as the KDF input for computing $(\rootkey_0,\chainkey_{[0,0]})$ with an independent random $g^z \getsr \mathbb{G}$. We argue that if an adversary can distinguish this change, then we can build a reduction to break the $\GDH$ assumption. We introduce the following reduction $\bdversary$ that modifies the experiment in the following way:

At the beginning of the experiment, the reduction initializes a $\GDH$ challenger and recieves $g^a$, $g^y$, and uses these to replace the test session's initial ephemeral secret key public key and the partner's long term public key, respectively. Because $\bdversary$ does not know the exponents $a$ or $y$, it cannot directly compute the shared value $g^{ay}$ or any derived values of the form $g^{at}$ or $g^{yt}$ that may arise when the adversary introduces its own ephemeral key $g^t$. To handle this, $\bdversary$ guesses random group elements to represent each such computed $\diffihell$ value ($g^{ay}$, $g^{at}$, $g^{yt}$, etc.) and records the associated tuples $(g^a, g^y)$ or $(g^a, g^t)$ and $(g^y, g^t)$ together with the guessed value. Whenever the adversary queries the random oracle on some input $in$, $\bdversary$ checks whether this query corresponds to any stored tuple, if so, reprograms the random oracle, so that the response is bound to the guessed value. This ensures that if the adversary ever queries the computed $\diffihell$ value (such as $g^{ay}$ or $g^{at}$), the oracle's output remains consistent with the reduction's guessed values, preserving an indistinguishable simulation. Whenever the adversary calls the Random Oracle with some input $in$, the reduction $\bdversary$ submits the Diffie-Hellman components ($g^a, g^y,in$) to the $\DDH$ oracle. If $\DDH$ returns $\trueflag$ then $in=g^{ay}$: the reduction outputs $in$ to the $\GDH$ challenger as its $\GDH$ solution and wins. If the adversary never queries $g^{ay}$, then replacing it with $g^z$ with uniformly random and independent value is undetectable since $\RO$ outputs on unqueried inputs are uniformly random and independent of the inputs. 

If the adversary does query $g^{ay}$, then the reduction solves $\GDH$. Thus $\gameref{3}$ and $\gameref{4}$ are indistinguishable unless $\GDH$ is solvable. Thus: $\Adv{\msind}{G4, \adversary} (\lambda)\leq \Adv{\msind}{G5,\adversary}(\lambda) + \Adv{\GDH}{\mathbb{G}, \adversary}(\lambda)$.

\game{6}

In this game, the experiment replaces the $\rootkey_0, \chainkey_{[0,0]} \getsr \bits^{\lambda}$ for $\Test(u,i,s)$ in stage $s=[0,0]$.  Because from $\gameref{4}$, we replace $g^{ay}$ with $g^z$ where $g^z$ is uniformly random and independent of the protocol flow. Thus, in the random oracle model, outputs on unqueried inputs are indistinguishable from uniform, and hence this hop is indistinguishable. We conclude that: $\Adv{\msind}{G5, \adversary} (\lambda)\leq \Adv{\msind}{G6, \adversary}(\lambda)$.

\game{7}

In this game, we replace $(\msgkey_{[0,0]}, \chainkey_{[0,1]}) = \KDF(\chainkey_{[0,0]})$ with uniformly random, independent values. Since in \gameref{6} the input $\chainkey_{[0,0]}$ is already uniform and independent, and the $\KDF$ is modeled as a random oracle, these outputs are indistinguishable from uniform. We conclude that: $\Adv{\msind}{G6,\adversary} (\lambda) \leq \Adv{\msind}{G7,\adversary} (\lambda)$.

Since the message key is uniformly random and independent of the protocol flow, and hence the adversary has no advantage in guessing and winning the game. So our \caseref{1} bound is: 

$$\Adv{\msind}{C1,\adversary} (\lambda) \leq 3 \cdot \Adv{\GDH}{\mathbb{G},\adversary} (\lambda)$$

\case{2} \textbf{Asymmetric Stage ($\adversary$ issues $\Test(u,i,s)$ and $s=[x,0]$ where $x\ge1$)}

In this case, we assume the test session at stage $s=[x,0]$ is fresh. We know that for $\Adv{\msind}{C2,\adversary} \geq 0$, then $\freshpredicate(\tested) = \true$, since otherwise the execution of the experiment is independent of the challenge bit $b$ sampled by the challenger, or the game aborts. For $\freshpredicate(\tested) = \true$, either $\freshpredicate_{\initial}$ (as shown in \caseref{1}, the session is fresh when the adversary has not revealed any long-term or ephemeral keys.) or $\freshpredicate_{\EE}$ holds. In $\freshpredicate_{\EE}$, the adversary cannot issue $\RevRand$ to either the test session or its communication partner. We split in 2 subcases, each capturing the advantage of the adversary where one of these freshness predicates hold. Since one of these two conditions must hold, it follows that:
$\Adv{\msind}{C2, \adversary} (\lambda)\leq \Adv{\msind}{C2.1, \adversary} (\lambda)+ \Adv{\msind}{C2.2, \adversary}(\lambda)$.

\case{2.1} \textbf{Stage $s=[1,0]$ and  $ \freshpredicate_{\initial}(u,i,s)$ holds.}
We note that for $\freshpredicate_{\initial}$ from \caseref{1}, either $\freshpredicate_{\LL}$ (both long-term keys unrevealed) or $\freshpredicate_{\EL}$ (one side's initial ephemeral secret key and the peer's long-term secret unrevealed) holds.
We can further split into two subcases. $\Adv{\msind}{C2, \adversary} (\lambda)\leq \Adv{\msind}{C2.1.1, \adversary} (\lambda)+ \Adv{\msind}{C2.1.2, \adversary}(\lambda)$.

\case{2.1.1} \textbf{Stage $s=[1,0]$ and $\freshpredicate_{\LL}(u,i,s)$ holds.} In this case we know that $\RevLongTermKey$ not issued for either the test session or the communicating partner.

\game{4} Following the same reasoning as in \caseref{1.1}, we replace the $\diffihell$ component $(g^{y})^{x}$ used by the test session to derive $(\rootkey_0, \chainkey_{[0,0]})$ with an independent random group element $g^{z} \getsr \mathbb{G}$. The reduction proceeds analogously: a simulator $\mathcal{B}$ embeds the $\GDH$ challenge $(g^{x}, g^{y})$ into the experiment and answers random-oracle queries by forwarding tuples $(g^{x}, g^{y}, in)$ to the $\DDH$ oracle. If the adversary distinguishes the change, $\mathcal{B}$ can extract $g^{xy}$ and solve the $\GDH$ instance; otherwise, the modification is indistinguishable, since random-oracle outputs on unqueried inputs remain uniformly random. Hence, the advantage gap between \gameref{3} and \gameref{4} is bounded by the adversary's advantage in solving $\GDH$. $\Adv{\msind}{G4, \adversary} (\lambda)\leq \Adv{\msind}{C2.1.1,\adversary}(\lambda) + \Adv{\GDH}{\mathbb{G}, \adversary}(\lambda)$.

\game{5} In this game, we replace $\rootkey_1, \chainkey_{[1,0]} \gets \KDF((g^a)^b, \rootkey_0)$ with uniformly random and independent values. We note $\rootkey_0$ from \gameref{4}) is already uniformly random and independent, and we model $\KDF$ as $\RO$, so the outputs $\rootkey_1, \chainkey_{[1,0]}$ are also uniformly independent. Thus $\Adv{\msind}{G4, \adversary} (\lambda)\leq \Adv{\msind}{G5,\adversary}(\lambda)$.

\game{6} In this game, we replace $\msgkey_{[1,0]}, \chainkey_{[1,1]} \gets \KDF(\chainkey_{[1,0]})$ with uniformly random and independent values. We note as $\chainkey_{[1,0]}$ (from \gameref{5}) is also uniformly random and independent and $\KDF$ is modeled as a $\RO$, we get $\msgkey_{[1,0]}, \chainkey_{[1,1]}$ as uniform as well. $\Adv{\msind}{G5, \adversary} (\lambda)\leq \Adv{\msind}{G6,\adversary}(\lambda)$.

\case{2.1.2} \textbf{Stage $s=[1,0]$ and $\freshpredicate_{\EL}(u,i,s[1,0])$ holds.} In this case we know that $\RevRand$ not issued for one side's initial ephemeral secret key and $\RevLongTermKey$ not issued for the peer's long-term secret.

\game{4} The experiment proceeds identically to \gameref{4} of \caseref{1.2}, In this game, we resolve role symmetry by conditioning on the initiator role. Specifically, if the guessed test session's role is the initiator, we continue; otherwise, we abort the game.
The responder case is argued identically up to a change of notation. Equivalently, we may guess the role in advance and abort if the guess is wrong; the guess is correct with probability $\frac{1}{2}$. Thus: $\Adv{\msind}{G3, \adversary} (\lambda)  \leq 2\cdot \Adv{\msind}{G4,\adversary}(\lambda) $.

\game{5}The experiment proceeds identically to \gameref{5} of \caseref{1.2}, we replace the value $(g^y)^a$ used for deriving $(\rootkey_0,\chainkey_{[0,0]})$ with an independent random $g^z \getsr \mathbb{G}$. The same $\GDH$ $\RO$ reduction applies: if adversary queries $g^{ay}$, then the reduction solves $\GDH$. Otherwise the change is indistinguishable. Thus: $\Adv{\msind}{G4, \adversary} (\lambda)\leq \Adv{\msind}{G5,\adversary}(\lambda) + \Adv{\GDH}{\mathbb{G}, \adversary}(\lambda)$.

\game{6} The experiment proceeds identically as \gameref{5} in \caseref{2.1.1}, replace $\rootkey_1, \chainkey_{[1,0]} \gets \KDF((g^a)^b, \rootkey_0)$ with uniformly random and independent values. We note $\rootkey_0$ in already uniformly random and independent, and we model $\KDF$ as $\RO$, the outputs $\rootkey_1, \chainkey_{[1,0]}$ are uniformly and independent. Thus $\Adv{\msind}{G6, \adversary} (\lambda)\leq \Adv{\msind}{G7,\adversary}(\lambda)$.

\game{7} The experiment proceeds identically as \gameref{6} in \caseref{2.1.1}, replace $\msgkey_{[1,0]}, \chainkey_{[1,1]} \gets \KDF(\chainkey_{[1,0]})$ with uniformly random and independent values. We note as $\chainkey_{[1,0]}$ (from \gameref{6}) is also uniformly random and independent and $\KDF$ is modeled as a $\RO$, we get $\msgkey_{[1,0]}, \chainkey_{[1,1]}$ as uniform as well. $\Adv{\msind}{G7, \adversary} (\lambda)\leq \Adv{\msind}{G8,\adversary}(\lambda)$.

\case{2.2} \textbf{Stage $s=[x,0]$ at $x\ge2$ and  $\freshpredicate_{\EE}(u,i,s[x,0])$ holds.} In this case we know that the adversary cannot issue $\RevRand$ for either the sender's new ephemeral at $s=[x,0]$ or the peer's previous ephemeral at $s=[x-1,0]$ are unrevealed.

\game{4} The experiment proceeds identically to \gameref{4} of \caseref{1.2}, In this game, we resolve role symmetry by conditioning on the initiator role. Specifically, if the guessed test session's role is the initiator, we continue; otherwise, we abort the game.
The responder case is argued identically up to a change of notation. Equivalently, we may guess the role in advance and abort if the guess is wrong; the guess is correct with probability $\frac{1}{2}$. Thus: $\Adv{\msind}{G3, \adversary} (\lambda)  \leq 2\cdot \Adv{\msind}{G4,\adversary}(\lambda) $.

\game{5}
In this game we replace the $\EE$ $\diffihell$ value at stage $x$, the computed $\diffihell$ value of the sender's new ephemeral at stage $s=[x,0]$ and the partner's ephemeral at $s=[x-1,0]$ ($(g^a)^b$) that enters the $\KDF$ deriving ($\rootkey_x, \chainkey_{[x,0]}$) with an independent and random value $g^z \getsr \mathbb{G}$. We embed a $\GDH$ challenge ($g^a, g^b$) as the two ephemeral public keys (current sender at $s=[x,0]$ uses $g^a$ and partner's previous at $s=[x-1,0]$ uses $g^b$) and we never reuse these challenge elements in any other session nor with adversary values.

We show the change is indistinguishable via a the following reduction. When $\Send(u,i,\Activate)$ is called, the reduction sets the test session's outgoing ephemeral $epk_{\testsess} \gets g^a$. The reduction thus modifies the computation of $\rootkey_{x}$ by instead setting $\rootkey_x, \chainkey_{[x,0]} \gets \KDF(g^z, \rootkey_{x-1})$. When $\Send(v,j,\msg)$ is queried for the matching partner $j$ in stage $x$, the reduction sets the matching partner's new ephemeral public key as $epk_{\partsess} \gets g^b$. Thus, reduction computes $\rootkey_x, \chainkey_{[x,0]}$ by calling the $\KDF$ on the input where $(g^a)^b$ is replaced by $g^z$ together with $\rootkey_{[x-1]}$. On any $\RO$ query whose input could contain the $\EE$ value $g^{ab}$, our reduction checks whether the adversary has solved the $\GDH$ challenge by submitting ($g^a, g^b, in$) to the $\DDH$ oracle; if the $\DDH$ oracle returns $\trueflag$, then $in=g^{ab}$ and the reduction submits $in$ to the $\GDH$ challenger and wins; otherwise the reduction calls $in$ to the $\RO$ and outputs the uniformly random value to the adversary. This argument follows the same reasoning as in \gameref{4} in \caseref{1.1}.

If the adversary ever queries $g^{ab}$, we solve $\GDH$ otherwise the replacement of $(g^a)^b \Rightarrow g^z$ is undetectable in $\RO$ model. Thus $\Adv{\msind}{G4, \adversary} (\lambda)\leq \Adv{\msind}{G5,\adversary}(\lambda) + \Adv{\GDH}{\mathbb{G}, \adversary}(\lambda)$.

\game{6}
In this game, we replace $(\rootkey_{x},\chainkey_{[x,0]})$ for the test session (and its matching partner) with uniformly random and independent values sampled from their key spaces. From \gameref{5}, the $\KDF$ input for deriving $(\rootkey_{x},\chainkey_{[x,0]})$ already contains the independent uniform group element $g^{z}$; in the random-oracle model, outputs on inputs containing an unknown uniform component are indistinguishable from uniform unless the adversary queries that exact input. Hence $\Adv{\msind}{G5, \adversary} (\lambda)\leq \Adv{\msind}{G6,\adversary}(\lambda)$.

\game{7}
In this game, we replace $(\msgkey_{[x,0]},\chainkey_{[x,1]})=\KDF(\chainkey_{[x,0]})$ with uniformly random and independent values. By \textbf{Game 6}, $\chainkey_{[x,0]}$ is already uniform and independent of the execution, and since $\KDF$ is modeled as a random oracle, the derived pair is uniform as well. Therefore $\Adv{\msind}{G6, \adversary} (\lambda)\leq \Adv{\msind}{G7,\adversary}(\lambda)$.

\case{2.3} \textbf{Stage $s=[x,0]$ where $x\ge1$ and $\freshpredicate_{\state}(u,i,[x-1,0]) \land \freshpredicate_{\asym}(u,i,[x-1,0])$ holds.}

We highlight that our argument proceeds by induction. $\freshpredicate_{\state}(u,i,[x-1,0])$ means the adversary did not issue $\RevState(u,i,[x-1,0])$. By definition of $\freshpredicate_{\asym}$, either $\freshpredicate_{\EE}(u,i,[x-1,0])$ holds, or $\freshpredicate_{\asym}(u,i,[x-2,0])$ holds. Iterating this backwards, there exists some $x' \in \{1,\ldots,x-1\}$ such that $\freshpredicate_{\EE}(u,i,[x',0])$ holds. If $x'=1$, we invoke \caseref{2.1} (\gameref{4-6}) to replace $(\rootkey_{1},\chainkey_{[1,0]})$ by uniform values; if $x'\ge 2$, we invoke \caseref{2.2} at index $x'$ (\gameref{4-6}) to replace $(\rootkey_{x'},\chainkey_{[x',0]})$ by uniform values. From this uniform base, we propagate forward using only the random-oracle model and the fact that $\RevState(u,i,[x-1,0])$ did not occur.

\game{4} 
In this game, by the backward-induction argument we import the uniformity of the previous asymmetric stage: by applying \caseref{2.2} (\gameref{4-6}) at index $x-1$. We may treat $\rootkey_{x-1}, \chainkey{[x-1,0]}$ for the test session (and its matching partner session) as uniformly random and independent. $\Adv{\msind}{G3, \adversary} (\lambda)\leq \Adv{\msind}{G4,\adversary}(\lambda)$.

\game{5}
In this game, we replace $(\rootkey_{x},\chainkey_{[x,0]}) = \KDF((g^a)^b,\rootkey_{x-1})$ with uniformly random and independent values. Since $\rootkey_{x-1}$ is already uniform and independent values from \gameref{4}, the random-oracle output at this fresh input is indistinguishable from uniform unless that exact input is queried, which is impossible without knowing $\rootkey_{[x-1]}$. Therefore $\Adv{\msind}{G4, \adversary} (\lambda)\leq \Adv{\msind}{G5,\adversary}(\lambda)$.

\game{6}
In this game, we replace $(\msgkey_{[x,0]},\chainkey_{[x,1]}) = \KDF(\chainkey_{[x,0]})$ with uniformly random and independent values. Since $\chainkey_{[x,0]}$ is uniform from \gameref{5} and $\KDF$ is modeled as a random oracle, the derived pair is uniform as well. Hence $\Adv{\msind}{G5, \adversary} (\lambda)\leq \Adv{\msind}{G6,\adversary}(\lambda)$.  So our \caseref{2} bound is: 

$$\Adv{\msind}{C2,\adversary} (\lambda) \leq 6 \cdot \Adv{\GDH}{\mathbb{G},\adversary} (\lambda)$$

\case{3} \textbf{Symmetric Stages ($\adversary$ issues $\Test(u,i,s)\land s=[x,y]$ where $y\ge1$ )}

In this case we assume the test session at stage $s=[x,y]$ with $y\ge 1$ is fresh; otherwise the execution is independent of the challenge bit $b$ or the game aborts. The freshness conditions for the symmetric stages are inductive: a stage being fresh at stage $s=[x,y]$ requires that stage $s=[x,y-1]$ is also fresh. Eventually, this requires that stage $s=[x,0]$ is fresh. Thus, freshness at $s=[x,y]$ can be achieved in exactly two ways: (i) the initial stage's freshness predicate $\freshpredicate_{\initial}(u,i,[0,0])$ holds, and $\freshpredicate_{\state}(u,i,[x',y'])$ holds for all $0\leq x' \leq x$, $0\leq y' \leq y$ or; (ii) some asymmetric stage's freshness predicate $\freshpredicate_{\asym}(u,i,[x',0])$ holds (such that $x' \leq x$ and $\freshpredicate_{\state}(u,i,[x',y'])$ holds for all $0\leq x' \leq x$. In either case, our proof proceeds as follows: in some stage $x'$ we know a "base" Diffie-Hellman exchange existed such that neither Diffie-Hellman secret was revealed by the adversary; here we embed a $\GDH$ challenge. 

If condition (i) holds, then by the same arguments as \caseref{1} we can replace the initial root key $rk_0$ with a uniformly random value independent of the protocol flow. If condition (ii) holds, then by the same arguments as \caseref{2} we can replace the stage $x'$ root key $rk_{x'}$ with a uniformly random value independent of the protocol flow. From there, we can iteratively replace all root keys and chain keys $(rk_{x'+1},ck_{[x'+1,0]}), \ldots, (rk_{x},ck_{[x,0]})$ with uniformly random values. This is because computing $(rk_{x'+1},ck_{[x'+1,0]}), \ldots$ is the output of the RO using $rk_{x'},\ldots$ as input, which we know is a uniformly random value that the adversary has not revealed, and therefore cannot distinguish from random. From there, we can iteratively replace all message keys and chain keys $(mk_{[x,1]},ck_{[x,1]}), \ldots, (mk_{[x,y]},ck_{[x,y]})$ with uniformly random values. This is because computing $(mk_{[x,1]},ck_{[x,1]}), \ldots$ is the output of the RO using $ck_{[x,0]},\ldots$ as input, which we know is a uniformly random value that the adversary has not revealed, and therefore cannot distinguish from random.

At this point, we have replaced the tested message key $\msgkey_{[x,1]}$ with a uniformly random value regardless of the value of the challenge bit, and thus the challenge bit $b$ is uniformly random and independent of the security experiment. Thus the advantage of the adversary is $0$.

We now split into two subcases: 

$\Adv{\msind}{C4, \adversary} (\lambda)\leq \Adv{\msind}{C4.1, \adversary} (\lambda)+ \Adv{\msind}{C4.2, \adversary}(\lambda)$.

\case{3.1} \textbf{Stage $s=[0,y]$ and $\freshpredicate_{\state}(u,i,[0,y-1]) \land \freshpredicate_{\initial}(u,i,[0,0])$ holds.}

In this case we assume the test session at $s=[0,y]$ is fresh. By $\freshpredicate_{\sym}(u,i,[0,y])$ we in particular have that $\freshpredicate_{\initial}(u,i,[0,0])$ holds, and that $\freshpredicate_{\state}(u,i,[x',y'])$ holds for all $0\leq x' \leq x$, $0\leq y' \leq y$, so the initial stage state is not revealed. 

\game{4} 

In this game we replace the initial rook key $\rootkey_{0}$ and chain key $\chainkey_{[0,0]}$ with uniformly random values for the test session (and its matching partner). This follows the arguments made in \caseref{1} with the same bounds. Hence 
$\Adv{\msind}{G3, \adversary} (\lambda)\leq \Adv{\msind}{G4,\adversary}(\lambda)+  3\cdot \Adv{\GDH}{\Group,\adversary}(\lambda)$.

\game{5}

In this game, we iteratively replace all message keys and chain keys \sloppy$(mk_{[0,1]},ck_{[0,1]}), \ldots, (mk_{[0,y]},ck_{[0,y]})$ with uniformly random values. We justify this change by appealing to the RO: computing $(mk_{[0,1]},ck_{[0,1]}), \ldots$ is the output of the RO using $ck_{[0,0]},\ldots$ as input, which we know is a uniformly random value that the adversary has not revealed, and therefore cannot distinguish from random. At this point, we have replaced the tested message key $\msgkey_{[0,y]}$ with a uniformly random value regardless of the value of the challenge bit, and thus the challenge bit $b$ is uniformly random and independent of the security experiment. The advantage of the adversary is $0$, and the bound in this subcase is:
$\Adv{\msind}{G4, \adversary} (\lambda) \leq 3\cdot \Adv{\GDH}{\Group,\adversary}(\lambda)$.

\case{3.2} \textbf{Stage $s=[x,y]$ where $x\ge1$ and $\freshpredicate_{\state}(u,i,[x,y-1]) \land \freshpredicate_{\asym}(u,i,[x',0])$ holds.}

In this case we assume the test session at $s=[x,y]$ is fresh. By $\freshpredicate_{\sym}(u,i,[x,y])$ we in particular have that $\freshpredicate_{\asym}(u,i,[x',0])$ holds, and that $\freshpredicate_{\state}(u,i,[x^*,y^*])$ holds for all $\leq x' \leq x^* \leq x$, $0\leq y^* \leq y$, so all states from each stage since $s=[x',0]$ has not revealed. 

\game{4} 

In this game we replace the rook key $\rootkey_{x'}$ and chain key $\chainkey_{[x',0]}$ with uniformly random values for the test session (and its matching partner). This follows the arguments made in \caseref{2} with the same bounds. Hence 
$\Adv{\msind}{G3, \adversary} (\lambda)\leq \Adv{\msind}{G4,\adversary}(\lambda)+  6\cdot \Adv{\GDH}{\Group,\adversary}(\lambda)$.

\game{5}

In this game, we iteratively replace all root keys and chain keys $(rk_{x'+1},ck_{[x'+1,0]}), \ldots, (rk_{x},ck_{[x,0]})$ with uniformly random values. We justify this change by appealing to the RO: computing $(rk_{x'+1},ck_{[x'+1,0]}), \ldots$ is the output of the RO using $rk_{x'},\ldots$ as input, which we know is a uniformly random value that the adversary has not revealed, and therefore cannot distinguish from random. Thus
$\Adv{\msind}{G4, \adversary} (\lambda)\leq \Adv{\msind}{G5,\adversary}(\lambda)$.

\game{6}

In this game, we iteratively replace all message keys and chain keys $(mk_{[x,1]},ck_{[x,1]}), \ldots, (mk_{[x,y]},ck_{[x,y]})$ with uniformly random values. We justify this change by appealing to the RO: computing $(mk_{[x,1]},ck_{[x,1]}), \ldots$ is the output of the RO using $ck_{[x,0]},\ldots$ as input, which we know is a uniformly random value that the adversary has not revealed, and therefore cannot distinguish from random. At this point, we have replaced the tested message key $\msgkey_{[x,y]}$ with a uniformly random value regardless of the value of the challenge bit, and thus the challenge bit $b$ is uniformly random and independent of the security experiment. The advantage of the adversary is $0$, and the bound in this subcase is:
$\Adv{\msind}{G4, \adversary} (\lambda) \leq 5\cdot \Adv{\GDH}{\Group,\adversary}(\lambda)$.

 So our \caseref{3} bound is: 

$$\Adv{\msind}{C3,\adversary} (\lambda) \leq 9 \cdot \Adv{\GDH}{\mathbb{G},\adversary} (\lambda)$$

\end{proof}

\mehr{From the above analysis, we conclude that $\LINEDR$ achieves strong cryptographic security guarantees within the multi-stage key exchange model. In particular, $\LINEDR$ provides key indistinguishability for all fresh stages, FS through the symmetric ratchet, and PCS via repeated asymmetric Diffie–Hellman ratcheting that injects fresh entropy into the root key. The protocol further achieves KCI resilience, since successful key derivation at asymmetric stages requires unrevealed ephemeral secrets, preventing an adversary who compromises long-term keys from impersonating an honest party. Finally, replay protection is ensured by the strict progression of asymmetric and symmetric stage indices, ensuring that previously accepted keys cannot be reused without detection. Together, these properties demonstrate that $\LINEDR$ significantly strengthens the security guarantees of $\LINE$’s messaging protocol.}

{\section {Benchmarking Double Ratchet $\LINE$}

As part of our implementation efforts, we developed a reference implementation of our improved $\LINE$ double ratchet ($\LINEDR$) protocol in Rust, leveraging existing cryptographic libraries from the Rust ecosystem. Specifically, our prototype is built using the libraries $\mathtt{x25519\mhyphen dalek}$ \cite{x25519-dalek}, $\mathtt{aes\mhyphen gcm}$ \cite{aes-gcm-crate}, $\mathtt{hkdf}$ \cite{hkdf-crate} and $\mathtt{sha2}$ \cite{sha2-crate}. We benchmarked both reference implementations using the $\mathtt{criterion}$ crate \cite{criterion} to evaluate their wall-time performance as well as cryptographic costs. Table~\ref{tab:wall} compares the end-to-end cost incurred during a single message transaction, where Alice sends an encrypted message that Bob successfully decrypts. In contrast, Table~\ref{tab:crypto} compares the cost of individual cryptographic operations across all protocol variants and stages. All results were evaluated over 100 iterations of the respective protocols.
We note that this reference implementation is not a production-ready secure messaging system, but rather a prototype designed to capture the core functionality of the $\LINEDR$ protocol for demonstration and analysis purposes. Certain engineering considerations, such as message loss handling, are omitted in order to emphasize the interplay between the cryptographic components.

The $\LINEDR$ protocol relies on three primary classes of cryptographic primitives: Diffie--Hellman key exchange, key derivation, and symmetric encryption with AES-GCM. In our implementation, the asymmetric ratchet is instantiated using X25519 Diffie-Hellman \cite{bernstein2006curve25519} via the $\mathtt{x25519\mhyphen dalek}$ library, providing forward secrecy and post-compromise security through periodic key agreement and evolution steps. 
Key material for both the root key and the sending/receiving chain keys is derived using the HMAC-based Key Derivation Function (HKDF) \cite{HKDF} with SHA-256, and implemented with the $\mathtt{hkdf}$ and $\mathtt{sha2}$ libraries in Rust. Each invocation of HKDF produces fresh chain keys and per-message encryption keys, ensuring that message compromise does not leak future session secrets. 
After successful key derivation, all messages exchanged between sender and receiver are encrypted and decrypted using the corresponding sending and receiving keys in AES-256 in Galois/Counter Mode (AES-GCM) \cite{nist80038d}, instantiated via the  $\mathtt{aes\mhyphen gcm}$ library from the RustCrypto project.

In $\LINEtwo$, the session-establishing $\mathsf{Init}$ stage involves both Alice and Bob agreeing on a session key, derived from a combination of key-exchange and HKDF operations. In contrast, during the $\mathsf{Init}$ stage of our $\LINEDR$ implementation, Alice generates an ephemeral key and combines it with Bob's long-term key to derive the initial root, chain, and message keys. Upon receiving Alice's message containing her new ephemeral key, Bob derives the corresponding keys and generates his own ephemeral key pair in preparation for the next exchange. This explains the  difference in wall-time between  $1^\text{st}\text{Message}/\mathsf{Init}$ in $\LINEtwo$ and $\LINEDR$ summarized in Table~\ref{tab:wall}.

After the $\mathsf{Init}$ stage, Alice constructs a message to transmit to Bob, consisting of header information, an AES-GCM encrypted ciphertext, and associated additional data (AAD). In $\LINEDR$, to support the ongoing ratcheting mechanism between Alice and Bob, the message header contains the sender's corresponding 32-byte public key and an additional $\mathtt{sending\_counter}$ variable to synchronize the double ratchet. At the receiving end, $\LINEtwo$ allows Bob to recover the message using the session key derived in the session-establishing $\mathsf{Init}$ stage. This encrypted exchange between Alice and Bob in $\LINEtwo$ forms a \emph{secure session}, corresponding to each $i^\text{th}$ message after the first. By contrast, in the \emph{asymmetric} $\LINEDR$, Bob first checks whether Alice has sent a new ephemeral public key. If so, this triggers Bob to update the root and receiving chain keys by combining Alice's new ephemeral key with their current ephemeral key. This enables Bob to derive the correct session key to recover the message from Alice. Next, Bob generates a new ephemeral Diffie-Hellman key pair and updates the root and sending chain keys by combining the fresh ephemeral key with Alice's recently received ephemeral key, thereby preparing for the next transaction. These additional computations account for the increased wall-time for $\LINEDR$ during the $\mathrm{Asymmetric \;Ratchet \; Decryption}$ stage, as shown in Table~\ref{tab:wall}. During the \emph{symmetric} $\LINEDR$, Alice and Bob update only their chain keys using a key-derivation function, without performing any new key agreements. 
Table \ref{tab:crypto} illustrates the average cost of cryptographic operation for each protocol stage.

\begin{table}[htbp!]
\centering
\resizebox{\textwidth}{!}{
\begin{tabular}{|p{3.5cm}|c|c|c|}
\hline
\diagbox[width=12.5em]{Protocol}{ Walltime} & E2E Avg. Cost & \makecell{Encryption Avg. \\ Cost} &  \makecell{Decryption Avg. \\ Cost}\\
\hline
$\LINEtwo$ $1^{\mathrm{st}}$ Message & $130.65 \pm 16.10$ & $61.98 \pm 5.6$ & $62.08 \pm 6.19$ \\
\hline
$\LINEtwo$ $i^{\mathrm{th}}$ Message& $2.58 \pm 0.22$ & $1.33 \pm 0.5$  & $1.43 \pm 0.97$\\
\hline
$\LINEDR \; \mathsf{Init}$ & $370.65\pm 30.76$ & $185.3 \pm 20.2$  & $186.26 \pm 24.6$\\
\hline
 $\LINEDR$ Asym Ratchet & $185.74 \pm 25.4$ & $1.58 \pm 0.46$  & $183.46 \pm 22.6$\\
\hline
 $\LINEDR$ Sym Ratchet & $3.15\pm 0.84 $ & $1.5 \pm 0.43$ & $1.63 \pm 0.48$\\
% \hline
% Difference & $133.349 \:\mu\mathrm{s}$ & $28.2674 \:\mu\mathrm{s}$ \\
\hline
\end{tabular}
}
\vspace{0.1cm}
\caption{Comparison of end-to-end walltime (in $\mu s$) and standard deviations on the \textbf{Intel(R) Core(TM) i7-11370H CPU @ 3.30GHz}.}
\label{tab:wall}
\end{table}

%\vspace{-0.9cm}

\setlength{\tabcolsep}{8pt}
\renewcommand{\arraystretch}{1.2}
\begin{table}[htbp!]
\Huge
\centering
\resizebox{\textwidth}{!}{
\begin{tabular}{|c|l|l|l|l|l|}
\hline
\diagbox[width=7.5em]{Operation}{\hspace{-3.5em}Protocol} & \makecell{$\LINEtwo$ $1^{\mathrm{st}}$\\ Message} &  \makecell{$\LINEtwo$ $i^{\mathrm{th}}$ \\ Message} & \makecell{$\LINEDR$ \\$\textsf{Init}$} & \makecell{$\LINEDR$ \\ Asym }&  \makecell{$\LINEDR$ \\ Sym } \\
\hline
$\mathsf{DH}$ & $2 \times 60.65 $ & $-$ & $3 \times 60.65 $ & $3 \times 60.65 $&  $-$\\
\hline
$\mathsf{KDF}$ & $2 \times	1.01 $ & $2 \times 	1.01 $ &$4 \times  1.34 $& $4 \times  1.34 $  & $2 \times   1.34 $\\
\hline
$\mathsf{AES\mhyphen GCM}$ & $2 \times 1.01$ & $ 2 \times 1.01$ & $-$ & $ 2\times 1.01$ & $ 2 \times  1.01$ \\
\hline
\end{tabular}
}
\vspace{0.1cm}
\caption{Comparison of Average Cryptographic Cost (in $\:\mu\mathrm{s}$) on \textbf{Intel(R) Core(TM) i7-11370H CPU @ 3.30GHz}.}
\label{tab:crypto}
\end{table}
%\vspace{-1cm}

While the ongoing double ratchet in $\LINEDR$ requires additional key generation, evolution, and derivation operations compared to the one-time key-establishment process of $\LINEtwo$, we argue that the enhanced security guarantees of forward secrecy and post-compromise security outweigh the marginal increase in computational cost. We note that $\LINEDR$ introduces a slight overhead due to the inclusion of a 32-byte public key and an additional $\mathtt{sending\_counter}$ in the message header. However, when run on modern mobile phones, $\LINEtwo$ and $\LINEDR$ would communicate over high-speed 5G bandwidths based on the 3GPP specification \cite{etsi_133501_v16}, with an uplink data rate of 25~Mbps and a downlink rate of 50~Mbps in a wide-area scenario. Therefore, we argue that this additional overhead can be easily accommodated by the bandwidth provided by the current 5G standard, without affecting the performance of the generic $\LINE$ protocol. Our goal is not absolute performance but relative overhead comparison between LINEv2 and LINEvDR under identical cryptographic primitives.

\section{Conclusion} \label{sec:con}

In this work, we present the first rigorous cryptographic analysis of the $\LINEtwo$ messaging protocol, which was designed as the successor to $\LINEone$ with the explicit goal of addressing the  security flaws identified in its predecessor. Our cryptographic security  investigation revealed that $\LINEtwo$ still fails to meet several essential security guarantees expected from a modern secure messaging protocol. Through our analysis of $\LINEtwo$, within a tailored Multi-Stage Key Exchange model, we demonstrated that although $\LINEtwo$ attains key indistinguishability and message authentication, its dependence on a static-static Diffie-Hellman key exchange precludes forward secrecy and post-compromise security. Additionally, our network-level analysis uncovered that $\LINEtwo$ does not apply end-to-end encryption to bot communications. With LINE bots increasingly used in e-commerce, banking, and healthcare, this design flaw poses a significant privacy risk for millions of users.
% Notably, our analysis shows that, despite being presented as a security-enhanced iteration of $\LINEone$, $\LINEtwo$ remains susceptible to long-term key compromise and replay attacks, thereby undermining its claimed cryptographic guarantees.

To address these weaknesses, we proposed $\LINEDR$, a strengthened variant of $\LINEtwo$ that introduces ratcheting-based key updates to achieve both FS and PCS while maintaining backward compatibility with the existing $\LINEtwo$ architecture. We formally proved that $\LINEDR$ satisfies all desired security properties within our adapted MSKE framework. Our empirical evaluations further demonstrated that $\LINEDR$ provides substantially improved security guarantees with reasonable performance overhead compared to the baseline $\LINEtwo$ protocol. We emphasize that while our empirical evaluations validate the feasibility of the proposed security enhancements to $\LINEtwo$, it was our formal analysis---often regarded as a purely theoretical exercise---that revealed previously overlooked  security flaws in the protocol design. This underscores the importance of rigorous cryptographic analysis of real-world systems, not as an abstract academic pursuit, but as a necessary means to substantiate the security guarantees such protocols claim to provide. As future work, we aim to extend our formal framework to analyse privacy guarantees and explore post-quantum extensions within the structural limitations of the existing LINE ecosystem.
\bibliography{mybibliography}

\appendix
\section{Definition and Execution Flow of the MSKE Model} \label{appen:MSKE}

The MSKE model formalizes the process of establishing multiple cryptographic keys between parties over the course of a protocol execution. In this model, each stage of the exchange yields a fresh session key, typically intended for use in a symmetric cryptographic primitive, such as encryption or message authentication providing security guarantees for the secure composition of these keys within symmetric-key application protocols. \cite{MSKE-1} introduce the MSKE model and demonstrate that the QUIC protocol satisfies its security requirements, qualifying as a secure MSKE protocol. A distinctive feature of MSKE is its support for key-dependent stages, allowing later stages to rely on keys established in earlier ones. The core security goals of the MSKE model include key indistinguishability, key freshness, and FS. To ensure the integrity of session keys, the model mandates that each key be unique to its session and not reused across different sessions. We note that we do not prove FS, and thus must modify the MSKE model to omit this guarantee.

\begin{figure}[htpb!]
\begin{center}
\scalebox{0.7}{\centering
\begin{tikzpicture}[yscale=-0.5,>=latex]
\edef\InitX{0}
\edef\ArrowLeft{0}
\edef\ArrowCenter{5}
\edef\ArrowRight{10}
\edef\AdvX{5}
\edef\RespX{10}
% Set the starting Y coordinate
\edef\Y{0}

% Draw header boxes
\node [rectangle,draw,inner sep=6pt,right] at (\InitX,\Y) {\textbf{Alice} };

\node [rectangle,draw,inner sep=6pt,left] at (\RespX,\Y) {\textbf{Bob}};

\NextLine[2]
\ClientAction{$\KGEN() \tor (\pk_A,\sk_A)$}
\ServerAction{$\KGEN() \tor (\pk_B,\sk_B)$}

\NextLine[2]
\ClientAction{$\Activate(\sk_A,\init,\id_B) \to (\pi_A, m_A)$}
\NextLine[1]
\ServerAction{$\Activate(\sk_B,\resp,\id_A) \to (\pi_B, \bot)$}

\NextLine[1.5]
\ClientToServer{$m_A$}{}

\NextLine[1]
\ServerAction{$\Run(\sk_B,\pi_B,m_A) \to (\pi^{\prime}_B, m^{\prime}_B)$}

\NextLine[1.5]
\ServerToClient{$m^\prime_B$}{}

\NextLine[1]
\ClientAction{$\Run(\sk_A,\pi'_A,m'_B) \to (\pi^{\prime\prime}_A, m^{\prime\prime}_A)$}

\end{tikzpicture}
 }
\caption{An example execution of MSKE protocol}
\label{fig:MSKE-flow}
\end{center}
\end{figure}
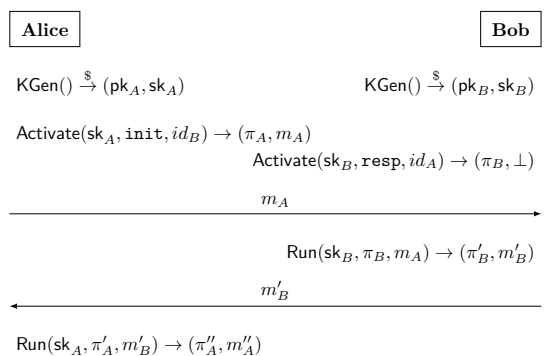

The MSKE framework can be divided into three distinct components: the MSKE protocol, the MSKE formalisation, and the MSKE security game. The MSKE protocol is a key exchange mechanism executed between two parties, establishing multiple keys throughout its execution. The MSKE formalisation introduces this protocol's notation and underlying algorithms. Lastly, the MSKE security game captures the security aspects of the MSKE protocol, ensuring that the keys generated are secure and robust against potential threats. We formalise the MSKE protocol in Definition \ref{def:MSKE}, adapting the MSKE definition presented in \cite{signal-1}.  To illustrate the MSKE protocol definition with an Alice and Bob communication scenario, we can break down the process step-by-step, showcasing how Alice and Bob interact with each other and utilize the algorithms defined in the MSKE protocol as shown in  Figure \ref{fig:MSKE-flow}.

\begin{definition} [$\MSKE$] \label{def:MSKE}
    A multi-stage key exchange protocol is a tuple of algorithms $\Pi$, along with a keyspace ($\KeySpace$) and a security parameter ($\lambda$) indicating a measure that scales the size of the system parameters to ensure adequate security against adversarial attacks. The algorithms $\Pi = \{\KGen, \Activate, \Run\}$ are: $\KGen(\secpar) \tor (\pk, \sk)$:  A probabilistic long-term key generation algorithm that takes a security parameter $\secpar$ input and outputs a long-term public key $\pk$ and a long-term secret key $\sk$. $\Activate(\sk,\role,\peerid) \tor (\pi, m)$: A probabilistic protocol activation algorithm that takes as input the party's long-term secret key $\sk$, the role $\role$ of the party in the session ($\role \in \{\init, \resp \}$), the identifier of the intended peer $\peerid$ and outputs a per-session state $\pi$ and a (possibly empty) outgoing message $m$. $\Run(\sk,\pi,m) \tor (\pi^\prime, m^\prime)$: A probabilistic protocol execution algorithm that takes an input as a long-term secret key $\sk$, per-session state $\pi$ and a (possibly empty) incoming protocol message $m$ and outputs an updated state $\pi'$ and a possibly empty outgoing protocol message $m'$.
\end{definition}

\section{$\LINEtwo$ Letter Sealing Protocol version 2 (Detailed)} 
\label{app:line2-detail}
Figure \ref{fig:line-vers-2(a)} presents the complete message flow of $\LINEtwo$, including the registration, session establishment, encryption and decryption phases. The diagram is provided here for reference and completeness, illustrating the unmodified baseline design of $\LINEtwo$.

\begin{figure}[htbp!]
\begin{center}
\scalebox{0.85}{
	\centering
\begin{tikzpicture}[yscale=-0.5,>=latex]
\edef\InitX{0}
\edef\ArrowLeft{0}
\edef\ArrowCenter{7}
\edef\ArrowRight{14}
\edef\AdvX{7}
\edef\RespX{14}
% Set the starting Y coordinate
\edef\Y{0}

% Draw header boxes
\node [rectangle,draw,inner sep=5pt,right] at (\InitX,\Y) {\textbf{Alice} };
\node [rectangle, draw, inner sep=5pt, centered] at (\AdvX,\Y) {\textbf{Server}};
\node [rectangle,draw,inner sep=5pt,left] at (\RespX,\Y) {\textbf{Bob}};
\NextLine[0.5]
\Separator{Registration Phase}{}
\NextLine[2]
\ClientAction{$\diffihell.\KeyGen(1^\secpar) \tor (x,g^x)$}
\ServerAction{$\diffihell.\KeyGen(1^\secpar) \tor (y,g^y)$}
\NextLine[1]
\ClientToAdversary{$g^x$}{}
\NextLine[0.5]
\AdversaryAction{$\kid_A \getsr \KID$}{}
\NextLine[1]
\AdversaryAction{$g^x \to \PK[\kid_A]$}
\NextLine[0.5]
\AdversaryToClient{$\kid_A$}{}
\NextLine[0.25]
\ServerToAdversary{$g^y$}{}
\NextLine[0.5]
\AdversaryAction{$\kid_B \getsr \KID$}{}
\NextLine[1]
\AdversaryAction{$g^y \to \PK[\kid_B]$}
\NextLine[0.5]
\AdversaryToServer{$\kid_B$}{}

\Separator{Session Establishment}{}
%%%%%%%%%%%%%%%% Registration Phase %%%%%%%%%%%%%%%%%%%%
\NextLine[2]
\ClientToAdversary{$\kid_B$}{}
\NextLine[0.5]
\AdversaryAction{$g^y \gets \PK[\kid_B] $}
\NextLine[0.5]
\AdversaryToClient{$g^y$}{}
\NextLine[0.5]
\ClientAction{$(g^y)^x \gets \pms_{AB}$}
\ServerToAdversary{$\kid_A$}{}
\NextLine[1]
\ClientAction{$\ctr[32]_{AB} \gets 0$}
\AdversaryAction{$g^x \gets \PK[\kid_A]$}
\NextLine[0.5]
\AdversaryToServer{$g^x$}{}
\NextLine[0.5]
\ServerAction{$(g^x)^y \gets \pms_{AB}$}
\NextLine[1]
\ServerAction{$\ctr[32]_{BA} \gets 0$}
\NextLine[0.5]

\Separator{Encryption}{}
%%%%%%%%%%%%%%%%%%%%%%% Activation Phase %%%%%%%%%%%%%%%%%%
\NextLine[2.5]
\ClientAction{$\salt \getsr \{0,1\}^{128}$}
\NextLine[1]
\ClientAction{$\key_e \gets \SHAtwo(\pms_{AB} \| \salt \| ``\keylbl" )$}
\NextLine[1]
\ClientAction{$\rand \getsr \bits^{32}$}
\NextLine[1]
\ClientAction{$\nonce \gets \ctr \| \rand$}
\NextLine[1]
\ClientAction{$\AD \gets \RID \| \SID \| \kid_A \| \kid_B \| \vers \| \ctype$}
\NextLine[1]
\ClientAction{$c \gets \AESGCM.\Enc(\key_e, \nonce, m, \AD)$}
\NextLine[1]
\ClientAction{$\ctr_{AB}++$}
\NextLine[0.5]
\ClientToServer{$\vers,\ctype, \salt, c, \nonce, \kid_A, \kid_B$}{}

\Separator{Decryption}{}
%%%%%%%%%%%% Sending Message %%%%%%%%%%%%%%%%%%%%%%%
\NextLine[1.5]
\ServerAction{$\key_e \gets \SHAtwo(\pms_{AB} \| \salt \| ``\keylbl" )$}
\NextLine[1]
\ServerAction{$\AESGCM.\Dec(\key_e,\nonce,c,\AD)\to m$}

\end{tikzpicture}
}
\end{center}
\caption{$\LINE$ Letter Sealing Protocol version 2 ($\LINEtwo$).
Registration and Session Establishment are depicted here simultaneously for both parties, but can occur separately.}
\label{fig:line-vers-2(a)}
\end{figure}
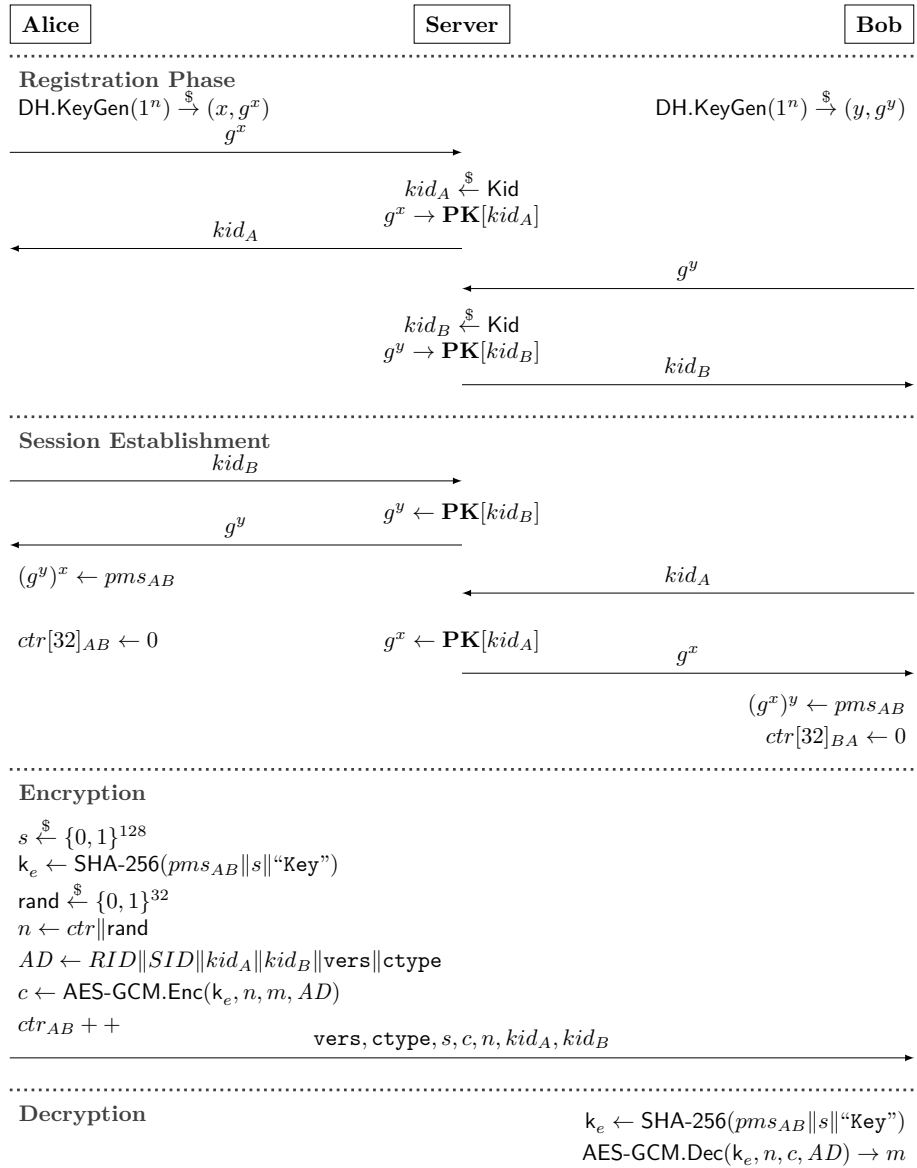

\section{$\LINEDR$ Letter Sealing Protocol version Double Ratchet (Detailed)}
\label{app:lineDR-detail}
Figure \ref{fig:line-vers-DR(a)} depicts the full message flow of $\LINEDR$, including the registration, initial key exchange, and DR-driven ratcheting phases. In the main body, we presented a condensed version showing only the parts where modifications were introduced to achieve the ratcheting mechanism, for clarity and space efficiency. Here, the complete protocol is provided for reference and completeness, illustrating how $\LINEDR$ builds upon the baseline $\LINEtwo$ protocol by integrating the double ratchet mechanism to provide forward secrecy and post-compromise security.

\begin{figure}[htbp!]
\begin{center}
\scalebox{0.85}{
	\centering
\begin{tikzpicture}[yscale=-0.5,>=latex]
\edef\InitX{0}
\edef\ArrowLeft{0}
\edef\ArrowCenter{7}
\edef\ArrowRight{14}
\edef\AdvX{7}
\edef\RespX{14}
% Set the starting Y coordinate
\edef\Y{0}

% Draw header boxes
\node [rectangle,draw,inner sep=5pt,right] at (\InitX,\Y) {\textbf{Alice} };
\node [rectangle, draw, inner sep=5pt, centered] at (\AdvX,\Y) {\textbf{Server}};
\node [rectangle,draw,inner sep=5pt,left] at (\RespX,\Y) {\textbf{Bob}};
\NextLine[0.5]
\Separator{Registration Phase}{}
%\draw [draw=blue] (5.3,16.3) rectangle (0,22);
%\draw [draw=blue] (6.5,26) rectangle (0,32.5);
%\draw [draw=blue] (14,35.5) rectangle (3,47.5);
\NextLine[2]
\ClientAction{$\diffihell.\KeyGen(1^\secpar) \tor (x,g^x)$}
\ServerAction{$\diffihell.\KeyGen(1^\secpar) \tor (y,g^y)$}
\NextLine[1]
\ClientToAdversary{$g^x$}{}
\NextLine[0.5]
\AdversaryAction{$\kid_A \getsr \KID$}{}
\NextLine[1]
\AdversaryAction{$g^x \to \PK[\kid_A]$}
\NextLine[0.5]
\AdversaryToClient{$\kid_A$}{}
%\NextLine[0.1]
\ServerToAdversary{$g^y$}{}
\NextLine[0.5]
\AdversaryAction{$\kid_B \getsr \KID$}{}
\NextLine[1]
\AdversaryAction{$g^y \to \PK[\kid_B]$}
\NextLine[0.5]
\AdversaryToServer{$\kid_B$}{}
\NextLine[0.5]
\Separator{Session Establishment}{}
\NextLine[2.5]
\ClientAction{\color{blue}$\rootkey_{[0]} \gets \emptyset ,$$\asymindexsender \gets 0, ~\symindexsender \gets 0 $}
% \NextLine[1]
% \ClientAction{\color{blue}$\asymindexsender \gets 0, ~\stageindex \gets 0 $}
\NextLine[1]
\ClientAction{\color{blue}$\diffihell.\KeyGen(1^\secpar) \tor (a,g^{a})$}
\NextLine[1]
\ClientAction{{\color{blue}$\rootkey_{[0]}, \chainkey_{[0,0]}\gets \KDF((g^y)^{a}$} $,(g^y)^x)$}
\NextLine[0.5]

\Separator{Encryption}{}
%%%%%%%%%%%%%%%%%%%%%%% Activation Phase %%%%%%%%%%%%%%%%%%
\NextLine[2.5]
%\ClientAction{$\diffihell.\KeyGen(1^\secpar) \tor (x',g^{x'})$}
%\NextLine[1]
%\ClientAction{$\salt \getsr \{0,1\}^{128}$}
%\NextLine[1]
\ClientAction{\color{blue}$\msgkey_{[\asymindexsender,\symindexsender]},\chainkey_{[\asymindexsender,\symindexsender+1]} \gets \KDF(\chainkey_{[\asymindexsender,\symindexsender]})$}
\NextLine[1]
\ClientAction{\color{blue} $\rand \getsr \bits^{32}$, $\nonce  \gets \asymindexsender \| \rand$, }
% \NextLine[1]
% \ClientAction{\color{blue}$(\nonce,\nonce_r)  \gets \asymindexsender \| \rand$}
\NextLine[1]
\ClientAction{$\AD \gets \kid_A \| \kid_B \| \vers \| \ctype \| \color{blue}g^{a} \| \symindexsender$}
\NextLine[1]
\ClientAction{$\ciphertxt \gets \AEAD.\Enc({\color{blue}\msgkey_{[\asymindexsender,\symindexsender]}, \nonce }, \msg, \AD)$}
\NextLine[1]
\ClientAction{\color{blue}$\symindexsender++$}
\NextLine[1.5]
\ClientToServer{$ \AD, \ciphertxt, {\color{blue}\nonce}$}{}

%%%%%%%%%%%% Decryption %%%%%%%%%%%%%%%%%%%%%%%
\Separator{Decryption}{}
\NextLine[2]
%\ServerAction{$\ifsm (\rootkey_{[0]} = \emptyset)$}
%\NextLine[1]
\ServerAction{\color{blue}$\ifsm (\rootkey_{[0]} = \emptyset)$:}
\NextLine[1]
\ServerAction{\color{blue}$\asymindexrcvr \gets 0, ~\symindexrcvr \gets 0 $}
% \NextLine[1]
% \ServerAction{\color{blue}$\asymindexrcvr \gets 0, ~\stageindex \gets 0 $}
\NextLine[1]
\ServerAction{\color{blue}$\rootkey_{[0]}, \chainkey_{[0,0]}\gets \KDF((g^a)^{y},(g^x)^y)$}
\NextLine[1]
\ServerAction{\color{blue}$\msgkey_{[0,0]}, \chainkey_{[0,1]} \gets \KDF(\chainkey_{[0,0]})$}
\NextLine[1]
\ServerAction{\color{blue}$\symindexrcvr++$}
\NextLine[1]
\ServerAction{\color{blue}$\textbf{end~if}$}
\NextLine[1]
\ServerAction{\color{blue}$\ifsm (\nonce.\asymindexsender > \asymindexrcvr)$:}
\NextLine[1]
\ServerAction{\color{blue}$(\rootkey_{[\asymindexrcvr+1]}, \chainkey_{[\asymindexrcvr+1,0]} \gets \KDF((g^{a})^{b},\rootkey_{[\asymindexrcvr]}))$}
\NextLine[1]
\ServerAction{\color{blue}$\textbf{end~if}$}
\NextLine[1]
\ServerAction{\color{blue}$\whilesm (\symindexsender > \symindexrcvr)$:}
\NextLine[1]
\ServerAction{\color{blue}$\asymindexsender \gets \nonce.\asymindexsender$:}
\NextLine[1]
\ServerAction{\color{blue}$\msgkey_{[\asymindexsender,\symindexrcvr]},\chainkey_{[\asymindexsender,\symindexrcvr+1]} \gets \KDF(\chainkey_{[\asymindexsender,\symindexrcvr]}) $}
\NextLine[1]
\ServerAction{\color{blue}$ \symindexrcvr++$}
\NextLine[1]
\ServerAction{\color{blue}$\textbf{end~while}$}
\NextLine[1]
\ServerAction{$\AEAD.\Dec({\color{blue}\msgkey_{[\asymindexrcvr,\symindexrcvr]},\nonce}, \ciphertxt,\AD)\to \msg$}
\NextLine[1]
\ServerAction{\color{blue}$\ifsm (\nonce.\asymindexsender \geq \asymindexrcvr)$:}
\NextLine[1]
\ServerAction{\color{blue}$(\diffihell.\KeyGen(1^\secpar) \tor (b,g^{b}))$}
\NextLine[1]
\ServerAction{\color{blue}$(\rootkey_{[\asymindexrcvr+2]}, \chainkey_{[\asymindexrcvr+2,0]}\gets \KDF((g^{a})^{b},\rootkey_{[\asymindexrcvr+2]}) )$}
\NextLine[1]
\ServerAction{\color{blue}$\asymindexrcvr \gets \asymindexrcvr+2$}
\NextLine[1]
\ServerAction{\color{blue}$\textbf{end~if}$} 
\end{tikzpicture}
}
\end{center}
\caption{In this figure, we present the $\LINEDR$ (Letter Sealing with double ratchet) version of the protocol. Text in {\color{blue}blue} highlights our modifications to $\LINEtwo$, specifically the integration of double ratcheting mechanisms inspired by the Signal protocol.}
\label{fig:line-vers-DR(a)}
\end{figure}
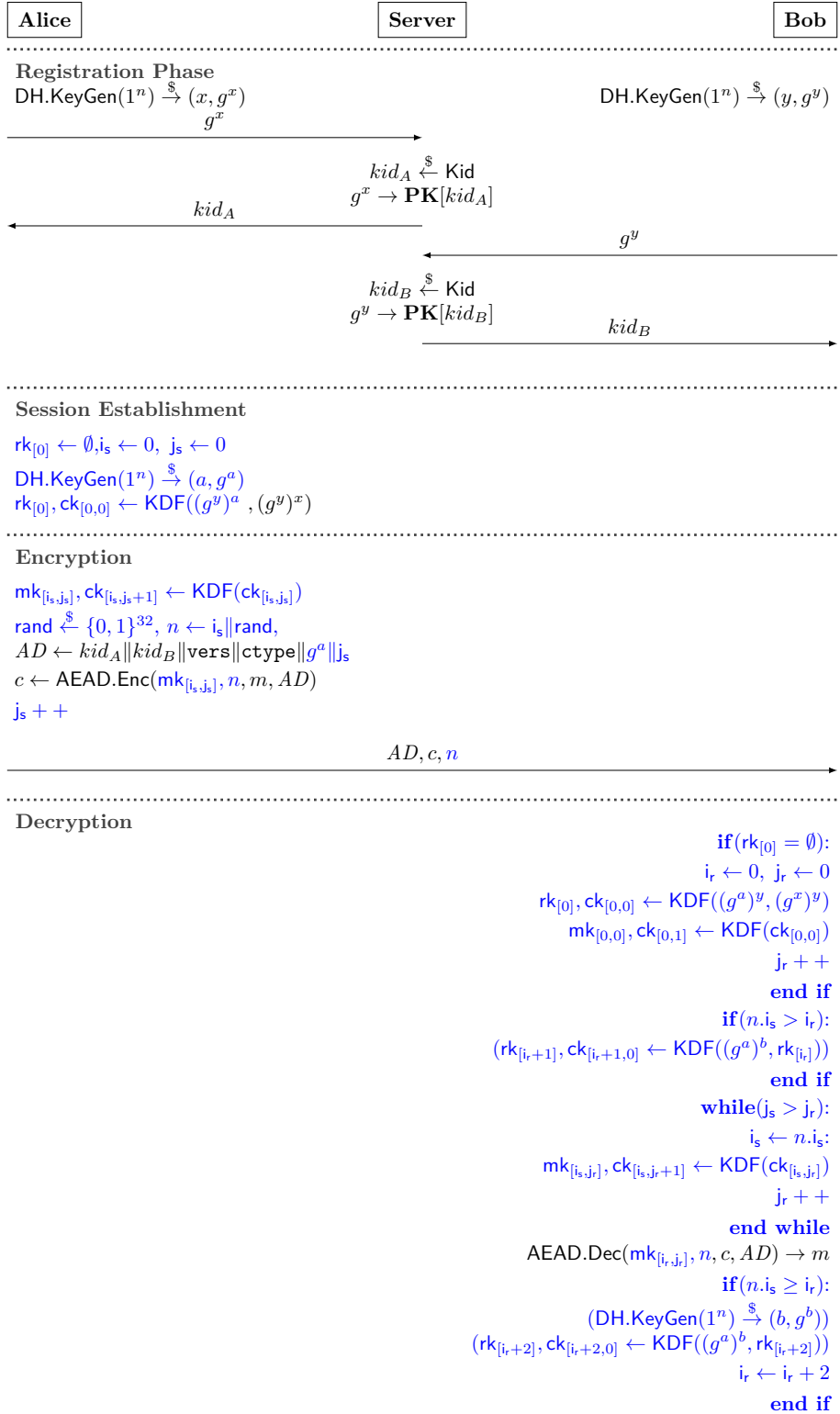

\section{$\LINEone$ Letter Sealing Protocol version 1} \label{Sec:line1}

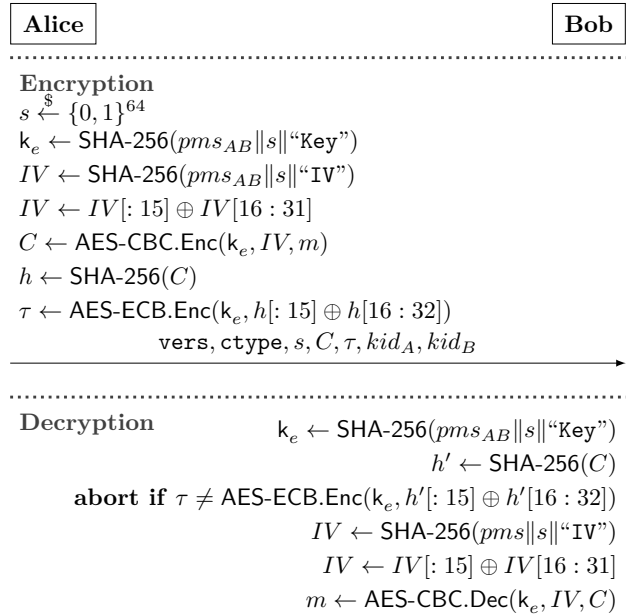
\begin{figure}[h]
\begin{center}
\scalebox{0.9}{
	\centering
\begin{tikzpicture}[yscale=-0.5,>=latex]
\edef\InitX{0}
\edef\ArrowLeft{0}
\edef\ArrowCenter{4.5}
\edef\ArrowRight{9}
\edef\AdvX{4.5}
\edef\RespX{9}
% Set the starting Y coordinate
\edef\Y{0}

% Draw header boxes
\node [rectangle,draw,inner sep=5pt,right] at (\InitX,\Y) {\textbf{Alice} };
\node [rectangle,draw,inner sep=5pt,left] at (\RespX,\Y) {\textbf{Bob}};
\NextLine[0.5]
\Separator{Encryption}{}
\NextLine[2]
\ClientAction{$\salt \getsr \bits^{64}$}
\NextLine[1]
\ClientAction{$\key_e \gets \SHAtwo(\pms_{AB} \| \salt \| ``\keylbl" )$}
\NextLine[1]
\ClientAction{$\IV \gets \SHAtwo(\pms_{AB} \| \salt \| ``\ivlbl")$}
\NextLine[1]
\ClientAction{$\IV \gets \IV[:15] \oplus \IV[16:31]$}
\NextLine[1]
\ClientAction{$C \gets \AESCBC.\Enc(\key_e,\IV,m)$}
\NextLine[1]
\ClientAction{$h \gets \SHAtwo(C)$}
\NextLine[1]
\ClientAction{$\tau \gets \AESECB.\Enc(\key_e, h[:15] \oplus h[16:32])$}
\NextLine[1]
\ClientToServer{$\vers, \ctype, \salt, C, \tau, \kid_A, \kid_B$}{}

\Separator{Decryption}{}
\NextLine[1.5]
\ServerAction{$\key_e \gets \SHAtwo(\pms_{AB} \| \salt \| ``\keylbl" )$}
\NextLine[1]
\ServerAction{$h' \gets \SHAtwo(C)$}
\NextLine[1]
\ServerAction{\textbf{abort if} $\tau \neq \AESECB.\Enc(\key_e, h'[:15] \oplus h'[16:32])$}
\NextLine[1]
\ServerAction{$\IV \gets \SHAtwo(\pms \| \salt \| ``\ivlbl")$}
\NextLine[1]
\ServerAction{$\IV \gets \IV[:15] \oplus \IV[16:31]$}
\NextLine[1]
\ServerAction{$m \gets \AESCBC.\Dec(\key_e,\IV,C)$}

\end{tikzpicture}
}
\end{center}
\caption{$\LINE$ Letter Sealing Protocol Version 1 or $\LINEone$. The Registration Phase and Session Establishment are identical to $\LINEtwo$ (except $\LINEone$ does not maintain $\ctr$), and thus these steps are omitted.}
\label{fig:line-vers-1}
\end{figure}

An algorithmic description of $\LINE$'s Letter Sealing protocol (version 1, $\LINEone$) is shown in figure \ref{fig:line-vers-1}. As in $\LINEtwo$, $\LINEone$ clients generate a long-term elliptic-curve Diffie-Hellman secret key pair $(x, g^x)$ (using Curve 25519) and register them with the $\LINE$ messaging server as described in Section \ref{sec:linetwo}. The $\LINE$ server then registers and associates each public key with a unique key ID $\kid$. At some point, the client (hereafter referred to as the sender) wishes to send an encrypted message to some recipient. The sending client will retrieve the current public key of the recipient $g^y$ by communicating with the $\LINE$ messaging server as in $\LINEtwo$. The sender will generate a shared secret $\pms_{AB} \gets (g^y)^{x}$. These public keys can be verified out-of-band using $\LINE$'s fingerprinting algorithm.

For each message, the sender will generate a fresh encryption key $\key_e$ and an initialization vector ($\IV$). The sender generates an 8-byte value $\salt \getsr \bits^{64}$ from a secret random number generator. The encryption key $\key_e \gets \SHAtwo(\pms_{AB}\|\salt\|``\keylbl'')$ is computed by hashing the shared secret $\pms_{AB}$, $\salt$, and the string $``\keylbl"$, and $\IV \gets \SHAtwo(\pms_{AB} \| \salt \| ``\ivlbl'')$ is obtained in a similar way, only replacing $``\keylbl''$ with $``\ivlbl''$. The final $\IV$ is computed as the XOR of $\IV$'s two 16-byte halves, i.e. $\IV \gets \IV[:15] \oplus \IV[16:32]$. The message payload $m$ is encrypted using $\AESCBC$ and outputs the ciphertext $c \gets \AESCBC.\Enc(\key_e,\IV,m)$. A message authentication code $\tau \gets \AESECB.\Enc $ $(\key_e, \SHAtwo(C)[:15] \oplus \SHAtwo(C)[16:32])$ is calculated for the ciphertext by encrypting a hash of the ciphertext using AES in ECB mode, using the encryption key $\key_e$, and the result is included in the ciphertext sent to the recipient. The ciphertext is the concatenation of the Letter Sealing protocol version $\vers$, the content type $\ctype$, $\salt$, ciphertext $C$, $\tau$ and key identifiers $\kid_{A}, \kid_{B}$.

Recipients use the sender key identifier $\kid_{A}$ to retrieve the sender's public key $g^x$ used to generate the shared secret $\pms_{AB}$, and the receiver's key identifier $\kid_{B}$ to verify that the message can be decrypted using the receiver's current local private key. Once the recipient determines they can decrypt a message, they derive the shared secret, symmetric encryption key, and IV using the same process as the sender. The recipient then locally calculates the MAC of the received ciphertext and compares it with $\tau$, and rejects if $\tau \neq \AESECB.\Enc(\key_e,\SHAtwo(C)[:15]\oplus\SHAtwo(C)[16:32])$. Otherwise, the receiver decrypts the ciphertext and outputs the message $m \gets \AESCBC.\Dec(\key_e,$ $\IV,C)$.

\section{Evaluation and Experimental Results} \label{sec:Exp}

In this experiment, we set up a testbed environment to examine the structure of packets sent and received by the $\LINE$ messaging application, aiming to determine whether they align with the specifications outlined in its white paper.  

\begin{figure}[ht!b]
\begin{center}
%\vspace{-1cm}
\includegraphics[width=0.45\textwidth]{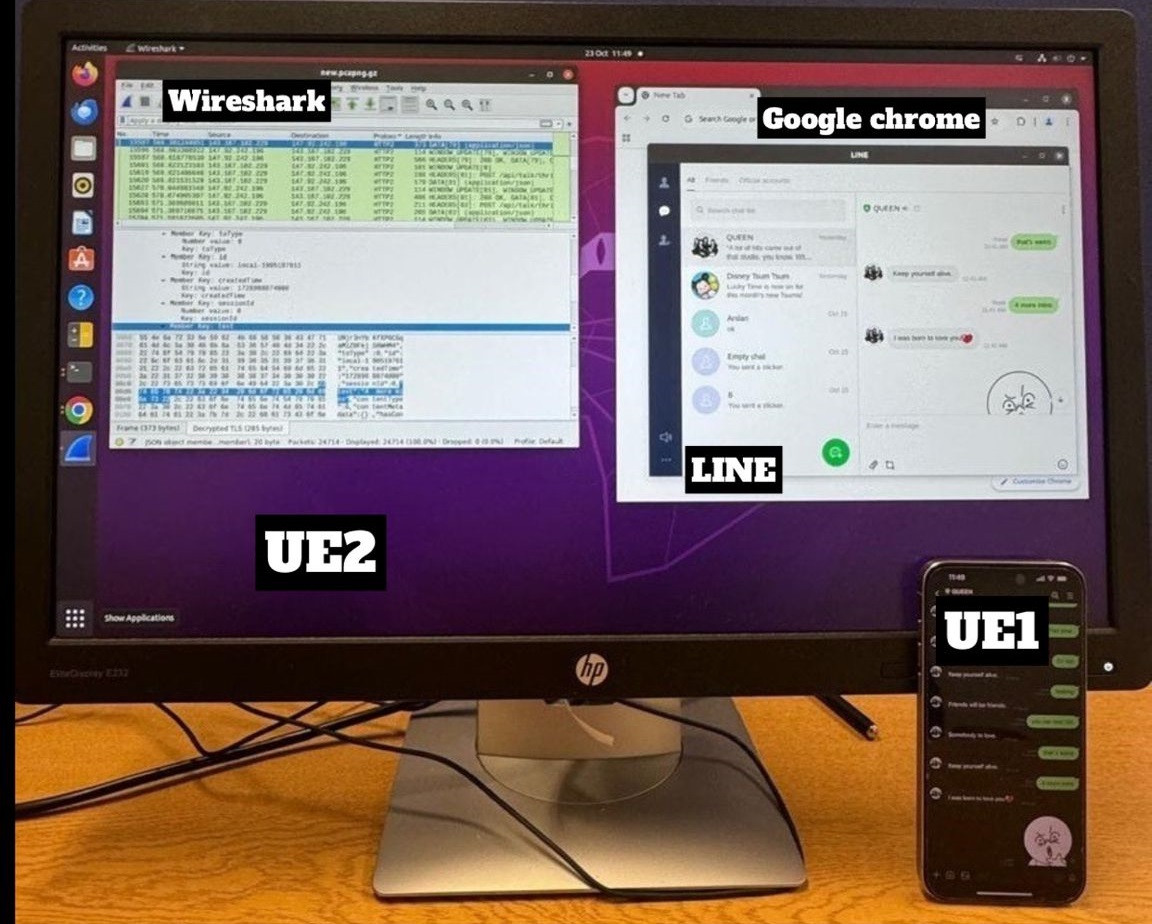}
\end{center}
\caption{A display of our physical experiment testbed.}
\label{fig:testbed}
\end{figure}

As illustrated in Figure \ref{fig:testbed}, the setup involved an iPhone device denoted UE1 (User Environment) running iOS version 18.1 with $\LINE$ version 14.16.1 installed. On the desktop side, we used an HP PC denoted UE2 machine with a Core i5, 3.3 GHz CPU, 16 GB of RAM, and a 64-bit operating system running Ubuntu 20.04.6. The PC ran Google Chrome version 129.0.6668.100 with the $\LINE$ Chrome extension version 3.5.1. To capture and decrypt the network traffic, we used Wireshark version 3.2.3. The experiment began with installing and registering the $\LINE$ application on the iPhone. After setting up the account, the user verified the application's functionality by sending and receiving messages. Following this, the $\LINE$ Chrome extension was installed on the Ubuntu machine. $\LINE$ allows users to log in by scanning a QR code, which was scanned using the iPhone application, enabling seamless communication between the mobile app and the Chrome extension. During this step, Google Chrome was used to access the $\LINE$ extension. At the same time, Wireshark ran in the background to capture the network traffic between the $\LINE$ client and the server, and the SSL/TLS decryption setup described in the article \cite{wiresharkExp} was followed to analyze encrypted data packets. The data capture process lasted approximately 15 minutes, during which typical one-to-one messaging interactions were conducted. This included communications with other users and interactions with $\LINE$'s automated bots. %All captured and sorted data can be found in our repository \cite{fullversion}. 
The detailed dataset collected during this experiment is available in an anonymous repository$\footnote {\href{https://osf.io/npdtw/?view_only=84621b38349a4f858f5f897c195e80c4}{Anonymous Repository}}$. Upon analysis of the captured data, we identified two key findings. First, the structure of $\LINE$ messages was evident where we demonstrate the format and metadata of the communication. Even though the content of user-to-user messages remained encrypted, the structure itself could still be analyzed. Second, conversations with $\LINE$'s automated bots appeared to lack encryption, as we could view the entire exchange of messages, including both sent and received content. This raises significant concerns about the security of bot interactions, mainly as unencrypted bot communications can expose sensitive user information. Independently and concurrently to our work Chou \etal~ discovered similar behaviours across a range of secure messaging services including $\LINE$, which allows malicious actors can exploit such vulnerabilities in bot communication to compromise user identities or harvest personal data \cite{chatbotExploitation}. While the encryption applied to standard user-to-user communication was effective, these unencrypted bot conversations highlight potential security risks within the $\LINE$ platform. We note that $\LINE$'s 2022 encryption transparency report \cite{LineWP2022} discusses on the scope of E2EE, through their "Letter Sealing" feature. They provide a table that discusses encryption deployment status for each message (read: content) type, and state that test messages apply "Letter Sealing" encryption. They also provide a list of "Letter Sealing Exceptions".

\begin{figure}[t]
\centering
\noindent
\begin{minipage}{0.5\textwidth}
\vspace{0.8cm}
  \begin{verbatim}
    struct {
        long from;
        long to;
        char toType;
        long id;
        long createdTime;
        long deliveredTime;
        bool hascontent;
        char contentType;
        struct {
            char e2eeVersion;
            long seq;
            } contentMetadata;
        long sessionId;
        long chunks;
        }
\end{verbatim}

\label{fig:pkt_strct}
\end{minipage}%
\begin{minipage}{0.5\textwidth}
\begin{verbatim}
    struct {
        long from;
        long to;
        char toType;
        long id;
        long createdTime;
        long deliveredTime;
        bool hascontent;
        char contentType;
        struct {
            long BOT_TAG2;
            char BOT_ORIGIN;
            bool BOT_CHECK;
            char BOT_TRACK;
            } contentMetadata;
        long sessionId;
        long text;
        }
\end{verbatim}
\end{minipage}
\caption{Left: $\LINEtwo$ Ciphertext Structure and Right: $\LINEtwo$ Bot Ciphertext Structure}
\label{fig:ct&botpkt_strct}
\end{figure}

The ciphertext packet structure of $\LINEtwo$ is shown at the left of Figure \ref{fig:ct&botpkt_strct}. Curiously, the ciphertext structure in practice diverges from that described in the Technical Whitepaper (and described in Figure \ref{fig:line-vers-2}), particularly including additional metadata: $\timefield$, $\delfield$, $\idfield$, $\seqfield$. %\ben{figure needs to be updated to include these fields}. 
The observed structure encapsulates various fields providing detailed metadata about the message. It includes a sender ID ($\fromfield$) and a recipient ID ($\tofield$), representing the unique identifiers of the message's sender and recipient. Additionally, an (empty) recipient type ($\totypefield$). Each message is assigned a unique message ID ($\idfield$), a timestamp ($\timefield$) marks when the message was created, and a timestamp ($\delfield$) marks when the message was delivered. A boolean flag ($\hasfield$) indicates whether the message contains actual content, while the content type ($\contentfield$) specifies the nature of the message, such as text, image, or video. Embedded within this structure is metadata related to encryption: the end-to-end encryption (E2EE) version is indicated by $\eversfield$, and sequence ($\seqfield$) that matches with message ID ($\idfield$). Additionally, the structure includes an (empty) session identifier ($\sidfield$). The chunks field ($\chunksfield$) contains the ciphertext, representing the encrypted parts of a message.  Specifically, $\LINEtwo$ parses $\chunksfield$ as the $\salt$, $\AESGCM$ ciphertext, $\ctr\|\rand$, $\kid_{A}$ and $\kid_{B}$ respectively.

The right of Figure \ref{fig:ct&botpkt_strct} details the structure of messages to-and-from bots. It is primarily the same structure as the Letter Sealing messages, but with distinct fields in its \\ $\contentMDfield$ section, such as $\bottagfield$ (a constant base64 string used across different bots), $\botoriginfield$ (indicating the source of the bot message, specifically \texttt{line:chat}), $\botcheckfield$ (a boolean field that indicates that the message is from a bot), and $\bottrackfield$ (which tracks the bot's API use, specifically \texttt{api=pu}).  Additionally, the bot message replaces the $\chunksfield$ (containing the $\LINEtwo$ ciphertext) with a $\textfield$ field, which contains plainttext content rather than encrypted messages. It is important to note that, according to the $\LINE$ Encryption Report \cite{LineWP2022}, end-to-end encryption applies to user messages, but this structure shows that bot messages are not encrypted in $\LINEtwo$.

%\begin{notebox}
\textbf{Remarks:} However, the lack of an end-to-end encryption application for bot communications is not mentioned in either discussion. We note that bots exist for both banking and health services on $\LINE$, for which users might reasonably expect private communications. By not encrypting bot conversations, $\LINE$ exposes sensitive user data to unnecessary risks since they fall outside of commonly expected E2EE guarantees to access unrelated private messages, thus failing to achieve data minimization.
%\end{notebox}

\section{Cryptographic Preliminaries} \label{appen:CP}

\begin{wrapfigure}{r}{0.5\textwidth}\label{fig:prf-sec}
    \procedureblock{$\Exp{\prf}{\PRF}(\adversary)$}{
    \key \getsr \KeySpace; b \getsr \bits \\
    y \gets \adversary^{\mathcal{O}(x)\to \PRF(k,x)}() \\
    \pcif y \in (x_0, \ldots, x_i)\\
    \t \pcreturn \bot\\
    z_0 \gets \PRF(k,y); z_1 \getsr \bits^{\lambda}\\
    b' \gets \adversary^{\mathcal{O}(x)\to \PRF(k,x)}(z_b) \\
    \pcif y \in (x_{i+1}, \ldots, x_j)\\
    \t \pcreturn \bot\\
    \pcreturn (b = b')} 
\caption{The $\PRF$ security experiment.}
\end{wrapfigure}

This section introduces the cryptographic primitives that underlie the assurance of confidentiality, integrity, and authenticity in secure messaging protocols. These cryptographic primitives are essential building blocks for the development of robust security frameworks.

\subsection{Hash Functions and Collision-Resistance}

A hash function is a deterministic algorithm $\Hash: \bits^* \to \bits^\lambda$ which, given a bit string $m$ of arbitrary length, outputs a hash value $w = \Hash(m)$ in the space $\bits^\lambda$.

\begin{definition}[Collision-Resistant Hash Function]\label{def:coll-sec} We define the advantage of a probabilistic polynomial-time (PPT) algorithm $\adversary$ in breaking the collision-resistance security of the hash function $\Hash$ as 
$\Adv{\coll}{\Hash}(\adversary) = \Pr[\Hash(m) = \Hash(m') \land m \ne m':(m, m') \gets \adversary].$ We say that a hash function $\Hash$ is collision-resistant if for all non-trivial PPT $\adversary$, $\Adv{\coll}{\Hash}(\adversary)$ is negligible.
\end{definition}

\subsection{Pseudo-Random Functions}

Our formalization of a pseudo-random function $\PRF$ and the security of $\PRF$ follows \cite{PRF-def} 

A $\PRF$ with key space $\KeySpace$ and input space $\bits^*$ is a deterministic algorithm. On input a key $k \in \KeySpace$ and an input string $x \in \bits^*$, the algorithm outputs a value $\PRF(k,x) \in \bits^{\lambda}$.

\subsubsection{Security of $\PRF$}

The security of a $\PRF$ is given as a game played between a challenger $\challenger$ and an adversary $\adversary$.

\begin{definition}[Security of Pseudorandom Functions]\label{def:prf-sec}
We define the advantage of a probabilistic polynomial-time (PPT) algorithm $\adversary$ in breaking the security of the pseudo-random function $\PRF$ as 
$\Adv{\prf}{\PRF}(\adversary) = \Pr(\Exp{\prf}{\PRF}(\adversary)=1)$. We say that $\PRF$ is secure if for all PPT $\adversary$ $\Adv{\prf}{\PRF} (\adversary)$ is negligible in the security parameter $\lambda$.
\end{definition}

\subsection{AEAD: Authenticated Encryption with Associated Data} \label{sec:AEAD}

The $\LINE$ protocol uses AES-GCM  \cite{LineWP} %\ben{todo: add reference}, 
which is an $\AEAD$ scheme. On a high level, an \emph{authenticated encryption with associated data} scheme is a cryptographic primitive that achieves authenticity and confidentiality of the plaintext but also achieves authenticity of a public data field known as associated data.

An \emph{authenticated encryption with associated data} ($\AEAD$) scheme defined in \cite{AEADdef} is a tuple of algorithms $\AEAD = \{\KGen,\ENC, $ $\DEC\}$. $\KGen$ is a probabilistic function that takes as input a security parameter $1^\lambda$ and outputs a randomly-sampled key $k$ from the keyspace $\KeySpace = \bits^{\lambda}$. $\ENC$ is a deterministic function that takes as input a key $k$, a nonce $N$, an associated data field $AD$ and a message $\msg$ , and outputs a ciphertext $\ciphertxt$. Finally, $\DEC$ is a deterministic function that takes as input a key $k$, a nonce $N$, an associated data field $AD$ and a ciphertext $\ciphertxt$, and outputs a message $\msg$  or a failure symbol $\bot$.

\subsubsection{Security of Authenticated Encryption with Associated Data}

We now turn to defining the security of $\AEAD$ schemes. First, we define $\auth$-security, which broadly captures the security of an $\AEAD$ scheme against ciphertext forgery attacks. This notion is formalised in Definition \ref{def:aead-auth-sec}. Then, we define $\conf$-security, which captures the inability of an adversary to distinguish encryptions of different plaintexts under chosen-ciphertext queries, as formalised in Definition \ref{def:aead-conf-sec}.

\begin{definition}
[$\auth$-security of AEAD schemes]\label{def:aead-auth-sec}
We define the advantage of a probabilistic polynomial-time (PPT) algorithm $\adversary$ in breaking the $\auth-security$ of the authenticated encryption with associated data scheme $\AEAD$ (as given in Figure \ref{fig:aead-sec}) as 
$\Adv{\auth}{\AEAD}(\adversary) = \Pr(\Exp{\auth}{\AEAD}(\adversary)=1)$. We say that $\AEAD$ is $\auth$-secure if for all PPT $\adversary$ $\Adv{\auth}{\AEAD} (\adversary)$ is negligible in the security parameter $\lambda$.
\end{definition}

Now we define $\conf$-security taken from \cite{conf-AEAD} as: the advantage of a probabilistic polynomial-time (PPT) algorithm $\adversary$ in breaking the $\conf$-security of the authenticated encryption with associated data scheme $\AEAD$ as:

\begin{definition}[$\conf$-security of AEAD schemes]\label{def:aead-conf-sec}
Let $\AEAD$ = ($\KeyGen$, $\ENC$, $\DEC$) be an $\AEAD$ scheme. Let $\init_0$ be the
security experiment (see Figure \ref{fig:aead-sec}) with $b$ set to 0 and $\init_1$ be the security experiment with $b$ set to 1. We say that the adversary wins the $\AEAD$ game if $b' = b$ and define the advantage function:
$$\Adv{\aead}{\AEAD,\adversary} = |\Pr ([1 \gets \adversary|b = 1]) - \Pr ([1 \gets \adversary|b = 10| )$$

Intuitively $\AEAD$ is secure, if for all PPT algorithms $\adversary$ it holds $\Adv{\aead}{\AEAD,\adversary}$ is negligible.
\end{definition}

\begin{figure*}[t]
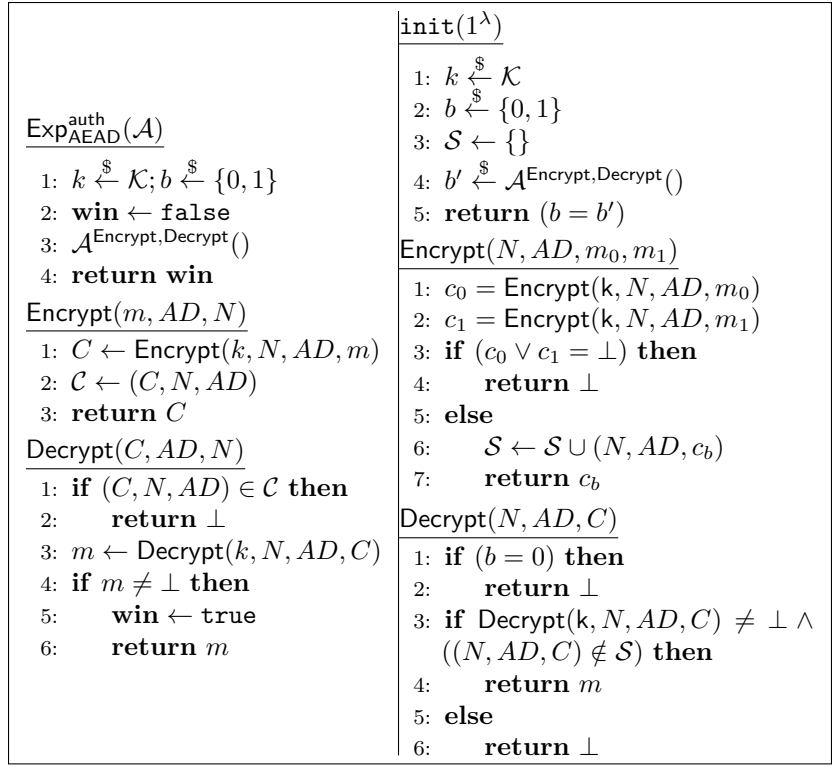

\centering
\fbox{
\begin{minipage}{0.35\textwidth}
    %\hspace{0.25cm}
    \underline{$\Exp{\auth}{\AEAD}(\adversary)$}
    \begin{algorithmic}[1]
    \State $k \getsr \KeySpace; b \getsr \bits$ 
    \State $\win \gets \falseflag$
    \State $\adversary^{\ENC,\DEC}()$ 
    \State \Return $\win$ 
    \end{algorithmic}

%%%%%%%%%%%%%%%%%%%%%%%%%%%%%%%%%%%%%%%%%%%%%%%%%%%%%%%%%5
    
    \underline{$\ENC(m,AD,N)$}
    \begin{algorithmic}[1]
    \State $C \gets \ENC(k,N,AD,m)$ 
    \State $\mathcal{C} \gets (C,N,AD)$
    \State \Return $C$
    \end{algorithmic}

%%%%%%%%%%%%%%%%%%%%%%%%%%%%%%%%%%%%%%%%%%%%%%%%%%%%%%%
   \underline{$\DEC(C,AD,N)$}
    \begin{algorithmic}[1]
    \If{$(C,N,AD) \in \mathcal{C}$}
    \State \Return $\bot$
    \EndIf
    \State $\msg \gets \DEC(k,N,AD,C)$
    \If {$\msg \neq \bot$}
    \State $\win \gets \trueflag$
    \State \Return $\msg$ 
    \EndIf
    \end{algorithmic}
    \end{minipage}
    \vline
\begin{minipage}{0.4\textwidth}
    %\hspace{0.25cm}
    \underline{$\init(1^\lambda)$}   
    \begin{algorithmic}[1]
    \State $k \getsr \KeySpace$ 
    \State $b \getsr \bits $
    \State $\SaltSpace \gets \{\} $
    \State $b' \getsr \adversary^{\ENC,\DEC}()$
    \State \Return $(b=b') $
    \end{algorithmic}

%%%%%%%%%%%%%%%%%%%%%%%%%%%%%%%%%%%%%%%%%%%%%%%%%%%%%%%%%%%%%

    \underline{$\ENC(N,AD,\msg_0,\msg_1)$}
    \begin{algorithmic}[1]
    \State $\ciphertxt_0 = \ENC(\key, N, AD, \msg_0)$ 
    \State$\ciphertxt_1 = \ENC(\key, N, AD, \msg_1) $
    \If {$(\ciphertxt_0 \lor \ciphertxt_1 = \bot) $}
    \State \Return $\bot $
    \Else 
    \State $\SaltSpace \gets \SaltSpace \cup {(N,AD,\ciphertxt_b)}$ 
    \State \Return $\ciphertxt_b $
    \EndIf
    \end{algorithmic}

%%%%%%%%%%%%%%%%%%%%%%%%%%%%%%%%%%%%%%%%%%%%%%%%%%%%%%%%%%%%%%%   
    \underline{$\DEC(N,AD,C)$}
    \begin{algorithmic}[1]
    \If {$(b=0)$} 
    \State \Return $\bot$ 
    \EndIf
    \If {$\DEC(\key,N,AD,C) \neq \bot \land ((N,AD,C) \notin \SaltSpace) $}
    \State \Return $\msg$
    \Else
    \State \Return $\bot$
    \EndIf
    \end{algorithmic}
    \end{minipage}
}
\caption{The left half illustrates the $\AEAD-\auth$ security experiment. while the right half illustrates $\AEAD-\conf$ security experiment}
\label{fig:aead-sec}
\end{figure*}

\subsection{Key Derivation Functions}

A Key Derivation Function ($\KDF$) is a cryptographic primitive used in key agreement and encryption protocols to derive one or more cryptographic keys from a private input string. The derived keys ensure confidentiality and integrity of stored or transmitted data. A secure $\KDF$ produces outputs that are computationally indistinguishable from random binary strings of equal length, even when public inputs are known to the adversary, ensuring no efficient distinguisher can succeed beyond random guessing.

\begin{figure}[htbp]
    \centering
    \includegraphics[width=0.44\textwidth]{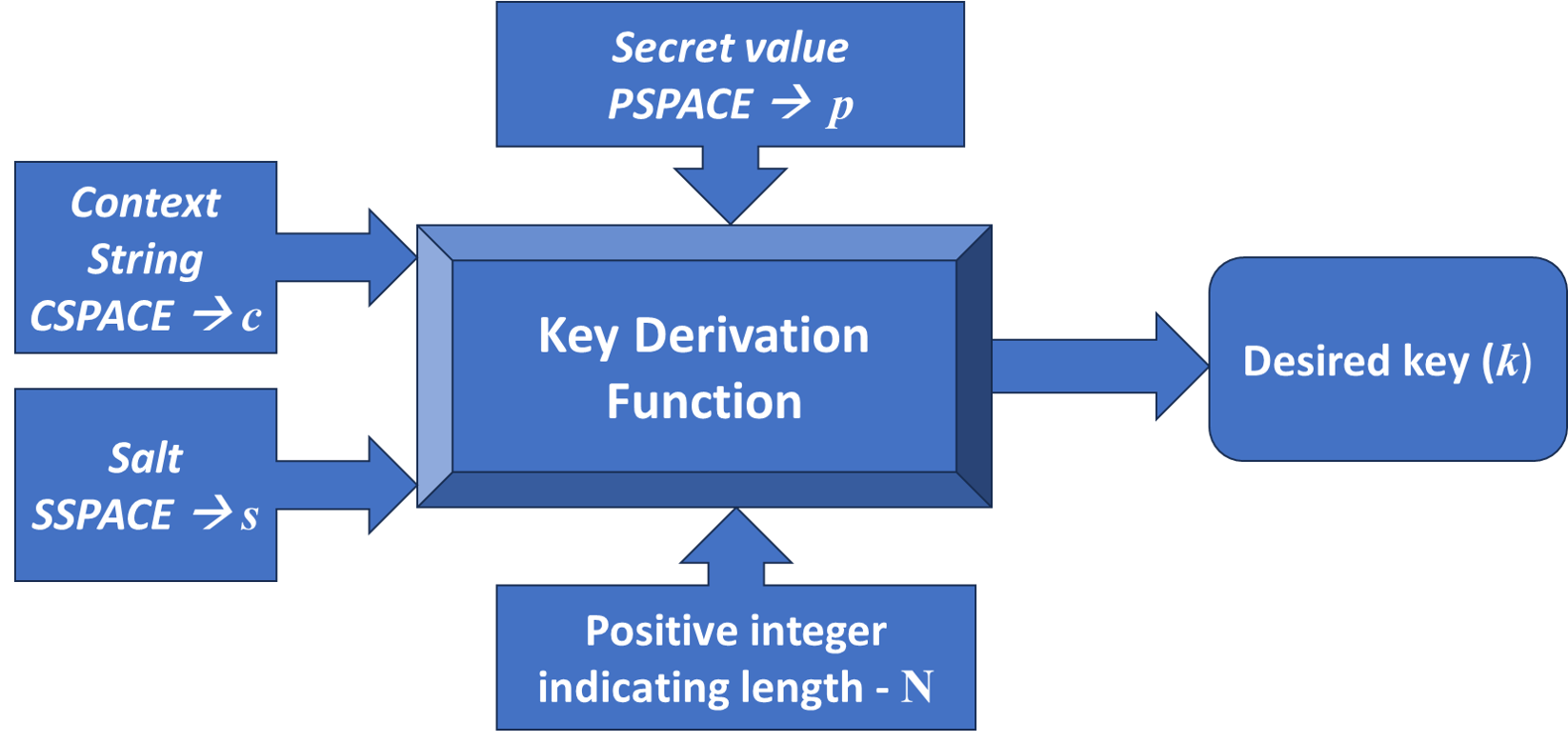} 
    \caption{Key Derivation Function} 
\label{fig:kdf} 
\end{figure}

A $\KDF$ typically takes as input a private string, a salt to introduce randomness and prevent precomputed attacks, and a context string carrying additional protocol information or party identities. It transforms potentially non-uniform secret entropy into uniform keys of suitable length and distribution for symmetric-key protocols, resisting attacks that aim to recover the underlying secret. Selecting an appropriate $\KDF$ depends on the desired security properties; a well-known example is HKDF (HMAC-based Key Derivation Function) \cite{HKDF}, which employs a hash function such as SHA-256 within the HMAC construction \cite{KDF}.

\begin{definition}[$\KDF$] \label{def:kdf} 
Key Derivation Function is a deterministic function denoted as $\KDF(\rho, s, c, N)$ that takes inputs as:
\begin{itemize}
\item[-] $\rho$: The secret value chosen from a set of possible private strings ($PSPACE$).
$$ \rho \leftarrow PSPACE $$
\item[-] $s$: The salt value chosen from a set of possible public random strings ($SSPACE$).
$$ s \leftarrow SSPACE $$
\item[-] $c$: The public context string chosen from a set of possible public context strings ($CSPACE$).
$$ c \leftarrow CSPACE $$
\item[-] $N$: A positive integer indicating the desired number of bits to be produced by the $\KDF$ function.
\end{itemize}
The $\KDF$ function transforms these inputs into an $N$-bit cryptographic key $k$ according to the defined transformation rules:
$ \KDF(\rho,s,c,N) \rightarrow \key $ 
and generates an output as:
\begin{itemize}
\item[-] $k$ is the derived cryptographic key.
\end{itemize}
\end{definition}

 %It is utilized in a wide range of cryptographic applications and protocols e.g. secure communication protocols like Transport Layer Security (TLS), password-based key derivation, key establishment schemes, disk encryption systems and many others. Use of $\KDF$ algorithm helps these protocols to generate secure and uniformly distributed cryptographic keys from potentially non-uniform secret inputs. This not only enables the establishment of confidential and secure communication channels but also helps in providing protection against sensitive data, and overall strengthening of the cryptographic systems.

\begin{wrapfigure}{l}{0.5\textwidth}
    \centering
    \scalebox{0.85}{\centering
\begin{tikzpicture}[yscale=-0.5,>=latex]
\edef\InitX{0}
\edef\ArrowLeft{0}
\edef\ArrowCenter{3.5}
\edef\ArrowRight{7}
\edef\AdvX{3.5}
\edef\RespX{7}
% Set the starting Y coordinate
\edef\Y{0}

% Draw header boxes
\node [rectangle,draw,inner sep=5pt,right] at (\InitX,\Y) {\textbf{Adversary $\adversary$} };

\node [rectangle,draw,inner sep=5pt,left] at (\RespX,\Y) {\textbf{Challenger $\challenger$}};

\NextLine[2]
\ServerAction{$\rho \gets PSPACE$}
\NextLine[1]
\ServerAction{$SSPACE \tor s$}

\NextLine[1]
\ServerToClient{$s$}{}

\NextLine[1]
\ClientAction{$c_i \gets CSPACE $}

\NextLine[1]
\ClientToServer{$c_i$}{}

\NextLine[1]
\ServerAction{$\key_i \gets \KDF(\rho, s, c_i, N)$}

\NextLine[1]
\ServerToClient{$\key_i$}{}

\NextLine[1.5]
\ClientAction{$c \gets CSPACE$}

\NextLine[1]
\ServerAction{$\{0,1\} \tor b$}
\NextLine[1]
\ServerAction{if $(b=0)$}
\NextLine[1]
\ServerAction{$\key^\prime \gets \KDF(\rho, s, c_i, N)$}
\NextLine[1]
\ServerAction{else}
\NextLine[1]
\ServerAction{$\{0,1\}^N \tor \key^\prime$}

\NextLine[1]
\ServerToClient{$\key^\prime$}{}

\end{tikzpicture}
}
    \caption{KDF Security Experiment}
\label{fig:kdf_security}
\end{wrapfigure}
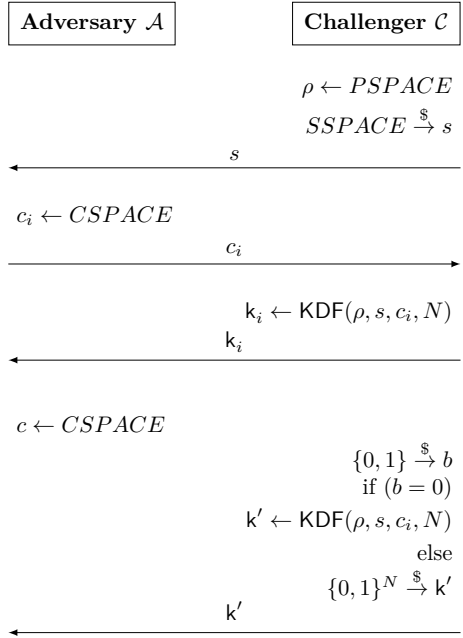

\subsubsection {Security of $\KDF$}
\par A secure $\KDF$ generates a derived key that is computationally indistinguishable from a randomly generated key. For any efficient adversary $\adversary$, the advantage of distinguishing between the output of the $\KDF$ and a truly random key is negligible: \\
$ Adv_{\adversary}^{\KDF} = |Pr [\adversary(K) = 1] - Pr [ \adversary (random(N)) = 1 ]| \le negl (\lambda) $

where $K$ represents the derived key output by the $\KDF$, $\text{random}(N)$ denotes a randomly generated key of length $N$, $\Pr$ denotes probability, and $\text{negl}(\lambda)$ is a negligible function depending on the security parameter $\lambda$.

A $\KDF$ is considered secure if it satisfies the following conditions in the context of a security game:
\begin{itemize}
    \item[1.] Challenger: A trusted party responsible for generating and providing the derived keys.
    \item[2.] Adversary: An efficient entity trying to distinguish the output of the $\KDF$ from a randomly generated key.
    \item[3.] Oracle Queries: The adversary can make queries to the $\KDF$ oracle, providing inputs and receiving derived key outputs.
\end{itemize}

The adversary $\adversary$ tries to distinguish between the output of the $\KDF$ and a random key by making queries to the $\KDF$ oracle. The security of the $\KDF$ is quantified by the advantage of the adversary, which measures the distinguishability between the $\KDF$ output and a random key. The game played between the adversary $\adversary$ and the challenger $\challenger$ is show in the figure \ref{fig:kdf_security}, in which the adversary $\adversary$  can interact with the challenger $\challenger$ to demand cryptographic keys corresponding to $\adversary$'s choice of public input $c$ with $p$ and $s$ chosen by $\challenger$. In this game, $p$ is secret and known only to $\challenger$, while $s$ is known by an $\adversary$. At the challenge stage, $\adversary$ is provided a challenge output $\key^\prime$. Lastly, $\adversary$ has to distinguish whether the challenge output is the derived cryptographic key from the $\KDF$ or just a random string.

\subsection{Decisional Diffie-Hellman}

\cite{DDH-def} defines Decisional Diffie-Hellman as follows: 
%Let $Gen$ be a ppt algorithm that on input $1^\pi$ returns a description $\mathcal{G} = (\mathbb{G}, q, \mathcal{P})$ of cyclic group $\mathbb{G}$ and a generator $\mathcal{P}$ of $\mathbb{G}$. Similarly, let $PGen$ be a ppt algorithm that returns a description $\mathcal{PG} = (\mathbb{G},\mathbb{G}_T,q,e,\mathcal{P})$ of a pairing group. We informally recall a previously considered decisional Diffie-Hellman assumption. Diffie-Hellman ($\DDH$) Assumption. It is hard to distinguish ($\mathcal{G}, [x], [y], [xy]$) from ($\mathcal{G}, [x], [y], [xy]$), for $\mathcal{G} = (\mathbb{G},q,\mathcal{P}) \leftarrow \mathsf{Gen}$, $x,y,z \gets \mathbb{Z}_q$

\subsubsection{Security of $\DDH$}

\par \begin{definition}{Decisional Diffie-Hellman Problem}\label{def:ddh-sec}
Let $\Group$ be a group of prime order $q$ and $g$ be a generator of $\Group$, and let $a,b,c \getsr \ZZ_q$. The advantage of an algorithm $\adversary$ in solving the $\DDH$ problem for $(g, q)$ is $\Adv{\ddh}{g,p}(\adversary)$ is defined as
$\Adv{\ddh}{g,p}(\adversary) = |\Pr\big(\adversary(g,g^a,g^b,g^{ab})=1\big)-\Pr\big(\adversary(g,g^a,g^b,g^{c})=1\big)|$. We say that the $\DDH$ problem is hard if for all PPT $\adversary$ $\Adv{\ddh}{g,q}(\adversary)$ is negligible.
\end{definition}

\section{Data Availability}

All data used in this study, are available on the Open Science Framework through this ${\href{https://osf.io/npdtw/?view_only=84621b38349a4f858f5f897c195e80c4}{anonymous ~repository}}$. Additional data for further validation can also be generated by following the steps mentioned in section $\ref{sec:Exp}$.

\end{document}